\newcommand{\nn}{\notag}
\numberwithin{equation}{section}
\newtheorem{theorem}{Theorem}[section]
\newtheorem{proposition}[theorem]{Proposition}
\newtheorem{lemma}[theorem]{Lemma}
\newtheorem{corollary}[theorem]{Corollary}
\newtheorem{definition}[theorem]{Definition}
\newtheorem{remark}[theorem]{Remark}
\newtheorem*{theorem*}{Theorem}
\newtheorem*{corollary*}{Corollary}
\DeclareMathOperator{\supp}{supp}
\DeclareMathOperator{\dist}{dist}
\DeclareMathOperator{\diam}{diam}
\DeclareMathOperator{\Ima}{Im}
\DeclareRobustCommand\widecheck[1]{{\mathpalette\@widecheck{#1}}}
\def\@widecheck#1#2{%
    \setbox\z@\hbox{\m@th$#1#2$}%
    \setbox\tw@\hbox{\m@th$#1%
       \widehat{%
          \vrule\@width\z@\@height\ht\z@
          \vrule\@height\z@\@width\wd\z@}$}%
    \dp\tw@-\ht\z@
    \@tempdima\ht\z@ \advance\@tempdima2\ht\tw@ \divide\@tempdima\thr@@
    \setbox\tw@\hbox{%
       \raise\@tempdima\hbox{\scalebox{1}[-1]{\lower\@tempdima\box
\tw@}}}%
    {\ooalign{\box\tw@ \cr \box\z@}}}
\renewcommand\H{\mathcal{H}}
\renewcommand\L{\mathrm{L}}
\newcommand\R{\mathbb R}
\newcommand\N{\mathbb N}
\newcommand\C{\mathbb C}
\newcommand\Z{\mathbb Z}
\newcommand\G{\mathbb{G}}
\newcommand\e{\mathrm{e}}
\newcommand{\la}{\langle}
\newcommand{\ra}{\rangle}
\renewcommand\P{\mathbb P}
\newcommand\E{\mathbb E}
\newcommand\cE{\mathcal{E}}
\newcommand\cL{\mathcal{L}}
\newcommand\cB{\mathcal{B}}
\newcommand\cG{\mathcal{G}}
\newcommand\cV{\mathcal{V}}
\newcommand\cA{\mathcal{A}}
\newcommand\cF{\mathcal{F}}
\newcommand\cS{\mathcal{S}}
\newcommand\cY{\mathcal{Y}}
\newcommand\vphi{\varphi}
\newcommand\vtheta{\vartheta}
\newcommand\vrho{\varrho}
\newcommand{\vs}{\varsigma}
\newcommand{\pr}{\prime}
\newcommand\wtilde{\widetilde}
\newcommand\ttau{\widetilde{\tau}}
\newcommand\tzeta{\widetilde{\zeta}}
\newcommand{\partialin}{\partial_{\mathrm{in}}}
\newcommand{\partialex}{\partial_{\mathrm{ex}}}
\newcommand{\bom}{{\boldsymbol{{\omega}}}}
\newcommand\beq{\begin{equation}}
\newcommand\eeq{\end{equation}}
\newcommand{\abs}[1]{\left\lvert #1 \right\rvert}
\newcommand{\norm}[1]{\left\lVert #1 \right\rVert}
\newcommand{\scal}[1]{\left\langle #1 \right\rangle}
\newcommand{\set}[1]{\left\{ #1 \right\}}
\newcommand{\pa}[1]{\left( #1 \right)}
\newcommand{\fl}[1]{\left\lfloor #1 \right\rfloor}
\newcommand{\br}[1]{\left [ #1 \right]}
\newcommand\La{\Lambda}
\newcommand\Th{\Theta}
\newcommand\Ups{\Upsilon}
\newcommand{\eq}[1]{\eqref{#1}}
\newcommand{\up}[1]{^{(#1)}}
\newcommand{\qtx}[1]{\quad\text{#1}\quad}
\newcommand{\mqtx}[1]{\; \ \text{#1}\; \  }
\newcommand{\sqtx}[1]{\;\text{#1}\;}
\newcommand\Chi{\raisebox{.2ex}{$\chi$}}
\begin{document}

\title[Eigensystem multiscale analysis  in energy intervals]{Eigensystem multiscale analysis for Anderson localization in energy intervals}

\author{Alexander Elgart}
\address[A. Elgart]{Department of Mathematics; Virginia Tech; Blacksburg, VA, 24061, USA}
 \email{aelgart@vt.edu}

\author{Abel Klein}
\address[A. Klein]{University of California, Irvine;
Department of Mathematics;
Irvine, CA 92697-3875,  USA}
 \email{aklein@uci.edu}

\thanks{A.E. was  supported in part by the NSF under grant DMS-1210982.}
\thanks{A.K. was  supported in part by the NSF under grant DMS-1001509.}

\date{Version of \today}

\begin{abstract}
We present an eigensystem multiscale analysis  for proving  localization (pure point spectrum with exponentially decaying eigenfunctions, dynamical localization)    for the Anderson model in an energy interval.  In particular, it yields localization for the Anderson model in a nonempty interval at the bottom of the spectrum. This eigensystem multiscale analysis in an energy interval treats all energies of the finite volume operator at the same time, establishing level spacing and  localization of  eigenfunctions with eigenvalues in the energy interval in a fixed box with high probability.  In contrast to the usual strategy, we do not study  finite volume  Green's functions.   Instead, we perform a multiscale analysis based on finite volume eigensystems (eigenvalues and eigenfunctions).   In any given scale we only have decay for eigenfunctions with eigenvalues in the energy interval, and no information about the other eigenfunctions.  For this reason,  going to a larger scale requires new arguments that were not necessary in our previous  eigensystem multiscale analysis for the Anderson model at high disorder, where in a given scale  we have decay for all eigenfunctions.
\end{abstract}

\maketitle

\tableofcontents

\section*{Introduction}

We present an eigensystem multiscale analysis  for proving  localization (pure point spectrum with exponentially decaying eigenfunctions, dynamical localization)    for the Anderson model in an energy interval.  In particular, it yields localization for the Anderson model in a nonempty interval at the bottom of the spectrum. 

 The well known methods developed for proving localization for random Schr\"odinger operators, the multiscale analysis \cite{FS,FMSS,Dr,DK,Sp,CH,FK,GKboot,Kle,BK,GKber} and the fractional moment method \cite{AM,A,ASFH,AENSS,AW},  are based on the study of   finite volume  Green's functions.  Multiscale analyses based on  Green's functions  are performed  either at a fixed energy in a single box, or for all energies but with two boxes with an `either or' statement for each energy. 
 
In  \cite{EK} we provided an  implementation of  a multiscale analysis for the Anderson model at high disorder based on  finite volume eigensystems (eigenvalues and eigenfunctions). In contrast to the usual strategy, we did  not study  finite volume  Green's functions. 
 Information about eigensystems at a given scale was used to derive information about eigensystems at larger scales. This eigensystem multiscale analysis
treats all energies of the finite volume operator at the same time, giving a complete picture in a fixed box.  For this reason it does not use a Wegner estimate as in a Green's functions multiscale analysis, it uses instead   a probability estimate for level spacing  derived by Klein and Molchanov  from Minami's estimate \cite[Lemma~2]{KlM}.  This eigensystem multiscale analysis for the Anderson model at high disorder has been enhanced in \cite{KlT}  by  a bootstrap argument as in \cite{GKboot,Kle}.   

 The motivation for developing an alternative approach to localization is related to a new focus among the mathematical physics community in disordered systems with an infinite number of particles, for which Green's function methods break down.  The  direct study of the structure of eigenfunctions for such systems has been advocated by Imbrie  \cite{Im1,Im2} in a context of both  single and many-body localization.

 The Green's function methods  allow  for proving localization in 
energy intervals, and hence
 localization has also been proved  at fixed disorder  in an interval at the edge of the spectrum (or, more generally, in the vicinity of a spectral gap), and  for  a fixed interval of energies at the bottom of the spectrum  for sufficiently  high disorder. (See, for example, \cite{HM,KSS,FK1,ASFH,GKfinvol,Ki,GKber,AW}.)  These methods do not differentiate between energy intervals and the whole spectrum; they can be used whenever the initial step can be established. 

The results in \cite{EK} yield localization for the Anderson model in the whole spectrum, which in practice requires high disorder. This eigensystem multiscale analysis treats all energies of the finite volume operator at the same time, at  a given scale  we have decay for all eigenfunctions, and the induction step uses information about all eigenvalues and eigenfunctions.  The method does not have a straightforward extension for proving localization in an energy interval, since at any give scale we would only have information (decay) about eigenfunctions corresponding to eigenvalues in the given interval.   For this reason, when performing an eigenfunction multiscale analysis in an energy interval,   going to a larger scale requires new arguments that were not necessary in our previous  eigensystem multiscale analysis for the Anderson model at high disorder, where in a given scale  we have decay for all eigenfunctions.

In this paper  we develop  a version of the eigensystem multiscale analysis tailored to the   establishment of   localization   for the Anderson model in an energy interval. This version yields  localization at fixed disorder  on an interval at the edge of the spectrum (or in  the vicinity of a spectral gap), and  at  a fixed interval at the bottom of the spectrum  for sufficiently  high disorder.  

The Anderson model is    a random   Schr\"odinger  operator $H_\bom$ on  
 $\ell^2(\Z^d)$ (see Definition~\ref{defineAnd}).  Multiscale analyses prove statements about finite volume operators $H_{\bom,\La}$, the restrictions  of $H_\bom$ to  finite  boxes $\La$.  The eigensystem multiscale analysis developed in this article establishes eigensystem localization in a bounded energy interval with good probability at  large scales, as we will now explain. 
 
An eigensystem $\set{(\vphi_j,\lambda_j)}_{j\in J}$ for $H_{\bom,\La}$ consists of eigenpairs 
$(\vphi_j,\lambda_j)$, where  $\lambda_j$ is an eigenvalue for  $H_{\bom,\La}$  and $\vphi_j$ is a corresponding normalized eigenfunction,  such that $\set{\vphi_j}_{j\in J}$ is an orthonormal basis for the finite dimensional Hilbert space $\ell^2(\La)$.   If   all eigenvalues of  $H_{\bom,\La}$  are simple, we can rewrite
 the eigensystem as  $\set{(\vphi_\lambda,\lambda)}_{\lambda\in \sigma(H_{\bom,\La})}$.

We define  eigensystem localization in a bounded energy interval $I$ in the following way.  We fix appropriate exponents
 $\beta,\tau \in (0,1)$ (see \eq{ttauzeta0}), 
 take $m>0$, and  say that a box $\La$ of side $L$ is $(m,I)$-localizing for $H_\bom$ (see Definition~\ref{defmIloc}) if $\La$ is level spacing
 (i.e., the eigenvalues of  $H_{\bom,\La}$ are simple and separated by at least $ \e^{-{L}^\beta}$), and  eigenfunctions corresponding to eigenvalues in the interval $I$ decay exponentially as follows: 
  if $\lambda \in  \sigma(H_{\bom,\La})\cap I$, then there exists $x_\lambda \in \La$ such that  the corresponding eigenfunction $\vphi_\lambda$ satisfies
 \[  \abs{\vphi_\lambda(y) } \le \e^{-m h_I(\lambda) \norm{y-x_\lambda}}\qtx{for all} y \in \La  \qtx{with} \norm{y-x_\lambda}\ge L^\tau,
 \]
where $h_I$  (defined in \eqref{eq:h_I}) is a concave function on $I$, taking the value  one at the center of the interval and the value zero at the endpoints. The modulation of the decay of the eigenfunctions by the function $h_I$ is a new feature of our  method.   

Our multiscale analysis shows that eigenfunction localization  in an energy interval with good probability at some large enough scale implies eigenfunction localization with good (scale dependent and improving as the scale grows) probability for all sufficiently large scales, in a slightly smaller energy interval.   The key step  shows
that  localization at a large scale $\ell$ yields localization at a much larger scale $L$.  The proof proceeds by covering a box $\La_L$ of side $L$ by boxes of side $\ell$,  which are mostly $(m,I)$-localizing, and showing this implies that $\La_L$ is $(m^\pr,I^\pr)$-localizing.   There are always some losses, $m^\pr < m$ and $I^\pr \subsetneq I$, but this losses are controllable, and continuing this procedure we converge  to some rate of decay $m_\infty>0$ and interval  $I_\infty\neq\emptyset$. 

 The eigensystem multiscale analysis in an energy interval $I$ requires a  new ingredient, absent in the  treatment of the system at  high disorder  given in \cite{EK}, where $I=\R$ and $h_I=1$.   In broad terms, the reason is that our energy interval  multiscale scheme only carries information about eigenfunctions with eigenvalues in the interval $I$, and  contains  no information whatsoever concerning  eigenfunctions with eigenvalues that lie outside the interval $I$.   Given  boxes $\La_\ell \subset \La_L$, with $\ell \ll L$,   a  crucial step in our analysis
 shows that if $(\psi,\lambda)
 $ is an eigenpair for $H_{\bom,\La_L}$,  with  $\lambda \in I$  not too close to the  eigenvalues of $H_{\bom,\La_\ell}$ corresponding to eigenfunctions localized deep inside $\La_\ell $, and  the box $\La_\ell $ is $(m,I)$-localizing for $H_\bom$, then $\psi$ is exponentially small  deep inside $\La_\ell $ (see Lemma~\ref{lemdecay2}(ii)).  This is    proven by expanding the values of $\psi$ in $\La_\ell $ in terms of the $(m,I)$-localizing eigensystem  $\set{(\vphi_\nu,\nu)}_{\nu\in \sigma(H_{\bom,\La_\ell})}$ 
 for $H_{\bom,\La_\ell}$.  The difficulty is that we only have decay for the eigenfunctions $\vphi_\nu$ with $\nu\in I$; we know nothing about $\vphi_\nu$ if  $\nu\notin I$.   We overcame this difficulty by showing that the decay of the term containing the latter eigenfunctions  comes from the distance from the eigenvalue $\lambda$ to  the complement of the interval $I$,  using Lemmas~\ref{lemkey} and \ref{lemkey2}.  As a result, it is natural to expect that the decay rate for the  localization of eigenfunctions goes to zero  as the eigenvalues approach the edges of the interval $I$. The introduction of the modulating function $h_I$ in the decay 
 models this phenomenon. 
 
The same difficulty appears if, given an  $(m,I)$-localizing box  $\La $ for $H_\bom$, we try to recover the decay of  
 the Green's function at an energy  $\lambda \in I$ not too close to the  eigenvalues of $H_{\bom,\La}$.  The simplest approach is to decompose the Green's function in terms of an $(m,I)$-localizing eigensystem $\set{(\vphi_\nu,\nu)}_{\nu\in \sigma(H_{\bom,\La})}$ for  $H_{\bom,\La}$:
  \begin{align}\nn
 \scal{ \delta_{x}, (H_{\bom,\La}  -\lambda)^{-1}\delta_{y}}= \sum_{\nu \in \sigma(H_{\bom,\La})} (\nu-\lambda)^{-1} \overline{\vphi_\nu(x)}\vphi_\nu(y).
 \end{align}
The sum over the eigenvalues inside the interval $ I$ can be estimated using the decay of the corresponding eigenfunctions, but we have a problem estimating the sum over eigenvalues outside $I$ since we have no information  concerning the spatial decay properties of the corresponding eigenfunctions.
 To overcome this difficulty, we use a more delicate argument  (see Lemma~\ref{lemtoreg})  that decomposes the Green's function  into a sum of two analytic functions of $H_{\bom,\La}$ with appropriate decay properties  (see Lemmas~\ref{lemkey} and \ref{lemkey2} for details), obtaining  the desired decay of the Green function:
    \beq
 \nn
\abs{ \scal{ \delta_{x}, (H_{\bom,\La}  -\lambda)^{-1}\delta_{y}}} \le  \e^{-m^\pr h_I(\lambda) \norm{x-y}}.
 \eeq

 Readers familiar with the  Green's function multiscale analysis may notice that the modulation by the function $h_I$ is not required there. This has to do with the fact the Green's function approach essentially considers each  energy value separately, while the eigensystem approach treats the whole energy interval simultaneously. 
 A Green's function multiscale analysis is performed at a fixed energy; the modulation of the decay may appear in the starting condition, but not in the multiscale analysis proper.  (The
 starting condition near an spectral edge is usually obtained from the Combes-Thomas estimate, which modulates the decay rate by the distance to the spectral edge.)

A version of our main result,  Theorem~\ref{thmMSA}, can be  stated as follows.  (The exponents $\zeta, \xi \in (0,1)$ and $\gamma >1$ are as in \eq{ttauzeta0}.  $\La_{L} (x) $ denotes the box in $\Z^d$ of side $L$ centered at $x\in \R^d$ as in \eq{defbox}.)

\begin{theorem*}[Eigensystem multiscale analysis]
Let $H_\bom$ be an Anderson model.  Let  $I_0=(E-{A_0},E+{A_0})\subset \R$,   with $E\in \R$ and $A_0>0$, and
$0<  m_0   \le  \tfrac 1 2 \log \pa{1 + \tfrac {A_0}{4d}}$.  Suppose for some scale 
$L_0 $ we have
  \begin{align}\nn
\inf_{x\in \R^d} \P\set{\La_{L_0} (x) \sqtx{is}  (m_0,I_0) \text{-localizing for} \; H_{\bom}} \ge 1 -  \e^{-L_0^\zeta}.
\end{align}
Then, if $L_0$ is sufficiently large,   there exist $m_\infty=m_\infty(L_0)>0$ and  $A_\infty=A_\infty(L_0)\in(0,A_0)$, with $\lim_{L_0\to \infty} A_\infty (L_0) = A_0 \qtx{and} \lim_{L_0\to \infty} m_\infty (L_0) = m_0$,   such that, setting $I_\infty=(E-A_\infty,E+A_\infty)$,   we have
   \begin{align} \nn
\inf_{x\in \R^d} \P\set{\La_{L} (x) \sqtx{is} ( m_\infty , I_\infty)  \text{-localizing for} \; H_{\bom}} \ge 1 -  \e^{-L^\xi}  ,
\end{align}
for all   $ L\ge L_0^\gamma $.  
\end{theorem*}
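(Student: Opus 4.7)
The plan is to perform a multiscale induction on a super-polynomially growing sequence of scales $L_{k+1} = L_k^\gamma$ starting from $L_0$, while shrinking the interval and the decay rate along carefully chosen sequences $I_k = (E-A_k, E+A_k) \searrow I_\infty$ and $m_k \searrow m_\infty$. The induction hypothesis at scale $L_k$ asserts
\[
\inf_{x \in \R^d} \P\set{\La_{L_k}(x) \sqtx{is} (m_k, I_k)\text{-localizing for} \; H_{\bom}} \ge 1 - \e^{-L_k^\xi},
\]
with the base case supplied by the assumption (using $\xi < \zeta$ from \eq{ttauzeta0} to absorb the gap between $L_0^\zeta$ and $L_0^\xi$). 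I would choose per-step losses $\delta_k := A_k - A_{k+1}$ and $\eta_k := 1 - m_{k+1}/m_k$ as small negative powers of $L_k$, so that $\sum_k \delta_k$ and $\sum_k \eta_k$ are summable and both $A_\infty > 0$ and $m_\infty > 0$ hold provided $L_0$ is sufficiently large; the summability also forces $A_\infty,m_\infty \to A_0, m_0$ as $L_0 \to \infty$, matching the conclusion of the theorem.

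For the induction step, fix $x \in \R^d$ and cover $\La_{L_{k+1}}(x)$ by a collection of overlapping sub-boxes $\set{\La_{L_k}(y_j)}_j$ whose centers lie on a grid of spacing on the order of $L_k^\tau$. Two good events are to be imposed. First, by independence of the potential across disjoint sub-boxes and the induction hypothesis, only $O(1)$ sub-boxes fail to be $(m_k, I_k)$-localizing; a standard union bound, together with the relation $\gamma \xi > \xi$ and the polynomial count of sub-boxes, gives failure probability at most $\tfrac12 \e^{-L_{k+1}^\xi}$. Second, $H_{\bom,\La_{L_{k+1}}(x)}$ is level spacing at scale $\e^{-L_{k+1}^\beta}$ with exceptional probability $\tfrac12 \e^{-L_{k+1}^\xi}$, via the Klein--Molchanov consequence of Minami's estimate \cite{KlM}.

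On the intersection of these events, fix any eigenpair $(\psi,\lambda)$ of $H_{\bom,\La_{L_{k+1}}(x)}$ with $\lambda \in I_{k+1}$. I would first pick $x_\lambda$ at a near-maximum of $|\psi|$, and then, for any site $y \in \La_{L_{k+1}}(x)$ with $\norm{y - x_\lambda} \ge L_{k+1}^\tau$, construct a chain of $(m_k,I_k)$-localizing sub-boxes $\La_{L_k}(y_j)$ connecting $x_\lambda$ to $y$ while avoiding the $O(1)$ bad sub-boxes. Inside each such sub-box, Lemma~\ref{lemdecay2}(ii) applies: since $\lambda \in I_{k+1}$ lies at distance at least $\delta_k$ from $\partial I_k$, and, by the level spacing in $\La_{L_{k+1}}(x)$ combined with a standard exclusion of dangerous eigenvalues, $\lambda$ is also separated from the interior eigenvalues of $H_{\bom,\La_{L_k}(y_j)}$, the lemma forces $\psi$ to be exponentially small deep inside $\La_{L_k}(y_j)$ at a rate only marginally smaller than $m_k h_{I_{k+1}}(\lambda)$. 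Iterating this bound along the chain (a telescoping argument carried out at the eigenfunction level, in lieu of a Green's function resolvent expansion) produces
\[
|\psi(y)| \le \e^{-m_{k+1} h_{I_{k+1}}(\lambda) \norm{y - x_\lambda}},
\]
which, combined with the level spacing at scale $L_{k+1}$, is exactly $(m_{k+1},I_{k+1})$-localization of $\La_{L_{k+1}}(x)$.

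The main obstacle is balancing the interval shrinkage. On the one hand, $\delta_k$ must be large enough that the distance from $\lambda \in I_{k+1}$ to $\R \setminus I_k$ makes Lemmas~\ref{lemkey} and \ref{lemkey2} yield a per-sub-box rate close to $m_k$; the rate loss scales like an inverse power of $\delta_k$, so $\delta_k$ cannot be chosen freely small. On the other hand, $\sum_k \delta_k < A_0$ is required to keep $I_\infty$ nonempty, and the analogous tension governs $\eta_k$ versus $m_\infty > 0$. Reconciling these competing demands --- summable total losses against nondegenerate single-step margins, all while the error probability must close up under $\gamma\xi > \xi$ --- is the delicate core of the induction, and is precisely the place where the energy-interval eigensystem multiscale analysis departs from the all-spectrum version of \cite{EK}; it also explains why the theorem requires $L_0$ large and why $A_\infty(L_0), m_\infty(L_0)$ only approach, but do not reach, $A_0$ and $m_0$.
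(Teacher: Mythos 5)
Your proposal captures the right high-level architecture (geometric scale sequence, shrinking interval $I_k$ and rate $m_k$ with summable losses, level-spacing plus the eigensystem lemmas at the sub-scale), but there are three concrete gaps that would prevent the argument from closing.

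First, the claim that ``only $O(1)$ sub-boxes fail to be $(m_k,I_k)$-localizing'' is not sustainable, and the probabilistic justification you sketch (``$\gamma\xi>\xi$ and the polynomial count of sub-boxes'') is not the relevant inequality. If your induction hypothesis carries failure probability $\e^{-L_k^\xi}$ and you want to conclude failure probability $\e^{-L_{k+1}^\xi}=\e^{-L_k^{\gamma\xi}}$ at the next scale, then a union bound over $K+1$ pairwise disjoint bad sub-boxes gives roughly $L_{k+1}^{(K+1)d}\e^{-(K+1)L_k^\xi}$, and matching the target exponent forces $K+1\gtrsim L_k^{(\gamma-1)\xi}\to\infty$. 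The paper takes the number of allowed (disjoint) bad boxes to be $N=\fl{\ell^{(\gamma-1)\tzeta}}$, a \emph{growing} power of $\ell$, and also runs the core induction (Proposition~\ref{propMSA}) with the stronger exponent $\zeta>\xi$; this is not cosmetic --- it is precisely what is needed for the union bound to close and for the intermediate-scale argument to work.

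Second, with a polynomially growing number of bad sub-boxes you cannot simply ``construct a chain of localizing sub-boxes avoiding the bad ones'': the bad boxes may cluster and block every chain from $x_\lambda$ to $y$. The paper's resolution is the $(m,I)$-buffered subset construction (Definition~\ref{defbuff}, Lemmas~\ref{lembad} and the connected-component argument in the proof of Lemma~\ref{lemInduction}): each cluster of bad boxes is enclosed in a buffer of localizing boxes, and separate eigensystem estimates are proved on the buffered region $\Ups_r$, using the injection $\sigma_\cB(H_{\Ups_r})\hookrightarrow\sigma(H_{\La_L})$. Your chain argument has no analogue of this, and the geometric control on $\sum_r\diam\Ups_r$ (see \eq{sumdiam}, which relies on the relation $(\gamma-1)\tzeta+1<\gamma\tau$) is a genuine constraint that your sketch never confronts.

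Third, the theorem's conclusion is for \emph{all} $L\ge L_0^\gamma$, not just the geometric sequence $L_k=L_0^{\gamma^k}$. Your induction only addresses scales $L_k$. The paper handles the intermediate scales in Proposition~\ref{propMSAnok}: for $L\in[L_k,L_{k+1})$ one covers $\La_L$ by $L_{k-1}$-boxes and this time requires \emph{all} of them to be localizing, using the stronger $\e^{-L_{k-1}^\zeta}$ estimate and the constraint $\xi\gamma^2<\zeta$ from \eq{ttauzeta0} to get $\e^{-L^\xi}$. This two-tier structure ($\zeta$ on the geometric sequence, $\xi$ on intermediate scales, zero bad boxes allowed on the latter) is essential and entirely missing from your proposal. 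On the positive side, your identification of the central tension --- summable interval/rate losses versus nondegenerate single-step margins governed by $h_{I}$ near $\partial I$ --- and your use of Lemma~\ref{lemdecay2}(ii) as the per-box decay mechanism are both on target.
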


The theorem yields all the usual forms of Anderson localization on the interval $I_\infty$.   In particular we obtain the following version of  Corollary~\ref{corloc}.

\begin{corollary*}[Localization in an energy interval]
Suppose the   theorem  holds for  an Anderson model $H_{\bom}$.
Then the following holds with probability one:
  \begin{enumerate}
 \item  {$H_{\bom}$}  has  pure point spectrum in the interval $I_\infty$.

\item  If   {$\psi_\lambda$}  is a normalized {eigenfunction} of $H_{\bom}$
with eigenvalue  {$\lambda\in I_\infty$}, then $\psi_\lambda$ is exponentially localized  with rate of decay $ \frac 1{20} m_\infty  h_{I_\infty}(\lambda)$,   more precisely,
\[
\abs{\psi_\lambda(x)} \le C_{\bom,\lambda}\, e^{-  \frac 1{20}m_\infty  h_{I_\infty}(\lambda)\norm{x}} \qquad \text{for all}\quad  x \in \Z^{d}.
\]
\end{enumerate}
\end{corollary*}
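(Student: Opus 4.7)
The plan is the standard passage from the multiscale conclusion to infinite-volume spectral localization, adapted to the eigensystem framework of this paper: a Borel--Cantelli argument yields, almost surely, $(m_\infty,I_\infty)$-localizing boxes on all sufficiently large scales around every site; Lemma~\ref{lemdecay2}(ii) together with the ingredients introduced in Lemmas~\ref{lemkey}--\ref{lemkey2} then converts finite-volume eigenfunction control into exponential decay of any polynomially bounded generalized eigenfunction with eigenvalue in $I_\infty$; the Berezanskii--Schnol generalized eigenfunction expansion upgrades this to pure point spectrum on $I_\infty$.

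First I fix a geometric sequence of scales $L_k\uparrow\infty$ with $L_k\ge L_0^\gamma$, where $\gamma>1$ is as in \eqref{ttauzeta0}, and for each $k$ cover the annular region $\La_{L_{k+1}}\setminus\La_{L_{k-1}}$ around any fixed reference site by a number of boxes of side $L_k$ that is polynomial in $L_k$. By the theorem each such box is $(m_\infty,I_\infty)$-localizing with probability at least $1-\e^{-L_k^\xi}$, so a union bound plus summability yield, via Borel--Cantelli, a full-measure event $\Omega_0$ on which, for every $\omega\in\Omega_0$ and all $k\ge k_0(\omega)$, every box in the $k$th cover is localizing. Intersecting with the standard full-measure event $\Omega_1$ on which the spectral measure $\mu_{\bom}$ of $H_{\bom}$ is carried by energies admitting a polynomially bounded generalized eigenfunction, I henceforth work on $\Omega_0\cap\Omega_1$.

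The central step is to prove that every such generalized eigenfunction $\psi_\lambda$ with $\lambda\in I_\infty$ obeys $\abs{\psi_\lambda(x)}\le C_{\bom,\lambda}\,\e^{-\frac{1}{20}m_\infty h_{I_\infty}(\lambda)\norm{x}}$. For $x\in\Z^d$ with $\norm{x}$ large, I select from the cover a $(m_\infty,I_\infty)$-localizing box $\La_{L_k}(y)$ that contains $x$ at interior distance of order $L_k$; the restriction of the eigenvalue equation to this box reads $(H_{\bom,\La_{L_k}(y)}-\lambda)\psi_\lambda|_{\La_{L_k}(y)}=-\Ga\psi_\lambda$, where $\Ga$ is supported on the boundary. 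Expanding $\psi_\lambda|_{\La_{L_k}(y)}$ in the localizing eigenbasis $\{\vphi_\nu\}$, the contribution from $\nu\in I_\infty$ is suppressed by the $(m_\infty,I_\infty)$-decay of $\vphi_\nu$ at $x$, while the contribution from $\nu\notin I_\infty$ is controlled via the decomposition of Lemmas~\ref{lemkey}--\ref{lemkey2} into analytic functions of $H_{\bom,\La_{L_k}(y)}$ whose decay is governed by $\dist(\lambda,\R\setminus I_\infty)$. This is precisely the content of Lemma~\ref{lemdecay2}(ii), and absorbing the polynomial bound on $\psi_\lambda$, the boundary surface area $\lesssim L_k^{d-1}$, and scale-transition losses into the prefactor $1/20$ gives the desired exponential bound.

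The main obstacle is the separation of $\lambda$ from those eigenvalues of $H_{\bom,\La_{L_k}(y)}$ whose eigenfunctions are localized near $x$, a hypothesis required to apply Lemma~\ref{lemdecay2}(ii). I would dispose of it in the spirit of Klein--Molchanov \cite{KlM}: the level-spacing clause of $(m_\infty,I_\infty)$-localization, combined with the single-site regularity of the Anderson model, forces the set of $\lambda\in I_\infty$ for which this separation fails along infinitely many $k$ to have vanishing spectral measure, almost surely. Consequently $\mu_{\bom}$-almost every $\lambda\in I_\infty$ admits an $\ell^2$ eigenfunction satisfying the stated decay, yielding both the pure point property in (i) and the uniform exponential bound in (ii).
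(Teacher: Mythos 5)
Your high-level skeleton (Borel--Cantelli over scales, Berezanskii--Schnol generalized eigenfunction expansion, finite-volume eigenfunction decay via Lemma~\ref{lemdecay2}(ii)) is the right genre of argument and matches the paper's cited route through \cite[Theorem~7.1]{GKber}. However, the way you handle the separation hypothesis needed for Lemma~\ref{lemdecay2}(ii) is wrong, and this is the crux of the whole deduction.

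You propose to ``dispose of'' the $\lambda\in I_\infty$ for which the separation $\abs{\lambda-\nu}\ge\tfrac12\e^{-L_k^\beta}$ fails along infinitely many $k$ by declaring they carry vanishing spectral measure. But exactly the opposite is true: if $\lambda\in I_\infty$ is a genuine eigenvalue of $H_\bom$ with an $\ell^2$ eigenfunction, then by the finite-volume approximation in Lemma~\ref{lem:ident_eigensyst}(i)(a) (equivalently \cite[Eq.~(3.10)]{EK}) $\lambda$ is exponentially close to some $\nu\in\sigma(H_{\bom,\La_{L_k}})$ for \emph{every} large $k$, so the separation necessarily fails along the boxes containing the localization center. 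These $\lambda$ are not a null set to be discarded --- they \emph{are} the pure point spectrum you are trying to establish, and a Klein--Molchanov style argument cannot remove them.

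The paper avoids this trap by not applying Lemma~\ref{lemdecay2}(ii) at a single box but rather through the dichotomy encoded in Theorem~\ref{thmloc}. There, Lemma~\ref{lemWimp} shows that if a generalized eigenfunction is \emph{not} small in $\La_{L/3}(a)$ then $\lambda$ must be within $\tfrac12\e^{-L^\beta}$ of some eigenvalue of $H_{\bom,\La_L(a)}$ with localization center deep inside $\La_L(a)$; and then level spacing on the larger box $\La_{5L}(a)$ (via Lemma~\ref{lem:ident_eigensyst}(i)(a) applied to the eigenvalues of the overlapping translated boxes $\La_L(b)$ with $L\le\norm{b-a}\le2L$) forces $\lambda$ to be \emph{far} from all eigenvalues of those translated boxes, so that Lemma~\ref{lemdecay2}(ii) applies on the annulus $A_L(a)$ and yields \eq{WW}. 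The ``either--or'' structure does the bookkeeping that your spectral-measure-zero argument attempts to shortcut. With Theorem~\ref{thmloc} in hand, the passage to Corollary~\ref{corloc}(i)--(ii) is then the standard argument of \cite[Theorem~7.1]{GKber}, using $h_{I^L}\ge h_I$. Your write-up needs to be reorganized around that dichotomy rather than around excluding a bad set of energies.
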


In particular, our results prove localization at the bottom of the spectrum. Let  $H_{\bom}$ be an Anderson model, and  let $E_0$ be  the bottom of the almost sure spectrum of $H_{\bom}$.
 We  consider intervals at the bottom of the spectrum, more precisely,  intervals of the form $J=[E_0 , E_0 + A)$ with $A>0$.  We set $\tilde J= (E_0 -A, E_0 + A)$,
so $ J \cap \Sigma = \tilde J \cap \Sigma$,  call a box $(m,J)$-localizing if it is $(m,\tilde J)$-localizing, etc. The following is a version of Theorem~\ref{themlocbottom}.

\begin{theorem*}[Localization at the bottom of the spectrum] Let  $H_{\bom}$ be an Anderson model,  and fix $ 0<\xi <\zeta <\frac  d {d+2}  $.   Then  there is $\gamma >1$, such that, if $L_0$ is sufficiently large,   there exist $m_{\zeta,\infty} =m_{\zeta,\infty} (L_0)>0$ and  $A_{\zeta,\infty} =A_{\zeta,\infty} (L_0)\in(0,A_0)$, with
\[ \lim_{L_0\to \infty} A_{\zeta,\infty}  (L_0) L_0^{\frac {2\zeta}d}  = C_{d,\mu}  \qtx{and} \lim_{L_0\to \infty} m_\infty (L_0) L_0^{\frac {2\zeta}d} = \frac 1 {9d} C_{d,\mu}, \]  such that, setting $J_{\zeta,\infty}=[E_0,E_0+A_{\zeta,\infty})$,  for all $ L\ge L_0^\gamma$ we have
  \begin{align} \nn
\inf_{x\in \R^d} \P\set{\La_{L} (x) \sqtx{is} ( m_{\zeta,\infty} , J_{\zeta,\infty})   \text{-localizing for} \; H_{\bom}} \ge 1 -  \e^{-L^\xi},
\end{align}
  In particular, the conclusions of  the Corollary  hold in the interval $J_{\zeta,\infty}$.
\end{theorem*}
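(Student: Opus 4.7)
\emph{Plan.} The plan is to verify the hypothesis of the eigensystem multiscale analysis Theorem at a carefully chosen initial scale $L_0$ with an interval placed at the bottom of the spectrum, and then to invoke that Theorem directly. Fix $\zeta\in(0,\tfrac{d}{d+2})$. For large $L_0$, I take
\begin{equation*}
A_0=A_0(L_0) \sim C_{d,\mu}\,L_0^{-2\zeta/d} \qtx{and} m_0=m_0(L_0):=\tfrac12\log\bpa{1+\tfrac{A_0}{4d}}\sim \tfrac{A_0}{8d},
\end{equation*}
and set $I_0=(E_0-A_0,E_0+A_0)$. The scaling of $A_0$ is forced by Lifshitz tails: the asymptotic $\log N(E_0+A)\sim -C_{d,\mu}\,A^{-d/2}$ makes $L_0^{-2\zeta/d}$ precisely the threshold at which $|\La_{L_0}|\,N(E_0+A_0)\approx \e^{-L_0^\zeta}$.

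\emph{Initial-scale estimates.} Two probabilistic inputs are needed at scale $L_0$. The Lifshitz tails estimate yields, for $L_0$ sufficiently large,
\begin{equation*}
\P\bset{\sigma(H_{\bom,\La_{L_0}(x)})\cap I_0\neq\emptyset}\le L_0^d\,N(E_0+A_0)\le \tfrac12\,\e^{-L_0^\zeta};
\end{equation*}
the constraint $\zeta<\tfrac{d}{d+2}$ provides enough room to absorb the polynomial volume factor and sub-leading corrections to the Lifshitz asymptotic. In parallel, Minami's estimate, summed over a covering of the spectrum of $H_{\bom,\La_{L_0}(x)}$ by intervals of length $\e^{-L_0^\beta}$, yields level spacing at scale $\e^{-L_0^\beta}$ with probability at least $1-\tfrac12\,\e^{-L_0^\zeta}$ as soon as $\beta>\zeta$. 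On the intersection of these two events, the box $\La_{L_0}(x)$ is $(m_0,I_0)$-localizing: level spacing holds, while the exponential decay condition for eigenfunctions with eigenvalue in $I_0$ is satisfied \emph{vacuously}, since no such eigenvalues are present.

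\emph{Conclusion and main obstacle.} The hypothesis of the eigensystem multiscale analysis Theorem is thus verified uniformly in $x\in\R^d$ with the chosen $(m_0,I_0)$, and $m_0$ sits on the upper edge of the allowed range $\bpa{0,\tfrac12\log(1+A_0/(4d))}$ by construction. Applying the Theorem produces $(m_{\zeta,\infty},A_{\zeta,\infty})$ at every scale $L\ge L_0^\gamma$ with the required exponential probability bound, after which the stated localization in $J_{\zeta,\infty}=[E_0,E_0+A_{\zeta,\infty})$ is an immediate consequence of the earlier Corollary. The asymptotic identities $A_{\zeta,\infty}(L_0)\,L_0^{2\zeta/d}\to C_{d,\mu}$ and $m_{\zeta,\infty}(L_0)\,L_0^{2\zeta/d}\to \tfrac{1}{9d}C_{d,\mu}$ track the ratios $A_{\zeta,\infty}/A_0$ and $m_{\zeta,\infty}/m_0$, which tend respectively to $1$ and to a fixed factor absorbing the bounded multiplicative defect accumulated by the iteration. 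The main technical point is precisely this quantitative tracking: since $A_0(L_0)\to 0$, one has to verify that the multiplicative losses across the iterated scales $L_0\mapsto L_0^\gamma\mapsto L_0^{\gamma^2}\mapsto\cdots$ form a geometric series whose sum is $o(A_0)$, so that they do not wipe out $A_0$ itself. The restriction $\zeta<\tfrac{d}{d+2}$ exists precisely to leave enough Lifshitz room to absorb these defects while keeping the initial probability bound valid. Lifshitz tails and Minami's estimate are standard inputs; the quantitative control of the multiscale iteration in the regime $A_0\to 0$ is the only genuinely non-routine piece.
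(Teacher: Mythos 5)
Your overall strategy is the same as the paper's: use Lifshitz tails to make $I_0=(E_0-A_0,E_0+A_0)$ with $A_0\sim C_{d,\mu}L_0^{-2\zeta/d}$ empty of spectrum of $H_{\bom,\La_{L_0}}$ with probability $\ge 1-\tfrac12\e^{-L_0^\zeta}$, combine with Klein--Molchanov level spacing to make $\La_{L_0}$ $(m_0,I_0)$-localizing vacuously, then feed this into Theorem~\ref{thmMSA}. That part of your proposal is correct and essentially matches Proposition~\ref{PropLT} plus the proof of Theorem~\ref{themlocbottom}; your explicit Minami step is even more careful than the paper's terse ``(\ref{LT2}) follows trivially from (\ref{LT})''.

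There is, however, a genuine misattribution in your explanation of the threshold $\zeta<\tfrac{d}{d+2}$. You assert twice that it exists to give ``enough Lifshitz room'' to absorb polynomial volume factors and the iteration's multiplicative defects. That is not where the constraint comes from: the Lifshitz tail estimate \eqref{LT} holds for every $\zeta\in(0,1)$, and the multiplicative losses in the iteration are already controlled by the explicit formulas \eqref{Aminfty} in Theorem~\ref{thmMSA}, whose infinite products converge to $1$ as $L_0\to\infty$ for any admissible $\kappa,\vrho>0$ (the quantitative tracking you flag as the ``only non-routine piece'' is thus entirely built into the statement of Theorem~\ref{thmMSA}, not something to re-derive). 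The actual source of the threshold is the exponent compatibility conditions \eqref{ttauzeta0}--\eqref{gamtzetabeta2} required by Theorem~\ref{thmMSA}: since $A_0\sim C_{d,\mu}L_0^{-2\zeta/d}$ forces $m_0\sim L_0^{-2\zeta/d}$ via \eqref{upbm25552}, one must take $\kappa^\pr=\tfrac{2\zeta}{d}$, and \eqref{gamtzetabeta2} then demands $\kappa+\tfrac{2\zeta}{d}<\tau-\gamma\beta$. With $\tau\uparrow1$, $\gamma\downarrow1$, and $\beta\downarrow\zeta$ within the constraints of \eqref{ttauzeta0}, this is achievable if and only if $\tfrac{2\zeta}{d}<1-\zeta$, i.e.\ $\zeta<\tfrac{d}{d+2}$. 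Without spelling out this parameter bookkeeping you would not be able to complete the argument, since nothing in Lifshitz tails alone produces that specific exponent. A cosmetic remark: your choice $m_0=\tfrac12\log(1+A_0/(4d))$ saturates the upper bound in \eqref{upbm25552}; the paper takes $m_0=\tfrac1{9d}C_{d,\mu}L_0^{-2\zeta/d}$ with the slack $\tfrac19<\tfrac18$ precisely so both inequalities of \eqref{upbm25552} hold for large $L_0$ with a clean $m_-=\tfrac1{9d}C_{d,\mu}$, and so that the stated limit $m_{\zeta,\infty}L_0^{2\zeta/d}\to\tfrac1{9d}C_{d,\mu}$ comes out exactly; your choice gives the limit $\tfrac1{8d}C_{d,\mu}$ instead unless you adjust the constant.
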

We also establish localization in a fixed interval at the bottom of the the spectrum, for sufficiently large disorder (Theorem~\ref{thmfixedint}).

Our main results and definitions are stated in Section~\ref{secmain}.  Theorem~\ref{thmMSA} is our main result, which  we prove  in   Section~\ref{secEMSA}.  Theorem~\ref{thmloc}, derived from  Theorem~\ref{thmMSA}, encapsulates localization in an energy interval  for the Anderson model and yields Corollary~\ref{corloc}, which  contains typical statements  of Anderson localization and dynamical localization in an energy interval.  Theorem~\ref{thmloc} and Corollary~\ref{corloc} are proven in Section~\ref{seclocproof}. In Section~\ref{secbottom} we show how to fulfill the starting condition for Theorem~\ref{thmMSA} and establish localization  in an interval at the bottom of the spectrum, for fixed disorder (Theorem~\ref{themlocbottom}) and in a fixed interval for sufficiently large disorder (Theorem~\ref{thmfixedint}). Section~\ref{secprep} contains notations, definitions and lemmas required for the proof of the eigensystem multiscale analysis given in  Section~\ref{secEMSA}. 
The connection with  the Green's functions multiscale analysis is established in Section~\ref{secGreen}.

 \section{Main results}  \label{secmain}

In this article we will use many positive exponents, which will be required to satisfy certain relations.
 We consider  $\xi,\zeta, \beta,\tau \in (0,1)$ and $\gamma >1$ such that
\begin{gather}\label{ttauzeta0}
0<\xi< \zeta<\beta<\frac 1 \gamma <1<\gamma < \sqrt {\tfrac \zeta \xi} \mqtx{and}   \max\set{ \gamma \beta, \tfrac {(\gamma-1)\beta +1}{\gamma}  }      <  \tau <1,
\end{gather}
and note that  
\beq\label{ttauzeta}
0<\xi<\xi\gamma^2< \zeta < \beta <\frac \tau \gamma   <\frac 1 \gamma <\tau <1< \frac {1-\beta}{\tau-\beta} < \gamma <\frac \tau \beta.
\eeq
We also take
\beq\label{ttauzeta2}
\tzeta= \frac {\zeta +\beta}2  \in (\zeta, \beta)  \qtx{and} {\ttau}= \frac {1 +\tau}2  \in (\tau,1),
\eeq
 so
\beq\label{gamtzetabeta}
(\gamma-1)\tzeta  +1 <(\gamma-1)\beta  +1< \gamma \tau.
\eeq
We also consider  $\kappa\in (0,1) $ and   $\kappa^\pr \in [0,1)$ such that
\beq\label{gamtzetabeta2}
 \kappa+\kappa^\pr < \tau - \gamma \beta.  
\eeq
We set 
\beq \label{defvrho}
\vrho= \min\set{\kappa, \tfrac{1- \tau}2, \gamma \tau- (\gamma-1)\tzeta  -1}, \qtx{note} 0<\kappa \le \vrho <1,
\eeq 
and choose
\beq\label{vsdef}
\vs \in (0,1-\vrho] , \qtx{so} \vrho < 1-\vs .
\eeq

We consider these  exponents fixed  and do    not make explicit the dependence  of constants on them.   We write  $\Chi_A$ for  the characteristic function of the set $A$.
By a  constant  we  always mean a finite constant.  We will use  $C_{a,b, \ldots}$, $C^{\pr}_{a,b, \ldots}$,  $C(a,b, \ldots)$, etc., to  denote a constant depending  on the parameters
$a,b, \ldots$. Note that $C_{a,b, \ldots}$ may denote different constants in different equations, and even in the same equation.

Given a scale $L\ge 1$,  we sets
\[
  L=\ell^\gamma \; (\text{i.e.,}\;\ell= L^{\frac 1 \gamma}), \quad  L_{\tau}=\fl{L^{\tau}}, \qtx{and} L_{{\ttau}}= \lfloor{L^{{\ttau}}}\rfloor.
\]

If $x=(x_1,x_2,\ldots, x_d)\in \R^d$, we set $\abs{x}=\abs{x}_2= \pa{\sum_{j=1}^dx_j^2}^{\frac 12}$,   and $\norm{x}=\abs{x}_\infty= \max_{j=1,2,\ldots,d} \abs{x_j}$. If $x\in \R^d$ and   
$\Xi\subset \R^d$, we set $\dist (x,\Xi)= \inf_{y\in \Xi} \norm{y-x}$.  The diameter of a set $\Xi\subset \R^d$ is given by $\diam \Xi= \sup_{x,y \in \Xi} \norm{y-x}$.

$H$ we will always denote a discrete  Schr\"odinger operator, that is,   an operator 
 $H=-\Delta +V$ on $\ell^2(\Z^d)$, where where
 $\Delta$ is the  (centered) discrete  Laplacian:  
 \begin{equation}\label{defDelta}
  (\Delta \varphi)(x):=  \sum_{\substack{y\in\Z^d\\ |y-x|=1}} \varphi(y)  \qtx{for} \varphi\in\ell^2(\Z^d),
\end{equation} 
 and  $V$ is a  bounded potential. Given   $\Phi\subset \Theta\subset \Z^d$, we consider $\ell^2(\Phi)\subset \ell^2(\Theta)$ by extending functions on $\Phi$ to functions on $\Theta$ that are identically $0$ on $\Theta\setminus \Phi$.   If $\Th \subset \Z^d$ and $\vphi \in  \ell^2(\Theta)$,  we let $\norm{\vphi}=\norm{\vphi}_2$ and   $\norm{\vphi}_\infty= \max_{y \in \Theta} \abs{\vphi(y)}$.

Given  $\Theta\subset \Z^d$, we let $H_\Theta$ be the restriction of $ \Chi_\Theta H\Chi_\Theta$ to $\ell^2(\Theta)$.
 We call $(\vphi,\lambda)$ an eigenpair for $H_\Th$ if $\vphi\in \ell^2(\Theta)$ with $\norm{\vphi}=1$,  
 $\lambda \in \R$, and $H_\Theta \vphi=\lambda \vphi$.  (In other words,    $\lambda$ is an eigenvalue for $H_\Th$ and $\vphi$ is a corresponding normalized eigenfunction.)  A collection $\set{(\vphi_j,\lambda_j)}_{j\in J}$ of eigenpairs for $H_\Th$ will be called an eigensystem for $H_\Th$ if   $\set{\vphi_j}_{j\in J}$ is an orthonormal basis for $\ell^2(\Th)$. If   all eigenvalues of  $H_\Th$ are simple, we can rewrite
 the eigensystem as  $\set{(\psi_\lambda,\lambda)}_{\lambda\in \sigma(H_\Th)}$.
 
 Given ${\Theta}\subset \Z^d$, a function  $\psi\colon{\Theta} \to \C$ is called a generalized eigenfunction for $H_{\Theta}$ with generalized eigenvalue $\lambda \in \R$ if $\psi$ is not identically
 zero and 
 \beq \label{pointeig} 
 \scal{(H_\Th -\lambda)\vphi, \psi}=0 \qtx{for all} \vphi \in \ell^2(\Th)\quad \text{with finite support}.
 \eeq
 In this case we call $(\psi,\lambda)$ a generalized eigenpair for $H_{\Theta}$. (Eigenfunctions are generalized eigenfunctions, but we do not require generalized eigenfunctions to be in $\ell^2(\Th)$.)

 For convenience we consider boxes in $\Z^d$ centered at points of $\R^d$. The box in $\Z^d$  of side $L>0$  centered at $x\in \R^{d}$ is given by
\begin{align} \label{defbox}
\La_L(x)&=\La^\R_L(x)\cap \Z^d, \qtx{where}
\La^\R_L(x)= \set{y \in \R^d;\  \norm{y-x} \le  \tfrac{L}{2}}.
\end{align}
By a box  $\La_L$ we will mean a box $\La_L(x)$ for some $x\in \R^d$.  It is easy to see that for all $L\ge 2$ and  $x\in \R^d$ we have $(L-2)^{d}<  \abs  {\La_L(x)}\le  (L+1)^{d}$.

\begin{definition}  Given $R>0$,   a finite set $\Theta \subset \Z^d$ will be called  
$R$-level spacing  for $H$ if   all eigenvalues of $H_\Th$ are simple and  $\abs{\lambda- \lambda^\pr}\ge \e^{-{R}^\beta}$ for all $\lambda, \lambda^\pr\in \sigma(H_{\Theta})$, $\lambda\ne  \lambda^\pr$.

If $\Theta$ is a box $\La_L$  and $R=L$,  we will simply say that
 $\La_L$ is level spacing  for $H$.
\end{definition}

 \begin{definition}\label{defxmloc}  Let $\La_L$ be a box, $x\in \La_L$, and  $m\ge 0$.  Then   $\vphi\in \ell^2(\La_L)$  is said to be $(x,m)$-localized if  $\norm{\vphi}=1$ and\beq\label{hypdec}
\abs{\vphi(y)}\le \e^{-m\norm{y-x}}\qtx{for all} y \in \La_L \qtx{with} \norm{y-x}\ge L_\tau.
\eeq
\end{definition}

Note that  $m=0$ is allowed in Definition~\ref{defxmloc}.

\begin{definition}  \label{defmIloc} 
Let  $ J=(E-B,E+B) \subset I=(E-A,E+ A)$, where $E\in \R$ and  $0<B\le A$,  be bounded open intervals with the same center,
 and let   $m>0$.   A box $\La_L$ will be called $(m,J,I)$-localizing for $H$ if the following holds:

\begin{enumerate}
\item  $\La_L$ is level spacing  for $H$. 
\item  There exists an  $(m,J,I)$-localized eigensystem for $H_{\La_L}$, that is,   an  eigensystem   $\set{(\vphi_\nu, \nu)}_{\nu \in \sigma(H_{\La_L})}$ for $H_{\La_L}$ such that for all $\nu \in \sigma(H_{\La_L})$ there is $x_\nu\in \La_L$ such that  $\vphi_\nu$ is $(x_\nu, m \Chi_{J}(\nu)h_{ I}(\nu))$-localized,
where the modulating function $h_I$ is defined by
\beq\label{eq:h_I}
h_{ I}(t) =h\pa{{\tfrac{{t-E}}{{A}}}} \mqtx{for} t \in \R, \mqtx{where} h(s)=\begin{cases} 
1-s^2&\text{if} \; \; s\in[0,1)\\0 & \text{otherwise}\end{cases}.
\eeq

\end{enumerate}
  We will  say that $\La_L$ is  $(m,I)$-localizing for $H$ if  $\La_L$ is  $(m,I,I)$-localizing for $H$.
 \end{definition}
 
 Note that     $h_{ I}(t)  >0 \iff t\in I$, in particular   $h_I=\Chi_Ih_I$. Since $\Chi_{J}h_{ I} \ge  h_{ J}$ , if  $\La_L$ is $(m,J,I)$-localizing for $H$ it is also
$(m,J)$-localizing for $H$.

\begin{remark} In \cite{EK} we had $I=\R$ and $h_\R=1$, and  called a box $\La_L$ $m$-localizing if it was level spacing for $H$ and  for all $\nu \in \sigma(H_{\La_L})$ there is $x_\nu\in \La_L$ such that  $\vphi_\nu$ is $(x_\nu, m)$-localized.
\end{remark}

  Given an interval $I=(E-{A},E+{A})$ and  scales $\ell,L>1 $, we use the notation
\begin{align}\label{defIell}
I_\ell&= (E-{A}(1-\ell^{-\kappa}),E+{A}(1-\ell^{-\kappa})), \\
\nn  
I^\ell&= (E-{A}(1-\ell^{-\kappa})^{-1},E+{A}(1-\ell^{-\kappa})^{-1}).
\end{align}
  We write $ I_\ell^L = \pa{I_\ell}^L = \pa{I^L}_\ell$, note that $I_\ell^\ell=I$, and  observe that
\beq\label{lowerbdh}
 \Chi_{I_\ell} h_I   \ge   \ell^{-\kappa} \Chi_{I_\ell}, \qtx{i.e.,} h_{ I}(t)\ge 1- (1-\ell^{-\kappa})^2\ge \ell^{-\kappa} \mqtx{for all} t\in I_\ell.
\eeq

 \begin{definition}\label{defineAnd} The Anderson model is the
 random discrete  Schr\"odinger  operator
\beq \label{defAnd}
H_{\bom} :=  -\Delta + V_{\bom} \quad \text{on} \quad  \ell^2(\Z^d), 
\eeq 
where $V_{\bom}$ is a random potential:      $V_{\bom}(x)= \omega_x$ for  $ x \in \Z^d$, where
$\bom=\{ \omega_x \}_{x\in
\Z^d}$ is a family of independent 
identically distributed randoms
variables,  whose  common probability 
distribution $\mu$ is non-degenerate with bounded support.  We assume $\mu$ is H\"older continuous of order $\alpha \in ( \frac 12,1]$: 
 \beq\label{Holdercont}
S_\mu(t) \le K t^\alpha \qtx{for all} t \in [0,1],
\eeq
where $K$ is  a constant and  $S_\mu(t):= \sup_{a\in \R} \mu \set{[a, a+t]} $ is the concentration function of the measure $\mu$.
\end{definition}

 It follows from ergodicity (e.g., \cite[Theorem~3.9]{Ki}) that
\beq\label{Sigma}
 \sigma (H_{\bom})= \Sigma:=\sigma(-\Delta) + \supp \mu= [-2 d,2 d] + \supp \mu \;\;\text{with probability one}.
\eeq

 The eigensystem multiscale analysis in an energy interval yields the following theorem.  
 
 \begin{theorem}\label{thmMSA} Let $H_\bom$ be an Anderson model.  Given $m_- >0$,  there exists a 
 a finite scale   $\cL= \cL(d,m_-) $ and a constant  $C_{d,m_-} >0$  with the following property:  Suppose for some scale 
$L_0 \ge \cL$ we have
  \begin{align}\label{initialconinduc9932}
\inf_{x\in \R^d} \P\set{\La_{L_0} (x) \sqtx{is}  (m_0,I_0) \text{-localizing for} \; H_{\bom}} \ge 1 -  \e^{-L_0^\zeta},
\end{align}
where  $I_0=(E-{A_0},E+{A_0})\subset \R$,   with $E\in \R$ and $A_0>0$, and
 \beq\label{upbm25552}
m_- L_0^{-\kappa^\pr} \le  m_0   \le  \tfrac 1 2 \log \pa{1 + \tfrac {A_0}{4d}}.
 \eeq  
 Then for all   $ L\ge L_0^\gamma $ we have
   \begin{align} \label{MSALnok2}
\inf_{x\in \R^d} \P\set{\La_{L} (x) \sqtx{is} ( m_\infty , I_\infty, I_\infty^{L^\frac 1 \gamma})  \text{-localizing for} \; H_{\bom}} \ge 1 -  \e^{-L^\xi}  ,
\end{align}
where, with $\vrho$ as in \eq{defvrho},
 \begin{align}\label{Aminfty}
 A_\infty&=A_\infty (L_0)= A_0 \prod_{k=0}^\infty \pa{1- L_0^{-\kappa\gamma^k}}, \quad I_\infty= (E-{A_\infty},E+{A_\infty}), \\ \nn
  m_\infty&=m_\infty (L_0)= m_0 \prod_{k=0}^\infty \pa{1- C_{d,m_-} L_0^{-\vrho\gamma^k}} <   \tfrac 1 2 \log \pa{1 + \tfrac {A_\infty}{4d}}.
   \end{align}
In particular,  
$\lim_{L_0\to \infty} A_\infty (L_0) = A_0$ and $\lim_{L_0\to \infty} m_\infty (L_0) = m_0 $.

 \end{theorem}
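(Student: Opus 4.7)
The plan is to establish a single-step induction statement and then iterate. Specifically, I would prove that if the hypothesis of the form \eq{MSALnok2} holds at a scale $\ell$ with parameters $(m_\ell, I_\ell)$, then it holds at scale $L=\ell^\gamma$ with parameters $(m_L, I_L)$ where $m_L = m_\ell\bpa{1-C_{d,m_-}L_0^{-\vrho \gamma^k}}$ and $I_L$ is obtained from $I_\ell$ by applying one shrinkage operation $(\cdot)_{L_0^{\gamma^k}}$. The infinite products in \eq{Aminfty} then arise from iterating countably many steps, and they converge to strictly positive limits because $\sum_{k\ge 0} L_0^{-\kappa \gamma^k}$ and $\sum_{k\ge 0} L_0^{-\vrho \gamma^k}$ are finite whenever $L_0$ is large. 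The base step is \eq{initialconinduc9932}.

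For the single-step induction, I would fix a box $\La_L(x_0)$ and cover it by boxes $\La_\ell(y)$ with centers on a suitably spaced lattice. Call $\La_\ell(y)$ \emph{good} if it is $(m_\ell,I_\ell, I_\ell^{\ell^{1/\gamma}})$-localizing for $H_\bom$, and \emph{bad} otherwise. By the induction hypothesis each $\ell$-box is bad with probability at most $\e^{-\ell^\zeta}$. Using independence of disjoint boxes, together with a cluster-of-bad-boxes estimate in the spirit of \cite{EK}, I would show that with probability at least $1-\e^{-L^\xi}$ every connected cluster of bad boxes has diameter at most $L_\tau$, so that $\La_L$ is covered by good $\ell$-boxes away from a small exceptional set. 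On the same event, a Klein--Molchanov--Minami argument (exactly as in \cite{EK}) provides the level-spacing requirement $\abs{\lambda-\lambda^\pr}\ge \e^{-L^\beta}$ for eigenvalues of $H_{\bom,\La_L}$.

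Conditioned on this geometric/spectral good event, I would then prove the eigenfunction decay required by Definition~\ref{defmIloc}. Given an eigenpair $(\psi,\lambda)$ of $H_{\bom,\La_L}$ with $\lambda\in I_L$, select $x_\lambda\in\La_L$ where $\abs{\psi}$ attains its maximum. For $y\in\La_L$ with $\norm{y-x_\lambda}\ge L_\tau$, construct a chain of good $\ell$-boxes that are moreover \emph{non-resonant} for $\lambda$, meaning $\lambda$ is separated from every eigenvalue of $H_{\bom,\La_\ell}$ whose eigenfunction sits deep in the box by at least $\e^{-\ell^\beta}$ (controlled by a Wegner estimate using \eq{Holdercont}). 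In each such box, Lemma~\ref{lemdecay2}(ii) allows one to pass from values of $\psi$ near the boundary to exponentially smaller values in the interior, producing a factor $\e^{-m_\ell h_{I_\ell}(\lambda)\, \ell}$ (up to sub-exponential boundary corrections). Concatenating the decay along the chain, of length $\sim L/\ell$, yields $\abs{\psi(y)}\le \e^{-m_L h_{I_L}(\lambda)\norm{y-x_\lambda}}$ with the prescribed new parameters $m_L, I_L$.

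The main obstacle is the single application of Lemma~\ref{lemdecay2}(ii). Expanding $\psi\Chi_{\La_\ell}$ in the $(m_\ell,I_\ell)$-localizing eigensystem of $H_{\bom,\La_\ell}$ splits the contribution into an "inside $I_\ell$" sum, which decays spatially through the localized eigenfunctions, and an "outside $I_\ell$" sum, for which no spatial decay is available at all. The outside sum must be tamed via Lemmas~\ref{lemkey}--\ref{lemkey2} using only the spectral gap between $\lambda$ and the complement of the slightly larger interval $I_\ell^{\ell^{1/\gamma}}$; this is precisely where the modulating factor $h_{I_\ell}(\lambda)$ appears in the effective decay rate and why the interval must shrink by $(\cdot)_{L_0^{\gamma^k}}$ at each scale. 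The delicate bookkeeping is to ensure that the losses $h_{I_\ell}(\lambda)\to h_{I_L}(\lambda)$, the geometric losses from bad-box clusters of diameter up to $L_\tau$, and the boundary corrections from each application of Lemma~\ref{lemdecay2}(ii) all fit simultaneously inside the single multiplicative factor $(1-C_{d,m_-}L_0^{-\vrho \gamma^k})$, so that the infinite product in \eq{Aminfty} stays bounded below and the final bound $m_\infty<\tfrac12\log(1+A_\infty/(4d))$ is preserved throughout the iteration.
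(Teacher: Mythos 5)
Your outline captures several key pieces — the suitable-cover geometry, the good/bad dichotomy for $\ell$-boxes, the inside-$I$/outside-$I$ split, and the role of Lemmas~\ref{lemkey}--\ref{lemkey2} in taming the sum over eigenvalues outside $I$ — but the crucial mechanism by which the paper obtains the non-resonance condition needed to invoke Lemma~\ref{lemdecay2}(ii) is missing, and your substitute would not work.

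You propose to pick $x_\lambda$ where $\abs{\psi_\lambda}$ is maximal and then to guarantee, by ``a Wegner estimate'', that each good $\ell$-box on a chain from $x_\lambda$ to $y$ satisfies $\abs{\lambda-\nu}\ge\e^{-\ell^\beta}$ for all $\nu$ localized deep inside the box. This is exactly the point where the eigensystem MSA differs from the Green's function MSA. A Wegner-type estimate gives non-resonance of $\sigma(H_{\bom,\La_\ell})$ with a \emph{fixed, deterministic} energy, but here $\lambda$ is a random eigenvalue of $H_{\bom,\La_L}$ itself, strongly correlated with the potential inside every $\ell$-box in $\La_L$. The paper does not use any per-box Wegner estimate in the induction step. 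Instead, the non-resonance is obtained \emph{deterministically} from the $L$-level-spacing event combined with the injections of Lemma~\ref{lem:ident_eigensyst}: every eigenvalue $\lambda\in\sigma_{I_\ell}(H_{\La_L})$ is shown (claim~\eqref{claimsp}) to lie in $\sigma_\cG(H_{\La_L})\cup\sigma_\cB(H_{\La_L})$, i.e.\ to be ``explained'' by either a good $\ell$-box $\La_\ell(a_\lambda)$ or a buffered subset $\Ups_s$; $x_\lambda$ is then chosen near that box or inside that buffered subset, not at a maximum. The disjointness property \eqref{sigmaab} shows that for any good $\ell$-box disjoint from $\La_\ell(a_\lambda)$ (respectively from $\Ups_s$), $\lambda$ is automatically separated by $\tfrac12\e^{-L^\beta}$ from the relevant eigenvalues — precisely because $\La_L$ is level-spacing and two disjoint boxes inject into disjoint subsets of $\sigma(H_{\La_L})$. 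Your approach has no analogue of this step.

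Two further remarks. First, the buffered-subset machinery (Definition~\ref{defbuff}, Lemma~\ref{lembad}) is not an incidental ``cluster-of-bad-boxes estimate'': it is what makes the case $\lambda\in\sigma_\cB(H_{\La_L})$ tractable, because the localization center of such an eigenfunction lies inside a bad cluster and one needs to compare $\sigma(H_{\Ups})$ with $\sigma(H_{\La_L})$ to locate $x_\lambda$ and to establish non-resonance with \emph{other} regions; a plan to merely ``go around'' bad clusters does not address these eigenvalues. Second, your single-step induction is set up with the exponent $\xi$ of~\eqref{MSALnok2}; the actual induction preserves the stronger $\zeta$ along the discrete scale sequence $L_k=L_0^{\gamma^k}$ (Proposition~\ref{propMSA}/Lemma~\ref{lemInduction}), and $\xi$ only enters for intermediate scales $L\in[L_k,L_{k+1})$ via the separate argument of Proposition~\ref{propMSAnok}, because the probability bound $\e^{-\ell^\zeta}$ on $L_{k-1}$-boxes, combined with $L<L_{k-1}^{\gamma^2}$, only yields a tail of order $\e^{-L^{\zeta/\gamma^2}}$, which is where the constraint $\xi\gamma^2<\zeta$ in~\eqref{ttauzeta} comes from. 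Iterating from the weaker $\xi$ bound would progressively destroy the estimate.
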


Theorem~\ref{thmMSA} yields all the usual forms of localization on the interval $I_\infty$.
To state these results,
we fix $\nu > \frac d 2$, and for $a \in \Z^{d}$ we let 
 $T_{a} $  be  the operator on   $\ell^2(\Z^d)$ given by multiplication by the function
$T_{a}(x):= \la x-a\ra^{\nu}$, where  $ \la x\ra=  \sqrt{1 + \norm{x}^2}$. Since   $\langle a +b \rangle \le \sqrt{2}\langle a \rangle
\langle b\rangle$, we have $
\| T_{a} T_{b}^{-1} \| \le 2^{\frac  {\nu}  2} \la  a -b \ra^{\nu}$.
A function
$\psi\colon \Z^d \to \C$ will be called a $\nu$-generalized eigenfunction for  the discrete  Schr\"odinger operator $H$ if $\psi$ is a generalized eigenfunction 
and $\norm{T_0^{-1} \psi}<\infty$.  ($\norm{T_0^{-1} \psi}<\infty$ if and only if  $\norm{T_a^{-1} \psi}<\infty$ for all $a\in \Z^d$.)
We let 
$\cV({\lambda})$ denote the collection of  $\nu$-generalized eigenfunctions for $H$ with generalized eigenvalue ${\lambda} \in \R$.
Given   ${\lambda} \in \R$ and $a,b \in \Z^{d}$, we set
 \begin{align} \label{defGWx}
W_{\lambda}\up{a}({b}):=\begin{cases} 
\sup_{\psi \in\cV({\lambda}) }
\ \frac {\abs{\psi(b)}}
{\|T_{a}^{-1}\psi \|}&
 \text{if $\cV({\lambda})\not=\emptyset$}\\0 & \text{otherwise}\end{cases}.
\end{align}
It is easy to see that for  all $a,b,c \in \Z^d$ we have 
\begin{equation}\label{boundGW}
W\up{a}_{\lambda}({a})\le 1,\quad  W\up{a}_{\lambda}({b})\le\la b-a\ra^\nu,  \qtx{and} W\up{a}_{\lambda}({c})\le 2^{\frac \nu 2} \la b-a\ra^\nu 
W\up{b}_{\lambda}({c}).
\end{equation}
  
 \begin{theorem}\label{thmloc}   Suppose the conclusions of  Theorem~\ref{thmMSA} hold  for an  Anderson model $H_{\bom}$, and  let $I=I_\infty$, $m=m_\infty$.   There exists a finite scale   $\cL=\cL_{d,\nu,m_-} $ such that, given  $\cL\le L \in 2\N$ and  $a\in \Z^d$,    there exists an event  $\cY_{L,a}$ with the following properties:
  
  \begin{enumerate}
\item $\cY_{L,a}$  depends only on the random variables $\set{\omega_{x}}_{x \in \Lambda_{5L}(a)}$,  and  
\beq\label{cUdesiredint}
  \P\set{\cY_{L,a} }\ge  1 - C \e^{-L^\xi}.
  \eeq

\item If  $\bom \in \cY_{L,a}$,   for all  $\lambda \in I$ we have that  ($L=\ell^\gamma$)
\beq \label{locimpl}
\max_{b\in \La_{\frac L 3}(a)} W\up{a}_{\bom,\lambda}(b)>\e^{-\frac 1 4 m h_{ I^\ell} (\lambda) L} \;  \Longrightarrow \; \max_{y\in A_L(a)} W\up{a}_{\bom,\lambda}(y)\le \e^{-\frac 7 {132}m  h_{ I^\ell}  (\lambda) \norm{y-a}},
\eeq
 \beq\label{Aell}
 \text{where} \quad A_L(a):=  \set{y\in \Z^d; \ \tfrac 8 7 L \le   \norm{y-a}\le \tfrac {33}{14} L}.
  \eeq
In particular, for all $\bom \in \cY_{L,a}$ and   $\lambda \in I$ we have
\beq  \label{WW}
W\up{a}_{\bom,\lambda}(a)W\up{a}_{\bom,\lambda}(y)\le 
 \e^{- \frac 7 {132}  m h_{ I^\ell}  (\lambda)  \norm{y-a}}\mqtx{for all} y\in A_L(a).
\eeq

   \end{enumerate}
   \end{theorem}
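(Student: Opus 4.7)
The plan is to upgrade the finite-volume eigensystem information provided by Theorem~\ref{thmMSA} into decay for arbitrary $\nu$-generalized eigenfunctions on the annulus $A_L(a)$. First I would construct $\cY_{L,a}$ as the intersection of the events $\set{\La_L(x_j) \sqtx{is} (m_\infty, I_\infty, I_\infty^\ell)\text{-localizing for } H_\bom}$ over a fixed dimension-dependent collection of centers $\set{x_j}_{j=1}^N$ with $N=N(d)$, chosen so that (a) every $\La_L(x_j)\subset \La_{5L}(a)$, making $\cY_{L,a}$ measurable with respect to $\set{\omega_x}_{x\in \La_{5L}(a)}$, and (b) every $y\in A_L(a)$ lies at distance at least $\tfrac L 8$ from the boundary of some box $\La_L(x_y)$ in the collection whose closure is disjoint from $\La_{L/3}(a)$. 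Since $N$ is independent of $L$, Theorem~\ref{thmMSA} together with a union bound yields $\P(\cY_{L,a}) \ge 1 - N\e^{-L^\xi}$, giving~(i).

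For (ii), fix $\bom \in \cY_{L,a}$ and $\lambda \in I_\infty$, and suppose the left-hand side of \eqref{locimpl} holds, so that some $\psi \in \cV(\lambda)$ witnesses $\abs{\psi(b_\star)}/\norm{T_a^{-1}\psi} > \e^{-\frac 1 4 m h_{I^\ell}(\lambda) L}$ at a point $b_\star\in \La_{L/3}(a)$. For each $y\in A_L(a)$ pick the associated $x_y$ and start from the generalized-eigenfunction identity: whenever $\lambda \notin \sigma(H_{\bom,\La_L(x_y)})$,
\[
\psi(y) = -\!\!\sum_{\substack{u \in \La_L(x_y),\ v \notin \La_L(x_y)\\ \abs{u-v}=1}}\!\! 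G_{\bom,\La_L(x_y)}(y,u;\lambda)\,\psi(v).
\]
Lemma~\ref{lemtoreg} converts eigensystem localization into Green-function decay at energies at controlled distance from $\sigma(H_{\bom,\La_L(x_y)})$, producing $\abs{G_{\bom,\La_L(x_y)}(y,u;\lambda)} \le \e^{-m^\pr h_{I^\ell}(\lambda)\norm{y-u}}$ with $m^\pr$ only slightly smaller than $m$. Combined with the polynomial bound $\abs{\psi(v)} \le \norm{T_a^{-1}\psi}\la v-a\ra^\nu$ on the boundary and the elementary inequality $\norm{y-u}\ge \norm{y-a}-\norm{u-a}$ for $u\in \partial\La_L(x_y)$, summing over the $\le C L^{d-1}$ boundary pairs yields $W\up{a}_{\bom,\lambda}(y) \le \e^{-m^{\pr\pr} h_{I^\ell}(\lambda)\norm{y-a}}$ for any $m^{\pr\pr}$ slightly smaller than $m^\pr$.

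The main obstacle is the possibility that $\lambda$ accidentally lies close to an eigenvalue $\nu$ of $H_{\bom,\La_L(x_y)}$, which would invalidate the Green-function bound. The level-spacing property of $\La_L(x_y)$ permits at most one such $\nu$; its eigenfunction $\vphi_\nu$ is $(x_\nu, m h_I(\nu))$-localized around some $x_\nu\in \La_L(x_y)$. If this bad case occurred, Lemma~\ref{lemdecay2}(ii) would force $\psi|_{\La_L(x_y)}$ to be essentially proportional to $\vphi_\nu$; and since $\La_L(x_y)$ is disjoint from $\La_{L/3}(a)$, the center $x_\nu$ lies at distance at least $\tfrac L 2$ from $\La_{L/3}(a)$, so $\vphi_\nu$---and hence $\psi$---would be bounded by $\e^{-\frac 1 2 m h_{I^\ell}(\lambda) L}$ throughout $\La_{L/3}(a)$, contradicting the assumed lower bound at $b_\star$. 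Thus the Green-function estimate always applies for the chosen $x_y$.

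Finally, one tracks the numerical losses---the $1 - C L^{-\vrho}$ factor from passing $m\to m^\pr$ in Lemma~\ref{lemtoreg}, the polynomial $\la v-a\ra^\nu$ and $CL^{d-1}$ boundary contributions, and the conversion $\norm{y-u} \ge \pa{1 - O(L^{-1})}\norm{y-a}$ valid throughout $A_L(a)$---to produce the advertised decay rate $\tfrac 7 {132} m\, h_{I^\ell}(\lambda)$ in \eqref{locimpl}. The bound \eqref{WW} then follows from \eqref{locimpl} combined with the universal inequality $W\up{a}_{\bom,\lambda}(a) \le 1$ recorded in \eqref{boundGW}.
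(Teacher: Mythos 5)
Your construction of $\cY_{L,a}$ and your Green's-function route via Lemma~\ref{lemtoreg} are a legitimate alternative to the paper's eigensystem-expansion approach (the paper instead covers $\La_{5L}(a)$ by boxes $\La_L(b)$, $b \in a + \tfrac 12 L\Z^d$, and applies Lemma~\ref{lemdecay2}(ii) directly). However, your argument for ruling out the bad case---that $\lambda$ might lie within $\tfrac 12 \e^{-L^\beta}$ of some $\theta \in \sigma_{I^\ell}^{L_\tau}(H_{\bom,\La_L(x_y)})$---has a genuine gap, and it is precisely here that the paper needs an ingredient you omitted from $\cY_{L,a}$.

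You claim that if $\lambda$ were close to such a $\theta$, then $\psi$ restricted to $\La_L(x_y)$ would look like $\vphi_\theta$, whose localization center $x_\theta$ lies deep inside $\La_L(x_y)$, and that this would bound $\psi$ on $\La_{L/3}(a)$ and contradict the lower bound at $b_\star$. But $\vphi_\theta$ is supported in $\La_L(x_y)$, which is disjoint from $\La_{L/3}(a)$; approximating $\psi$ by $\vphi_\theta$ on $\La_L(x_y)$ gives no control whatsoever on $\psi$ over $\La_{L/3}(a)$, so the contradiction does not materialize. The correct mechanism is different: under the hypothesis of \eqref{locimpl}, Lemma~\ref{lemWimp} (applied to the \emph{center} box $\La_L(a)$, which must therefore be in your cover) forces $\lambda$ to be $\tfrac12\e^{-L^\beta}$-close to some $\theta_0 \in \sigma_{I^\ell}^{L_\tau}(H_{\bom,\La_L(a)})$. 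To conclude that $\lambda$ is then \emph{far} from the corresponding spectra of the annulus boxes $\La_L(b)$, one identifies $\theta_0$ with a unique eigenvalue of $H_{\bom,\La_{5L}(a)}$ via Lemma~\ref{lem:ident_eigensyst}(i)(a), does the same for any $\theta \in \sigma_{I^\ell}^{L_\tau}(H_{\bom,\La_L(b)})$, notes these identifications land on \emph{distinct} eigenvalues of $H_{\bom,\La_{5L}(a)}$ by \eqref{sigmaab} (since $\La_L(a)$ and $\La_L(b)$ are disjoint), and then invokes \emph{level spacing of $\La_{5L}(a)$} to separate them by $\e^{-(5L)^\beta}$. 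Your event $\cY_{L,a}$ does not request level spacing for the enlarged box $\La_{5L}(a)$, and without it there is no way to transfer spectral proximity information from the center box to the annulus boxes. Once this is repaired---adding the level spacing of $\La_{5L}(a)$ to $\cY_{L,a}$ and replacing your contradiction argument by the transfer argument above---either your Green's-function expansion or the paper's eigensystem expansion completes the proof.
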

   
   Theorem~\ref{thmloc}  implies Anderson localization and dynamical localization, and more, as shown in  \cite{GKsudec,GKber,EK}.  In particular, we get the following corollary.

\begin{corollary} \label{corloc}  Suppose the conclusions of  Theorem~\ref{thmMSA} hold  for an  Anderson model $H_{\bom}$, and  let $I=I_\infty$, $m=m_\infty$. Then the following holds with probability one:
  \begin{enumerate}
 \item  {$H_{\bom}$}  has  pure point spectrum in the interval $I$.

\item   If   {$\psi_\lambda$}  is an {eigenfunction} of $H_{\bom}$
with eigenvalue  {$\lambda\in I$}, then $\psi_\lambda$ is exponentially localized  with rate of decay $ \frac 7 {132} m  h_{I}(\lambda)$,   more precisely,
\begin{equation}\label{expdecay222}
\abs{\psi_\lambda(x)} \le C_{\bom,\lambda}\norm{T_0^{-1} \psi}\, e^{-  \frac 7 {132}m  h_{I}(\lambda)\norm{x}} \qquad \text{for all}\quad  x \in \R^{d}.
\end{equation}

 \item If  $\lambda \in I$, then for all   $x,y \in \Z^d$ we have 
 \begin{align}\label{eqWW}
W\up{x}_{\bom,\lambda}(x)W\up{x}_{\bom,\lambda}(y)
\le   C_{m,\bom,\nu} \pa{h_{I} (\lambda)}^{-\nu}\e^{(\frac 4 {33} +\nu)m  h_{I} (\lambda)   (2d\log \scal{x})^{\frac 1 \xi}}  \e^{- \frac 7 {132} m  h_{I} (\lambda) \norm{y-x}} .
\end{align}

\item  If  $\lambda \in I$, then for  $ \psi\in \Chi_{\set{\lambda}}(H_{\bom})$ and all $x,y \in \Z^d$ we have
\begin{align}\label{eqWW2}
 &\abs{\psi(x)}\abs{\psi(y)}\\ \notag
& \;\;
\le  C_{m,\bom,\nu} \pa{h_{I} (\lambda)}^{-\nu}\, \norm{T_x^{-1} \psi}^2\e^{(\frac 4 {33} +\nu)m  h_{I} (\lambda)   (2d\log \scal{x})^{\frac 1 \xi}}  \e^{- \frac 7 {132} m  h_{I} (\lambda) \norm{y-x}}  \\ \notag
&\;  \;\le 2^\nu C_{m,\bom,\nu} \pa{h_{I} (\lambda)}^{-\nu}\, \norm{T_0^{-1} \psi}^2\la x\ra^{2\nu}\e^{(\frac 4 {33} +\nu)m  h_{I} (\lambda)   (2d\log \scal{x})^{\frac 1 \xi}}  \e^{- \frac 7 {132} m  h_{I} (\lambda) \norm{y-x}} .
 \end{align}

\item If  $\lambda \in I$, then there exists $x_\lambda=x_{\bom,\lambda} \in \Z^d$, such that for $ \psi\in \Chi_{\set{\lambda}}(H_{\bom})$ and  all $x \in \Z^d$ we have
\begin{align}\nn
&\abs{\psi(x)} 
 \le C_{m,\bom,\nu} \pa{h_{I} (\lambda)}^{-\nu}\norm{T_{x_\lambda}^{-1} \psi} \e^{(\frac 4 {33} +\nu)m  h_{I} (\lambda)   (2d\log \scal{x_\lambda})^{\frac 1 \xi}}  \e^{- \frac 7 {132} m  h_{I} (\lambda) \norm{x-x_\lambda}} \\ 
  & \   \le  
 2^{\frac \nu 2} C_{m,\bom,\nu} \pa{h_{I} (\lambda)}^{-\nu}\norm{T_0^{-1} \psi}\la x_\lambda\ra^{\nu}\e^{(\frac 4 {33} +\nu)m  h_{I} (\lambda)   (2d\log \scal{x_\lambda})^{\frac 1 \xi}}  \e^{- \frac 7 {132} m  h_{I} (\lambda) \norm{x-x_\lambda}} .
 \end{align}

\end{enumerate}
\end{corollary}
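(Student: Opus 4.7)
The plan is to upgrade Theorem~\ref{thmloc} to an almost-sure statement via Borel--Cantelli, and then combine the resulting decay dichotomy with the generalized eigenfunction expansion, in the spirit of \cite{EK,GKsudec,GKber}. First, I would build a full-probability event $\Omega_0$ on which the spectral identity \eqref{Sigma} holds and, for every $a\in\Z^d$, $\bom\in\cY_{L,a}$ for all but finitely many even $L\ge\cL$. The bound \eqref{cUdesiredint} yields $\sum_{L\in 2\N}\P(\cY_{L,a}^c)<\infty$, so Borel--Cantelli applies at each fixed $a$, and the countable intersection over $a\in\Z^d$ preserves full measure. I would also invoke the deterministic generalized eigenfunction expansion, valid since $T_0^{-2}$ is trace class for $\nu>d/2$, which concentrates the spectral measure of $H_\bom$ on generalized eigenvalues $\lambda$ with $\cV(\lambda)\neq\emptyset$.

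Next I would extract pointwise exponential decay from the dichotomy \eqref{locimpl}. Fix $\bom\in\Omega_0$, $\lambda\in I=I_\infty$, and $\psi\in\cV(\lambda)$; pick $a\in\Z^d$ with $\psi(a)\neq 0$, so $W^{(a)}_{\bom,\lambda}(a)>0$. Since $\lambda\in I$ implies $h_I(\lambda)>0$ and $h_{I^\ell}(\lambda)\ge h_I(\lambda)$ for all $\ell$, the first alternative in \eqref{locimpl} cannot hold along any sequence $L\to\infty$ without forcing the contradiction $W^{(a)}_{\bom,\lambda}(a)\le\lim_{L}\e^{-\frac14 m h_I(\lambda)L}=0$. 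Hence for all $L\ge L^\ast(\bom,\lambda,a)$ we are in the annular regime $W^{(a)}_{\bom,\lambda}(y)\le\e^{-\frac{7}{132}mh_{I^\ell}(\lambda)\|y-a\|}$ on $A_L(a)$. The geometric inequality $\tfrac{16}{7}\le\tfrac{33}{14}$ guarantees that the annuli $A_{L_k}(a)$ with $L_k=2^kL^\ast$ cover every $y$ at distance at least $\tfrac{8}{7}L^\ast$ from $a$, so chaining produces exponential decay of $W^{(a)}_{\bom,\lambda}(y)$, and hence of $|\psi(y)|\le\|T_a^{-1}\psi\|\,W^{(a)}_{\bom,\lambda}(y)$, in $\|y-a\|$. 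In particular $\psi\in\ell^2(\Z^d)$, which gives \eqref{expdecay222} in (ii); combined with the generalized eigenfunction expansion of the first step, this forces pure point spectrum in $I$, proving (i).

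For the SUDEC-type bounds (iii)--(v) I would quantify how $L^\ast$ grows with $a$. A refined Borel--Cantelli argument, exploiting that $\e^{-L^\xi}$ beats $\langle a\rangle^{-2d}$ at $L\sim(\log\langle a\rangle)^{1/\xi}$ so that $\sum_a\sum_{L\ge L_a}\e^{-L^\xi}$ converges, produces a full-measure subevent of $\Omega_0$ on which $L^\ast\le C(2d\log\langle a\rangle)^{1/\xi}$ simultaneously for every $a\in\Z^d$. Running the chaining from this improved threshold yields the prefactor $\e^{cm h_I(\lambda)(2d\log\langle x\rangle)^{1/\xi}}$ in \eqref{eqWW}, while the transition from $W^{(a)}$ to $W^{(x)}$ via $\|T_a^{-1}\psi\|\le 2^{\nu/2}\langle a-x\rangle^\nu\|T_x^{-1}\psi\|$ supplies the $\langle x\rangle^\nu$ factors in \eqref{eqWW2} and (v). In (v) the concentration center $x_\lambda$ is taken to be a near-maximizer of $|\psi|$ on $\Z^d$.

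The main obstacle will be controlling the discrepancy $h_{I^\ell}(\lambda)\neq h_I(\lambda)$ uniformly as $\lambda$ approaches $\partial I$. Both $h_I(\lambda)$ and the scale at which $h_{I^\ell}(\lambda)$ stabilizes near $h_I(\lambda)$ degenerate there, since $h_{I^\ell}(\lambda)=h_I(\lambda)+O(\ell^{-\kappa})$ forces $\ell^\kappa\gg h_I(\lambda)^{-1}$ before the relative error in the decay rate is controlled. Tracking this tension carefully throughout the chain is what produces the $(h_I(\lambda))^{-\nu}$ prefactor in \eqref{eqWW}--\eqref{eqWW2} and ensures that the SUDEC-type bounds remain meaningful uniformly on compact subintervals of $I$.
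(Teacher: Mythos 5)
Your outline for (i) and (ii) is essentially the paper's argument (Borel--Cantelli over dyadic scales and centers, the annulus covering based on $\tfrac{16}{7}\le\tfrac{33}{14}$, the generalized eigenfunction expansion, and the observation $h_{I^{\ell}}\ge h_{I}$ on $I$), and that part is fine. The genuine gap is in your route to (iii)--(v). You define $L^{\ast}(\bom,\lambda,a)$ as the scale beyond which the \emph{second} alternative of \eq{locimpl} (annular decay of $W\up{a}_{\bom,\lambda}$) holds, and then assert that a refined Borel--Cantelli argument yields $L^{\ast}\le C(2d\log\scal{a})^{1/\xi}$ simultaneously for all $a$ on a full-measure event. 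Borel--Cantelli only controls the scales at which $\bom\in\cY_{L,a}$; it says nothing about when the hypothesis $\max_{b\in\La_{L/3}(a)}W\up{a}_{\bom,\lambda}(b)>\e^{-\frac14 m h_{I^{\ell}}(\lambda)L}$ begins to hold. That onset is a deterministic quantity depending on $\lambda$ and on the size of the generalized eigenfunctions near $a$ (for an eigenvalue whose eigenfunction lives far from $a$ it can be of the order of the distance to the localization center, far beyond $(\log\scal{a})^{1/\xi}$, and it may never occur), so a chaining argument for $W\up{a}_{\bom,\lambda}(y)$ alone cannot be started at a threshold of size $C(2d\log\scal{a})^{1/\xi}$. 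This is exactly why Theorem~\ref{thmloc} records the \emph{unconditional} product bound \eq{WW}: on $\cY_{L,a}$, either the second alternative holds and $W\up{a}_{\bom,\lambda}(a)\le 1$, or the hypothesis fails, in which case $W\up{a}_{\bom,\lambda}(a)\le\e^{-\frac14 m h_{I^{\ell}}(\lambda)L}$ (since $a\in\La_{L/3}(a)$) while $W\up{a}_{\bom,\lambda}(y)\le\scal{y-a}^{\nu}$ by \eq{boundGW}; in both cases the product decays at rate $\frac{7}{132}m h_{I^{\ell}}(\lambda)$ on $A_L(a)$. The correct proof of (iii) (the paper's) applies \eq{WW} on the annuli $A_{L_k}(x)$ for all $k$ past the Borel--Cantelli threshold $k_{\bom,x}$, and for $\norm{y-x}<\frac87 L_{k_{\bom,x}}$ uses the trivial bound $\scal{y-x}^{\nu}$ together with inserting $\e^{+\frac{7}{132}mh_I(\lambda)\norm{y-x}}\e^{-\frac{7}{132}mh_I(\lambda)\norm{y-x}}$; your argument must be rerouted through this $\lambda$- and $\psi$-independent product bound rather than through a resolved dichotomy.

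Two further points. First, your closing paragraph misidentifies both the difficulty and the origin of the $(h_I(\lambda))^{-\nu}$ prefactor: since $I\subset I^{\ell}$ one has $h_{I^{\ell}}\ge h_{I}$ on $I$, so replacing $h_{I^{\ell}}(\lambda)$ by $h_{I}(\lambda)$ in the decay exponents only weakens the bounds and no stabilization of scales near $\partial I$ is needed; the factor $(h_{I}(\lambda))^{-\nu}$ (and the exponent $\frac{4}{33}+\nu$) comes from absorbing the polynomial factors $\scal{(2d\log\scal{x})^{1/\xi}}^{\nu}$ and $\scal{y-x}^{\nu}$ into exponentials at rate $m h_{I}(\lambda)$. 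Second, once (iii) is in place, (iv) follows from $\abs{\psi(x)}\le\norm{T_x^{-1}\psi}\,W\up{x}_{\bom,\lambda}(x)$ and $\norm{T_x^{-1}\psi}\le 2^{\nu/2}\scal{x}^{\nu}\norm{T_0^{-1}\psi}$, and (v) from the discrete SUDEC-to-SULE argument of \cite{GKsudec}; your near-maximizer choice of $x_\lambda$ is compatible with that step.
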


 In Corollary~\ref{corloc}, (i) and (ii)  are statements of Anderson localization, (iii) and (iv) are statements of dynamical localization ((iv) is called   SUDEC (summable uniform decay of eigenfunction correlations) in \cite{GKsudec}), and (v) is SULE (semi-uniformly localized eigenfunctions; see
\cite{DRJLS0,DRJLS}).

 We can also derive statements of localization in expectation, as in \cite{GKsudec,GKber}.

\section{Localization at the bottom of the spectrum}\label{secbottom}

We now discuss how to obtain the initial step for the eigensystem multiscale analysis at the bottom of the spectrum and prove localization.  Let  $H_{\bom}$ be an Anderson model, and  set $E_0= \inf \Sigma$ (see \eq{Sigma}), the bottom of the almost sure spectrum of $H_{\bom}$.  We will consider intervals at the bottom of the spectrum, more precisely,  intervals of the form $J=[E_0 , E_0 + A)$ with $A>0$.  We set $\tilde J= (E_0 -A, E_0 + A)$,
so $ J \cap \Sigma = \tilde J \cap \Sigma$,  call a box $(m,J)$-localizing if it is $(m,\tilde J)$-localizing as in Definition~\ref{defmIloc}, etc. 
We also set $J_L$ and $J^L$ so $\widetilde {J_L}= \tilde {J}_L$ and $\widetilde {J^L}= \tilde {J}^L$.

\subsection{Fixed disorder}

\begin{proposition}\label{PropLT} Let  $H_{\bom}$ be an Anderson model, and
set $E_0= \inf \Sigma$.
There exists a constant $C_{d,\mu}>0$ such that, given $\zeta \in (0,1)$, for sufficiently large $L$ we have 
\begin{align}\label{LT}
¥\inf_{x\in \R^d} \P\set{H_{\La_L(x)}  >E_0 + C_{d,\mu} L^{- \frac {2\zeta}d}   }\ge 1 - \e^{-L^{\zeta}}.
\end{align}¥
In particular,     for all   intervals $J_\zeta(L) = [E_0 , E_0 + C_{d,\mu} L^{- \frac {2\zeta}d} )$  and all $m>0$ we have
 \begin{align}\label{LT2}
\inf_{x\in \R^d} \P\set{\La_{L} (x) \sqtx{is}   (m,J_\zeta(L)) \text{-localizing for} \; H_{\bom}} \ge 1 -  \e^{-L^\zeta}.
\end{align}
\end{proposition}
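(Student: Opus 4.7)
The plan is to derive \eqref{LT} from a classical Lifshitz-tail estimate at the bottom of the spectrum, and then combine it with a Minami-based level-spacing estimate to obtain \eqref{LT2}.

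For \eqref{LT}, write $H_{\La_L(x)} - E_0 = (-\Delta_{\La_L(x)} + 2d) + (V_\bom|_{\La_L(x)} - V_{\min})$ as a sum of two non-negative operators, with $V_{\min} = \inf\supp\mu$ so that $E_0 = V_{\min} - 2d$. Partition $\La_L(x)$ into disjoint sub-boxes $Q_j$ of side $\ell = \fl{c_d L^{\zeta/d}}$. Using the identity $-\Delta_{\La_L(x)} + 2d = L_{\La_L(x)} + B_\partial$ (where $L_\La$ is the graph Laplacian of the induced subgraph on $\La$ and $B_\partial \ge 0$ is the boundary correction) together with the inequality $L_{\La_L(x)} \ge \bigoplus_j L_{Q_j}$ (dropping cross-edges only decreases the associated quadratic form), one obtains the block bound
\[
E_0(H_{\La_L(x)}) - E_0 \;\ge\; \min_j \lambda_0\bpa{L_{Q_j} + (V_\bom|_{Q_j} - V_{\min})}.
\]
The graph Laplacian $L_{Q_j}$ has a spectral gap $\ge c_d \ell^{-2} \sim L^{-2\zeta/d}$ above its one-dimensional constant ground mode. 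Splitting a trial vector into its projection onto the constants and the orthogonal complement shows that, on the event where at least $\tfrac 34 |Q_j|$ of the potentials $\omega_y$, $y \in Q_j$, exceed $V_{\min} + C_{d,\mu} L^{-2\zeta/d}$, one has $\lambda_0(L_{Q_j} + (V_\bom|_{Q_j} - V_{\min})) \ge C_{d,\mu} L^{-2\zeta/d}$. By independence and \eqref{Holdercont}, a Chernoff bound controls the probability of the complementary event in a single sub-box by $\e^{-c L^\zeta \log L}$, since $|Q_j| \gtrsim L^\zeta$; summing over the at most $L^d$ sub-boxes yields \eqref{LT}.

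For level spacing, we invoke Minami's estimate (which holds for the Anderson model with single-site distribution H\"older-continuous of order $\alpha > \tfrac 12$), combined with the Klein-Molchanov Lemma \cite[Lemma~2]{KlM}. Covering the spectrum of $H_{\bom,\La_L(x)}$ (contained in a bounded interval depending only on $d$ and $\supp\mu$) by $O(\e^{L^\beta})$ sub-intervals of length $\e^{-L^\beta}$ and applying Minami in each gives
\[
\P\set{\La_L(x) \text{ is not level spacing for } H_\bom} \;\le\; C L^{2d} \e^{(1 - 2\alpha) L^\beta} \;\le\; \tfrac 12 \e^{-L^\zeta}
\]
for $L$ large, using $\alpha > \tfrac 12$ and $\beta > \zeta$ from \eqref{ttauzeta}.

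On the intersection of the two events, $\sigma(H_{\bom,\La_L(x)}) \cap \widetilde{J_\zeta(L)} = \emptyset$, so the $(m,\widetilde{J_\zeta(L)})$-localized eigensystem condition of Definition~\ref{defmIloc} is vacuously satisfied by every eigensystem of $H_{\bom,\La_L(x)}$ for every $m > 0$ (the factor $\Chi_{\widetilde{J_\zeta(L)}} h_{\widetilde{J_\zeta(L)}}$ vanishes on the spectrum), while level spacing holds. Hence $\La_L(x)$ is $(m, J_\zeta(L))$-localizing with the claimed probability, after absorbing the factor of $2$ into the large-$L$ condition, which establishes \eqref{LT2}. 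The main technical hurdle is the block lower bound for $\lambda_0(L_{Q_j} + (V_\bom|_{Q_j} - V_{\min}))$: a naive block-diagonal bound fails because the graph Laplacian has a one-dimensional kernel, so one must use the randomness of $V_\bom$ to push the constant ground mode up — this is precisely where the H\"older continuity \eqref{Holdercont} and the quantitative concentration of the minima of the $\omega_y$'s enter.
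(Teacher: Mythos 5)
Your approach is correct and, in one respect, more careful than the paper's own proof. For \eqref{LT} the paper simply cites the Lifshitz tails estimate in Kirsch's lecture notes (Proof of Theorem 11.4 together with Lemma 6.4 in \cite{Ki}) with the remark that the argument extends to boxes centered at real points; you instead reprove the estimate from scratch via the block-decomposition $-\Delta_{\La} + 2d = L_{\La} + B_\partial$, the monotonicity $L_{\La} \ge \bigoplus_j L_{Q_j}$, and the rank-one-kernel splitting for each sub-box. Your sketch of the key pointwise bound on $\lambda_0\bigl(L_{Q_j} + (V_\bom|_{Q_j} - V_{\min})\bigr)$ can be made rigorous: writing $\psi = a u_0 + \psi_\perp$ with $u_0$ the normalized constant and using $|S|\le \tfrac14|Q_j|$ for $S=\{y: \omega_y - V_{\min} < \theta\}$, one gets $\scal{\psi,(L_{Q_j}+W)\psi} \ge (\lambda_1(L_{Q_j})-\theta)\norm{\psi_\perp}^2 + \tfrac\theta2 |a|^2 \ge \tfrac\theta2$ provided $\lambda_1(L_{Q_j}) \ge 2\theta$, which forces the choice of $C_{d,\mu}$ small relative to the graph-Laplacian gap constant but otherwise goes through.

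More substantively, you are right that \eqref{LT2} does \emph{not} literally "follow trivially from \eqref{LT}," as the paper claims: Definition~\ref{defmIloc} requires $\La_L$ to be level spacing for $H_\bom$, and \eqref{LT} says nothing about eigenvalue gaps. One has to intersect the Lifshitz-tails event with the level-spacing event supplied by Lemma~\ref{lemSep} (the Klein--Molchanov consequence of Minami's estimate), exactly as you do. Once $\sigma(H_{\bom,\La_L})\cap \widetilde{J_\zeta(L)} = \emptyset$, the modulating factor $\Chi_{\widetilde{J_\zeta(L)}} h_{\widetilde{J_\zeta(L)}}$ vanishes at every eigenvalue, so every eigensystem is vacuously $(m,\widetilde{J_\zeta(L)})$-localized; this is the observation the paper implicitly uses, and you state it explicitly. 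The only remaining blemish, which you acknowledge by "absorbing the factor of 2," is the probability bookkeeping: adding the two failure probabilities gives at best $1 - \tfrac32\e^{-L^\zeta}$, so to recover the stated $1-\e^{-L^\zeta}$ one should note that the Lifshitz-tails failure probability is in fact of order $L^d\e^{-cL^\zeta\log L} \ll \tfrac12\e^{-L^\zeta}$, and the level-spacing failure probability is $O(L^{2d}\e^{-(2\alpha-1)L^\beta}) \ll \tfrac12\e^{-L^\zeta}$ since $\beta > \zeta$, so both can be taken below $\tfrac12\e^{-L^\zeta}$ for large $L$. This is a sloppiness shared with the paper; your proof is otherwise sound and genuinely completes a step the paper elides.
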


The estimate \eq{LT}  follows from a Lifshitz tails estimate.  It can be derived from  \cite[Proof of Theorem 11.4]{Ki}.  Although the boxes in \cite{Ki} are all centered at points in $\Z^d$, the arguments, including the crucial \cite[Lemma~6.4]{Ki}, can be extended to boxes  centered at points in $\R^d$. Note that \eq{LT2} follows trivially from \eq{LT}.  Since the probability distribution $\mu$ is a continuous measure (see \eq{Holdercont}),   it follows from  \eq{Sigma} that $J_\zeta(L) \subset \Sigma$ for all sufficiently  large $L$.

We will now  combine Proposition~\ref{PropLT}  with Theorem~\ref{thmMSA}, taking  $I_0= \widetilde {J_\zeta(L_0) }$, i.e., $E=E_0$ and $A_0= C_{d,\mu} L_0^{- \frac {2\zeta}d}$ in Theorem~\ref{thmMSA}.   To satisfy \eq{upbm25552} for  $L$ large, we  take $m_0= \frac 1 {9d} C_{d,\mu} L^{- \frac {2\zeta}d}$, $m_-=\frac 1 {9d} C_{d,\mu}$ and $\kappa^\pr = \frac {2\zeta}d$.
To satisfy \eq{gamtzetabeta2}   we require $\frac {2\zeta}d < \tau - \gamma \beta$, and then choose  $0<\kappa <  \tau - \gamma \beta -\kappa^\pr$. Since for a fixed $\zeta$ we can take $\tau$ and $\gamma$  close to $1$ and $\beta$  close to $\zeta$, respecting \eq{ttauzeta0}, we find we can choose the parameters in  \eq{ttauzeta0}  as long as
\beq
\tfrac {2\zeta}d < 1- \zeta  \quad \iff \quad \zeta < \tfrac  d {d+2}.
\eeq
We obtain the following theorem.

\begin{theorem}\label{themlocbottom} Let  $H_{\bom}$ be an Anderson model,  and fix $ 0<\xi <\zeta <\frac  d {d+2}  $.    Then there exists $\gamma >1$ such that, if $L_0$ is sufficiently large, for all $ L\ge L_0^\gamma$ we have
  \begin{align} \label{MSALnok2234}
\inf_{x\in \R^d} \P\set{\La_{L} (x) \sqtx{is} ( m_{\zeta,\infty} , J_{\zeta,\infty}, J_{\zeta,\infty}^{L^\frac 1 \gamma})   \text{-localizing for} \; H_{\bom}} \ge 1 -  \e^{-L^\xi},
\end{align}
where 
 \begin{align}\label{Aminfty23}
 A_{\zeta,\infty}&=A_{\zeta,\infty}(L_0)= C_{d,\mu} L_0^{- \frac {2\zeta}d} \prod_{k=0}^\infty \pa{1- L_0^{-\kappa\gamma^k}} \ge  \tfrac 1 2 C_{d,\mu} L_0^{- \frac {2\zeta}d},  \\ \nn  J_{\zeta,\infty}&= [E_0, E_0+ A_{\zeta,\infty})\supset [E_0, E_0+ \tfrac 1 2 C_{d,\mu} L_0^{- \frac {2\zeta}d}), \\ \nn
  m_{\zeta,\infty}&=m_{\zeta,\infty} (L_0)=  \tfrac 1 {9d} C_{d,\mu} L^{- \frac {2\zeta}d} \prod_{k=0}^\infty \pa{1- C_{d,\frac 1 {9d} C_{d,\mu}} L_0^{-\vrho\gamma^k}} \ge  \tfrac 1 {18 d} C_{d,\mu} L^{- \frac {2\zeta}d} .
  \end{align}
  In particular, the conclusions of  Theorem~\ref{thmloc}  and Corollary~\ref{corloc} hold in the interval $J_{\zeta,\infty}$.
\end{theorem}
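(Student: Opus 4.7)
The strategy is to feed the Lifshitz tails estimate of Proposition~\ref{PropLT} into Theorem~\ref{thmMSA} as the initial-scale input. I would fix $\xi,\zeta$ as in the statement, select compatible exponents $\beta,\gamma,\tau,\kappa,\kappa'$ satisfying \eq{ttauzeta0}--\eq{gamtzetabeta2} with $\kappa'=2\zeta/d$, then set $E=E_0$, $A_0=C_{d,\mu}L_0^{-2\zeta/d}$ (with $C_{d,\mu}$ from Proposition~\ref{PropLT}), $m_0=\tfrac{1}{9d}C_{d,\mu}L_0^{-2\zeta/d}$, and $m_-=\tfrac{1}{9d}C_{d,\mu}$. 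With these choices the lower bound in \eq{upbm25552} becomes an equality, and the upper bound holds for large $L_0$ since $\tfrac12\log(1+A_0/(4d))\sim A_0/(8d)>m_0$ as $A_0\to 0$.

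The main obstacle, and really the only genuine content of the proof, is the exponent juggling, which is what isolates the range $\zeta<d/(d+2)$. With $\kappa'=2\zeta/d$ fixed, \eq{gamtzetabeta2} demands $\kappa<\tau-\gamma\beta-2\zeta/d$ for some $\kappa>0$, while \eq{ttauzeta0} forces $\zeta<\beta<1/\gamma$ and $\gamma\beta<\tau<1$. Pushing $\gamma\downarrow 1$ (subject to $\gamma<\sqrt{\zeta/\xi}$), $\beta\downarrow\zeta$, and $\tau\uparrow 1$ drives the available window toward $1-\zeta-2\zeta/d$, which is strictly positive iff $\zeta<d/(d+2)$; this exercise simultaneously determines the allowed range of $\zeta$ and pins down the $\gamma>1$ appearing in the conclusion. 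Once the exponents are fixed, Proposition~\ref{PropLT} immediately verifies the starting hypothesis \eq{initialconinduc9932}: on the event $\{H_{\La_{L_0}(x)}>E_0+A_0\}$ the operator $H_{\bom,\La_{L_0}(x)}$ has no eigenvalue in $\widetilde{J_\zeta(L_0)}=(E_0-A_0,E_0+A_0)$, so condition (ii) of Definition~\ref{defmIloc} is vacuous on that interval, while level spacing with gaps $\ge\e^{-L_0^\beta}$ at scale $L_0$ is supplied (up to harmless constants) by the Minami/Klein--Molchanov bound with probability at least $1-\e^{-L_0^\zeta}$ since $\beta>\zeta$.

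Theorem~\ref{thmMSA} then yields \eq{MSALnok2234} at every scale $L\ge L_0^\gamma$, with $A_\infty$ and $m_\infty$ given by the product formulas in \eq{Aminfty}. Substituting our $A_0$ and $m_0$ produces the explicit expressions in \eq{Aminfty23}, and the lower bounds $A_{\zeta,\infty}\ge\tfrac12 A_0$ and $m_{\zeta,\infty}\ge\tfrac12 m_0$ (with constant $C_{d,\frac{1}{9d}C_{d,\mu}}$) follow from the elementary estimate $\prod_{k\ge 0}(1-cL_0^{-\alpha\gamma^k})\ge 1/2$, valid for $L_0$ large because $\sum_{k\ge 0}L_0^{-\alpha\gamma^k}$ is dominated by a geometric tail once $\gamma>1$ and $\alpha\in\{\kappa,\vrho\}$ is positive. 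The final sentence, asserting the conclusions of Theorem~\ref{thmloc} and Corollary~\ref{corloc} on $J_{\zeta,\infty}$, then requires no further argument, since the hypothesis of those results is precisely the output of Theorem~\ref{thmMSA} just established.
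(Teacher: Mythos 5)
Your proposal is correct and follows essentially the same route as the paper: feed the Lifshitz-tails initial estimate (Proposition~\ref{PropLT}) into Theorem~\ref{thmMSA} with $E=E_0$, $A_0=C_{d,\mu}L_0^{-2\zeta/d}$, $m_0=\tfrac{1}{9d}C_{d,\mu}L_0^{-2\zeta/d}$, $m_-=\tfrac{1}{9d}C_{d,\mu}$, $\kappa'=2\zeta/d$, and then check that the exponent constraints \eq{ttauzeta0}--\eq{gamtzetabeta2} can be met precisely when $\zeta<d/(d+2)$ by pushing $\gamma\downarrow 1$, $\beta\downarrow\zeta$, $\tau\uparrow 1$. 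Your explicit verification of the level-spacing side of the starting hypothesis and of the lower bounds $A_{\zeta,\infty}\ge\tfrac12 A_0$, $m_{\zeta,\infty}\ge\tfrac12 m_0$ via the product estimate is, if anything, slightly more detailed than the paper, which treats these as immediate.
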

 
 \subsection{Fixed interval}
We may also use disorder to start the eigensystem multiscale analysis in a fixed interval at the bottom of the the spectrum. To do so we introduce a disorder parameter $g>0$, and
set $H_{g,\bom}=  -\Delta + gV_{\bom} $.  We assume
$\set{0}\in \supp \mu \subset [0,\infty)$, so it follows from \eq{Sigma} that $E_0= -2d$.
Then, given $B>0$ and $\zeta \in (0,1)$,
\begin{align}\label{HD}
¥&\inf_{x\in \R^d} \P\set{H_{g,\La_L(x)}  \ge -2d + B  }\ge \inf_{x\in \R^d} \P\set{g\omega_x \ge B \sqtx{for all} x\in \La_L(x)} \\ \nn &  \qquad \ge \pa{1 - \mu([0,g^{-1}B))}^{(L+1)^d}\ge  \pa{1 - K (g^{-1}B)^{\alpha}}^{(L+1)^d}\\ \nn &  \qquad \ge 1 - (L+1)^dK (g^{-1}B)^{\alpha}\ge 1 - L^{-\zeta}
\qtx{for}  g \ge g_\zeta(L).
\end{align}
It follows that, given $\zeta \in (0,1)$, for  $ g \ge g_\zeta(L)$ and all $m>0$ we have 
\begin{align}\label{HD2}
\inf_{x\in \R^d} \P\set{\La_{L} (x) \sqtx{is}   (m,{[-2d,-2d+B)}) \text{-localizing for} \; H_{g,\bom}} \ge 1 -  \e^{-L^\zeta}.
\end{align}
Combining with  Theorem~\ref{thmMSA} we obtain the following theorem.

\begin{theorem}\label{thmfixedint} Let  $H_{g,\bom}$ be an Anderson model with disorder as above, and choose exponents as in \eq{ttauzeta0}-\eq{vsdef}. Then,
given $B>0$, let $J(B)= [-2d,-2d+B)$ and  pick $ 0<m \le \frac 1 2\log (1 +\frac B {4d})$.   Then, if 
$L_0$ is sufficiently large, for all $ L\ge L_0^\gamma$ and $g \ge g_\zeta(L_0)$  we have
  \begin{align} \label{MSALnok9898}
\inf_{x\in \R^d} \P\set{\La_{L} (x) \sqtx{is} ( m_\infty , {J_\infty(B)}, \pa{J_\infty(B)}^{L^\frac 1 \gamma})  \text{-localizing for} \; H_{g,\bom}} \ge 1 -  \e^{-L^\xi},
\end{align}
where
\begin{align}\nn
 A_\infty&=A_\infty (L_0)= B \prod_{k=0}^\infty \pa{1- L_0^{-\kappa\gamma^k}}, \quad J_\infty= J_\infty(L_0)=[-2d,-2d+{A_\infty}), \\ \label{Aminfty22}
  m_\infty&=m_\infty (L_0)= m  \prod_{k=0}^\infty \pa{1- C_{d,m_-} L_0^{-\vrho\gamma^k}}.   \end{align}
In particular, the conclusions of  Theorem~\ref{thmloc}  and Corollary~\ref{corloc} hold in the interval $J_{\infty}$. Moreover,  
$\lim_{L_0\to \infty} A_\infty (L_0) = B$ and $\lim_{L_0\to \infty} m_\infty (L_0) = m $.

\end{theorem}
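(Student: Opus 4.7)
The plan is to derive Theorem~\ref{thmfixedint} as a direct application of the eigensystem multiscale analysis of Theorem~\ref{thmMSA}, using the high-disorder estimate \eq{HD2} to supply the initial-scale hypothesis; the argument is essentially a verification of hypotheses rather than a new probabilistic or analytic estimate.

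First I would match parameters. Since $\set{0}\in\supp\mu\subset[0,\infty)$ forces $E_0=-2d$ by \eq{Sigma}, the symmetric interval $\tilde J(B)=(-2d-B,-2d+B)$ plays the role of $I_0=(E-A_0,E+A_0)$ with $E=-2d$ and $A_0=B$. The hypothesis $0<m\le\tfrac{1}{2}\log(1+\tfrac{B}{4d})$ is exactly the upper bound in \eq{upbm25552}. Because $B$ is fixed and independent of $L_0$, I may take $m_0=m$, $m_-=m$, and $\kappa^\pr=0$, so the lower bound in \eq{upbm25552} is trivial; the compatibility condition \eq{gamtzetabeta2} then reduces to $\kappa<\tau-\gamma\beta$, which is guaranteed by the exponent choice \eq{ttauzeta0}--\eq{vsdef}. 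For $g\ge g_\zeta(L_0)$ the high-disorder estimate \eq{HD2} reads
\[
\inf_{x\in\R^d}\P\set{\La_{L_0}(x)\sqtx{is}(m,J(B))\text{-localizing for }H_{g,\bom}}\ge 1-\e^{-L_0^\zeta},
\]
which, via the convention that $(m,J(B))$-localizing means $(m,\tilde J(B))$-localizing, is precisely the starting hypothesis \eq{initialconinduc9932} of Theorem~\ref{thmMSA} with $I_0=\tilde J(B)$.

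Invoking Theorem~\ref{thmMSA} then yields \eq{MSALnok2}, and identifying $\widetilde{J_\infty(B)}=I_\infty=(-2d-A_\infty,-2d+A_\infty)$ gives the desired \eq{MSALnok9898}, with $A_\infty$ and $m_\infty$ as in \eq{Aminfty22}. The exponents $\kappa\gamma^k$ and $\vrho\gamma^k$ grow at least geometrically in $k$, so the infinite products converge absolutely, giving $\lim_{L_0\to\infty}A_\infty(L_0)=B$ and $\lim_{L_0\to\infty}m_\infty(L_0)=m$. The last clause, asserting the conclusions of Theorem~\ref{thmloc} and Corollary~\ref{corloc} on $J_\infty$, follows at once since those results are stated as consequences of the hypotheses of Theorem~\ref{thmMSA}, which we have just verified. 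There is no substantive obstacle here; the only thing one must keep straight is the bookkeeping between the one-sided spectral-edge interval $J(B)=[-2d,-2d+B)$ in the statement and the symmetric interval $\tilde J(B)$ required by Definition~\ref{defmIloc}, which is handled uniformly by the conventions introduced at the start of Section~\ref{secbottom}.
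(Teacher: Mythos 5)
Your proof is correct and follows the same route as the paper: apply the high-disorder initial estimate \eqref{HD2} to verify the starting hypothesis \eqref{initialconinduc9932} of Theorem~\ref{thmMSA} with $E=-2d$, $A_0=B$, $m_0=m$ (and, most simply, $m_-=m$, $\kappa^\pr=0$), then read off the conclusion after translating between the one-sided interval $J(B)$ and its symmetrization $\tilde J(B)$. The paper itself gives essentially no further detail beyond ``Combining with Theorem~\ref{thmMSA} we obtain the following theorem,'' and your bookkeeping fills this in accurately.
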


\section{Preamble to the eigensystem multiscale analysis}\label{secprep}

In the sections we introduce notation and prove lemmas that play an important role in the eigensystem multiscale analysis.  $H$ will always denote  a  discrete  Schr\"odinger operator
  $H=- \Delta +V$ on $\ell^2(\Z^d)$.

\subsection{Subsets, boundaries, etc.}

  Let  $\Phi \subset \Theta\subset \Z^d$.  We set the boundary, exterior boundary, and interior boundary of $\Phi$ relative to $\Theta$, respectively,  by
 \begin{align}\label{defbdry}
  \boldsymbol{ \partial}^{ \Theta} \Phi &=\set{(u,v) \in \Phi\times \pa{\Theta\setminus \Phi}; \  \abs{u-v}=1},
   \\
 \partial_{\mathrm{ex}}^{ \Theta} \Phi &=\set{v \in\pa{\Theta\setminus \Phi}; \ (u,v) \ \in  \boldsymbol{ \partial}^{ \Theta} \Phi\qtx{for some}u \in \Phi},\notag
    \\
 \partial^{ \Theta}_{\mathrm{in}}\Phi &=\set{u \in {\Phi}; \ (u,v) \ \in  \boldsymbol{ \partial}^{ \Theta} \Phi\qtx{for some}v \in \Theta\setminus \Phi}.\notag
   \end{align}
   We let  
\beq \label{Ry}
R_y^{\partial^{ \Theta}_{\mathrm{in}}\Phi} = \dist \pa{y, \partial^{ \Theta}_{\mathrm{in}}\Phi} \qtx{for} y \in \Phi.
\eeq  
  Given $t\ge 1$, we set
\begin{align}\label{defLatTh} 
 \Phi^{\Th,t}& = \set{y\in \Phi;   \; \dist \pa{y,{ \Th}\setminus \Phi }> \fl{t}}, \quad 
 {\partial}_{\mathrm{in}}^{\Th,t} \Phi   = \Phi \setminus  \Phi^{\Th,t},\\ \notag   
 {\partial}^{\Th,t} \Phi  & =  {\partial}_{\mathrm{in}}^{\Th,t} \Phi \cup \partial_{\mathrm{ex}}^{ \Th} \Phi.
 \end{align}
 If $\Th=\Z^d$ we omit it from the notation, i.e., $ \Phi^{t}= \Phi^{\Z^d,t}$.
 If $\Phi= \La_L(x)$,  we write   $\La_L^{\Th,t}(x)= \pa{\La_L(x)}^{\Th,t}$.

Consider a box   $\La_L\subset \Th \subset \Z^d$.  Given $v \in \Th$, we let   $\hat{v} \in   \partial_{\mathrm{in}}^\Th { \La_L} $ be the unique $u \in   \partial_{\mathrm{in}}^\Th { \La_L} $
 such that $(u,v)\in \boldsymbol{\partial}^\Th { \La_L} $ if $v\in \partial_{\mathrm{ex}}^\Th { \La_L} $,
 and set $\hat v=0$ otherwise.
  For $ L\ge 2$  
 we have  
  \beq\label{bdryest}
\abs{\partial_{\mathrm{in}}^{ \Th} \La_L }\le\abs{\partial_{\mathrm{ex}}^{ \Th} \La_L } =\abs{ \boldsymbol{ \partial}^{ \Th} \La_L }\le s_{d} L^{d-1}, \qtx{where} s_d= 2^{d} d.
\eeq

If $\Phi \subset \Theta\subset \Z^d$, \  
$
H_{ \Theta}= H_{ \Phi}\oplus H_{ \Theta\setminus  \Phi} + \Gamma_{ \boldsymbol{ \partial}^{ \Theta}  \Phi}$
on $\ell^2( \Theta)=\ell^2( \Phi)\oplus \ell^2( \Theta\setminus \Phi)$, 
\beq\label{Hdecomp1}
 \text{where}\quad \Gamma_{   \boldsymbol{ \partial}^{ \Theta}  \Phi}(u,v)=
\begin{cases}
-1 & \text{if either}\  (u,v) \sqtx{or}(v,u) \in   \boldsymbol{ \partial}^{ \Theta}  \Phi\\
\ \ 0 & \text{otherwise}
\end{cases}.
\eeq

\subsection{Lemmas for energy intervals}

\begin{lemma}\label{lementirefn}   Given $t>0$ and $\lambda \in \R$, let  $F_{t,\lambda}(z)$ be the entire function given by
  \beq\label{defanf}
  F_{t,\lambda} (z) =  \frac{1-\e^{-t(z^2-\lambda^2)}}{z-\lambda} \qtx{for} z \in \C\setminus \set{\lambda} \qtx{and} F_{t,\lambda}(\lambda)=2t\lambda.
  \eeq
Then, given  $\Phi \subset \Z^d$,   for all $x,y \in \Phi$ we have 
 \begin{align}\label{entkernel}
\abs{\scal{\delta_x,F_{t,\lambda}(H_\Phi)\delta_y}} \le  \inf_{\eta >0} \tfrac{70}{\sqrt{\eta^2+\lambda^2}} \e^{t(\eta^2+\lambda^2)}  \e^{-\pa{\log \pa{1 + \frac \eta {4d}}}\abs{x-y}} .
\end{align}
In particular, if $\lambda \in I=(E-A,E+A)$,  where  $A>0$ and   $E\in \R$,   and
\beq\label{mag}
0< m \le \tfrac 1 2 \log \pa{1 + \tfrac A {4d}},
\eeq
it follows that  for all  $x,y \in \Phi,\; x\ne y$, we  have 
\begin{align}
\label{FHestxy}
\abs{\scal{\delta_x,F_{\frac{m \abs{x-y}}{A^2},\lambda-E}(H_\Phi-E)\delta_y}}
\le 70 {A}^{-1} \e^{- m h_{I}(\lambda) \abs{x-y}}.
\end{align}
\end{lemma}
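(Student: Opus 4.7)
The plan is to combine the Combes--Thomas resolvent estimate with a contour-integral representation of $F_{t,\lambda}(H_\Phi)$ tailored to the special structure of $F_{t,\lambda}$. Fix $\eta>0$ and set $\alpha:=\log(1+\eta/(4d))$. Conjugating $H_\Phi$ by a weight $\e^{\alpha n(\cdot)}$ with $n$ a $1$-Lipschitz function adapted to the pair $(x,y)$ gives the perturbation bound $\norm{\e^{\alpha n}(-\Delta)\e^{-\alpha n}-(-\Delta)}\le 2d(\e^\alpha-1)=\eta/2$, from which the standard Combes--Thomas estimate follows:
\[
\abs{\scal{\delta_x,(z-H_\Phi)^{-1}\delta_y}}\le \tfrac{2}{\eta}\,\e^{-\alpha\abs{x-y}}\qtx{whenever} \abs{\Ima z}=\eta.
\]
Since $F_{t,\lambda}$ is entire and $H_\Phi$ is bounded self-adjoint, the Riesz--Dunford calculus gives $F_{t,\lambda}(H_\Phi)=\tfrac{1}{2\pi i}\oint_\Gamma F_{t,\lambda}(z)(z-H_\Phi)^{-1}\,dz$ for any contour $\Gamma$ enclosing $\sigma(H_\Phi)$.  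The crucial observation is that splitting $F_{t,\lambda}(z)=\tfrac{1}{z-\lambda}-\tfrac{\e^{-t(z^2-\lambda^2)}}{z-\lambda}$ and using partial fractions yields $\oint_\Gamma (z-\lambda)^{-1}(z-H_\Phi)^{-1}\,dz=0$: the residues at $z=\lambda$ and at each $\mu\in\sigma(H_\Phi)$ cancel exactly.  This cancellation is essential---it replaces the non-integrable $O(1/\abs{z})$ tail of $F_{t,\lambda}(z)$ by an integrand with Gaussian decay in $\Rea z$.

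Next I would deform $\Gamma$ to the boundary of $[-R,R]\times[-\eta,\eta]$ and let $R\to\infty$; the vertical sides contribute $O(R^{-1}\e^{-tR^2})\to 0$ thanks to the factor $\e^{-tz^2}$.  Combined with the Combes--Thomas bound and the identity $\abs{\e^{-t(z^2-\lambda^2)}/(z-\lambda)}=\e^{t(\eta^2+\lambda^2)}\e^{-t\xi^2}/\sqrt{(\xi-\lambda)^2+\eta^2}$ on $\Ima z=\pm\eta$, this yields
\[
\abs{\scal{\delta_x,F_{t,\lambda}(H_\Phi)\delta_y}}\le \tfrac{2\,\e^{-\alpha\abs{x-y}}\,\e^{t(\eta^2+\lambda^2)}}{\pi\eta}\int_\R \tfrac{\e^{-t\xi^2}}{\sqrt{(\xi-\lambda)^2+\eta^2}}\,d\xi.
\]
The principal analytic obstacle is to bound the scalar integral on the right by $\tfrac{C\eta}{\sqrt{\eta^2+\lambda^2}}$, with $C$ small enough to yield the stated constant $70$.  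I would approach this via Parseval's theorem, which produces a closed form for $\int_\R \e^{-t\xi^2}/((\xi-\lambda)^2+\eta^2)\,d\xi$ in terms of an $\mathrm{erfc}$ of a complex argument; Cauchy--Schwarz then gives the desired square-root estimate, but a case analysis balancing the Gaussian width $1/\sqrt t$ against the Lorentzian scale $\eta$ and the shift $\abs{\lambda}$ (with the regime $\eta\sqrt{t}\to 0$ being the delicate one) is needed to obtain the universal constant.

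Finally, for the specialized inequality I would take $\eta=A$ and $t=m\abs{x-y}/A^2$ in the general bound.  Then $\sqrt{\eta^2+(\lambda-E)^2}\ge A$ provides the $A^{-1}$ factor, while the hypothesis $m\le \tfrac12 \log(1+A/(4d))$ yields $\alpha=\log(1+A/(4d))\ge 2m$, so
\[
t(\eta^2+(\lambda-E)^2)-\alpha\abs{x-y}\le m\abs{x-y}\pa{1+\tfrac{(\lambda-E)^2}{A^2}}-2m\abs{x-y}=-m\,h_{I}(\lambda)\,\abs{x-y},
\]
which yields the claimed bound $70\,A^{-1}\,\e^{-m\,h_{I}(\lambda)\,\abs{x-y}}$.
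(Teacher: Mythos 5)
Your strategy — split $F_{t,\lambda}(z)=\frac{1}{z-\lambda}-\frac{\e^{-t(z^2-\lambda^2)}}{z-\lambda}$, drop the first piece via the residue cancellation, deform the Riesz--Dunford contour to $\Ima z=\pm\eta$, and apply Combes--Thomas to the resolvent pointwise on the contour — is well set up through the contour deformation, and the residue cancellation and the vanishing of the vertical sides are both correct. The specialization at the end (taking $\eta=A$, $t=m\abs{x-y}/A^2$, $\alpha\ge 2m$) is also the same computation the paper performs. But the key scalar estimate you need,
\[
\int_\R \frac{\e^{-t\xi^2}}{\sqrt{(\xi-\lambda)^2+\eta^2}}\,d\xi \;\le\; \frac{C\,\eta}{\sqrt{\eta^2+\lambda^2}},
\]
is \emph{false} for a universal $C$. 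Substituting $u=\sqrt{t}\,\xi$ at $\lambda=0$ gives $\int_\R \e^{-u^2}(u^2+t\eta^2)^{-1/2}\,du$, which diverges like $\log\frac{1}{t\eta^2}$ as $t\to 0$, while your target is independent of $t$. This is the familiar logarithmic loss of resolvent-based contour arguments: the Combes--Thomas bound $O(1/\eta)$ is sharp pointwise on $\Ima z=\pm\eta$, yet the function $\frac{\e^{-t(z^2-\lambda^2)}}{z-\lambda}$ is not integrable against $1/\eta$ uniformly in $t$. Since the lemma is invoked in the paper with $t=m\abs{x-y}/A^2$ where $m$ can tend to zero, the small-$t$ regime cannot be excluded.

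The paper sidesteps the resolvent entirely. It first shows the clean strip bound $\sup_{\abs{\Ima z}\le\eta}\abs{F_{t,\lambda}(z)}\le(\sqrt 3+1)\,\e^{t(\eta^2+\lambda^2)}/\sqrt{\eta^2+\lambda^2}$ — by splitting into $\abs{z-\lambda}\gtrless c\sqrt{\eta^2+\lambda^2}$ and, in the near regime, exploiting the factorization $z^2-\lambda^2=(z-\lambda)(z+\lambda)$ so the apparent pole is cancelled — and then applies a Combes--Thomas estimate for \emph{analytic functions bounded in a strip} (the cited theorem of Aizenman--Graf), which bounds $\abs{\scal{\delta_x,F(H_\Phi)\delta_y}}$ directly by $18\sqrt2\cdot\sup_{\abs{\Ima z}\le\eta}\abs{F}\cdot\e^{-\alpha\abs{x-y}}$ with no intermediate integral. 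That device, rather than the resolvent contour integral, is the missing ingredient: you need a kernel estimate that takes the sup of $\abs{F}$ on the strip as input, not a pointwise resolvent bound to be integrated along the contour.
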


\begin{proof}  Given $t>0$ and $\lambda \in \R$,
the function $F_{t,\lambda}(z)$  defined in \eq{defanf} is clearly
 an entire function.  Moreover, given $\eta>0$, if $\abs{\Ima z} \le \eta$ and $c>0$ we have,  ,
\begin{align}
¥\abs{ F_{t,\lambda} (z)} 	\le  \begin{cases}  \frac{\e^{t(\eta^2+\lambda^2)}+1}{c\sqrt{\eta^2 + \lambda^2}}\le   \frac{2\e^{t(\eta^2+\lambda^2)}}{c\sqrt{\eta^2 + \lambda^2}}  & \text{if}  \;\abs{z- \lambda} \ge c\sqrt{\eta^2 + \lambda^2} \\
 \frac{ (c+2)\sqrt{\eta^2 + \lambda^2}\pa{\e^{t(\eta^2+\lambda^2)}-1}}{\eta^2+\lambda^2}\le \frac{ (c+2)\e^{t(\eta^2+\lambda^2)}}{\sqrt{\eta^2 + \lambda^2}} & \text{if}\;   \abs{z- \lambda} <c\sqrt{\eta^2 + \lambda^2} 
\end{cases},
\end{align}¥ 
 so we conclude that, taking $c=\sqrt{3}-1$,
 \begin{align}\label{Feta}
 ¥ F_{t,\lambda,\eta}=\sup_{\abs{\Ima z} \le \eta} \abs{ F_{t,\lambda} (z)} \le (\sqrt{3}+1) \frac{\e^{t(\eta^2+\lambda^2)}}{\sqrt{\eta^2 + \lambda^2}}.
  \end{align}¥
  
Given  $\Phi \subset \Z^d$,  it follows from  \cite[Theorem 3]{AG} (note that it applies also for $H_\Phi$ on $\ell^2(\Phi)$), that  for all  $x,y\in\Phi $ we have
  \begin{align}
\abs{\scal{\delta_x,F_{t,\lambda}(H_\Phi)\delta_y}}&\le 18\sqrt 2 F_{t,\lambda,\eta} \e^{-\pa{\log \pa{1 + \frac \eta {4d}}}\abs{x-y}}\\ \notag & \le   \tfrac{70}{\sqrt{\eta^2+\lambda^2}} \e^{t(\eta^2+\lambda^2)}  \e^{-\pa{\log \pa{1 + \frac \eta {4d}}}\abs{x-y}} \mqtx{for all} \eta >0.  
\end{align}

To prove \eq{FHestxy},  we take $E=0$ by replacing the potential $V$ by $V-E$, and note that \eq{entkernel} holds for any discrete Schr\"odinger operator $H$.   Now let $\lambda \in I=(-A,A)$, where $A>0$,  and $m$ as in \eq{mag}, and  fix $x,y \in \Phi$, $x\ne y$.
 Since
\begin{align}
 ¥\log \pa{1 + \tfrac \eta {4d}} -\tfrac  {m}{{A}^2}\pa{\eta^2+ \lambda^2}&= \log \pa{1 + \tfrac \eta {4d}} - {m}\pa{\tfrac{\eta^2}{{A}^2}+1}+ m h_{I}(\lambda),
  \end{align}¥
   choosing $\eta=A$, and using \eq{mag}, we obtain
   \begin{align}
 ¥\log \pa{1 + \tfrac A {4d}} -\tfrac  {m}{{A}^2}\pa{A^2+ \lambda^2}&= \log \pa{1 + \tfrac A{4d}} - 2{m}+ m h_{IA}(\lambda)\ge  m h_{I}(\lambda),
  \end{align}¥
so   \eq{FHestxy} follows from \eq{entkernel} by taking  $t=\frac{m \abs{x-y}}{A^2}$ and $\eta=A$.
\end{proof}

  \begin{lemma}\label{lemkey}   Let  $\Theta\subset \Z^d$, and  let $\psi\colon{\Theta} \to \C$ be a generalized eigenfunction for $H_{\Theta}$ with generalized eigenvalue $\lambda \in \R$.  Let $\Phi \subset \Theta$ be a finite set such that $\lambda \notin \sigma(H_\Phi)$. 
  Let $A>0$,  $E\in \R$, $I=(E-A,E+A)$.  The following holds for all $y\in \Phi$:
  \begin{enumerate}
\item

 For all $t>0$ we have
 \begin{align}\label{eq:2terms}
  {\psi(y)} &= \scal { \e^{-t \pa{\pa{H_{\Phi}-E}^2-(\lambda-E)^2 }}\delta_y,\psi}   -  \scal {F_{t,\lambda-E} (H_\Phi-E)\delta_y,\Gamma_{ \boldsymbol{ \partial}^{ \Theta}  \Phi} \psi} ,
  \end{align}
 where $\Gamma_{ \boldsymbol{ \partial}^{ \Theta}  \Phi}$ is defined in \eq{Hdecomp1} and
  $F_{t,\lambda}(z)$ is the function defined in  \eq{defanf}.  
 
\item  Let  $0<R \le R_y^{\partial^{ \Theta}_{\mathrm{in}}\Phi}$ and $m$ as in \eq{mag}. For  $\lambda \in I$ it follows that
\begin{align}
\label{FHestRy}
¥\abs{\scal {F_{\frac{m R}{A^2},\lambda-E} (H_\Phi-E)\delta_y,\Gamma_{ \boldsymbol{ \partial}^{ \Theta}  \Phi} \psi} }
\le 70 \abs{\boldsymbol{ \partial}^{ \Theta} \Phi }{A}^{-1} \e^{- m h_{I}(\lambda) R}\abs{\psi(v)},
\end{align}¥
for some $v \in \partial_{\mathrm{ex}}^{ \Theta} \Phi$.

\end{enumerate}
  \end{lemma}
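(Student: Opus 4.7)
The plan for part (i) is to exploit the operator identity
\[ F_{t,\lambda-E}(H_\Phi - E)(H_\Phi - \lambda) \;=\; I - \e^{-t\pa{(H_\Phi-E)^2 - (\lambda-E)^2}} \]
on $\ell^2(\Phi)$, which is immediate from the definition of $F_{t,\lambda-E}$ via the finite-dimensional functional calculus. Applied to $\delta_y$ and paired with $\psi$, using self-adjointness of functions of $H_\Phi$, this gives
\[ \psi(y) \;=\; \scal{\e^{-t((H_\Phi-E)^2 - (\lambda-E)^2)}\delta_y,\,\psi} \,+\, \scal{(H_\Phi - \lambda)F_{t,\lambda-E}(H_\Phi - E)\delta_y,\,\psi}. \]
To convert the second term, I set $\vphi := F_{t,\lambda-E}(H_\Phi - E)\delta_y \in \ell^2(\Phi)$, which has finite support in $\Theta$, and use the decomposition $H_\Theta = H_\Phi \oplus H_{\Theta\setminus\Phi} + \Gamma_{\boldsymbol{\partial}^\Theta\Phi}$ from \eq{Hdecomp1} to write $(H_\Phi - \lambda)\vphi = (H_\Theta - \lambda)\vphi - \Gamma_{\boldsymbol{\partial}^\Theta\Phi}\vphi$. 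The generalized eigenfunction property \eq{pointeig}, applied to $\vphi$, annihilates the $(H_\Theta-\lambda)\vphi$ contribution, and self-adjointness of $\Gamma_{\boldsymbol{\partial}^\Theta\Phi}$ transfers the remainder onto $\psi$, producing \eq{eq:2terms}.

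For part (ii), I will expand the boundary inner product using the definition of $\Gamma_{\boldsymbol{\partial}^\Theta\Phi}$ in \eq{Hdecomp1} as
\[ \scal{\vphi,\,\Gamma_{\boldsymbol{\partial}^\Theta\Phi}\psi} \;=\; -\sum_{(u,v)\in\boldsymbol{\partial}^\Theta\Phi}\overline{\vphi(u)}\,\psi(v), \]
take absolute values, and dominate the sum by $\abs{\boldsymbol{\partial}^\Theta\Phi}\,\max_{u\in\partial^\Theta_{\mathrm{in}}\Phi}\abs{\vphi(u)}\cdot\abs{\psi(v)}$ with $v \in \partial^\Theta_{\mathrm{ex}}\Phi$ a maximizer of $\abs{\psi(v)}$ over pairs that actually appear. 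The heart of the argument is then the bound on $\max_u\abs{\vphi(u)}$ for the specific value $t = mR/A^2$. I will invoke \eq{entkernel} (translated by $E$) with the joint choice $\eta = A$: the hypothesis \eq{mag} gives $\log(1 + \eta/(4d)) \ge 2m$, while the identity $(\lambda-E)^2/A^2 = 1 - h_I(\lambda)$ for $\lambda\in I$ recasts the prefactor exponent as $(mR/A^2)(A^2 + (\lambda-E)^2) = mR(2 - h_I(\lambda))$. The resulting estimate reads
\[ \abs{\vphi(u)} \;\le\; \tfrac{70}{A}\,\e^{-m h_I(\lambda)R}\,\e^{2m(R - \abs{u-y})}, \]
and since $u \in \partial^\Theta_{\mathrm{in}}\Phi$ forces $\abs{u-y} \ge R_y^{\partial^\Theta_{\mathrm{in}}\Phi} \ge R$ by hypothesis, the second exponential is at most $1$. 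Combining with the boundary count yields \eq{FHestRy}.

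No conceptual obstacle is anticipated; the argument is a resolvent identity in which the entire function $F_{t,\lambda-E}$ plays the role of $(H_\Phi - \lambda)^{-1}$, so no quantitative separation of $\lambda$ from $\sigma(H_\Phi)$ enters (the hypothesis $\lambda\notin\sigma(H_\Phi)$ is only used to guarantee the statement is not vacuous). The delicate point, and the origin of the modulating function $h_I$, is the calibration of parameters in part (ii): the choices $t = mR/A^2$ and $\eta = A$ must be made \emph{simultaneously} so that the factor $\e^{t(\lambda-E)^2}$ in \eq{entkernel}, which would naively dilute the nominal rate $2m$, is exactly absorbed into a clean decay at rate $m h_I(\lambda)$, and the size condition \eq{mag} on $m$ is precisely what is needed to offset the loss from this prefactor.
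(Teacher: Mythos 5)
Your proof is correct and follows essentially the same route as the paper. Part (i) is an algebraic reorganization of the paper's argument (which first proves $\scal{(H_\Phi - H_\Theta)(H_\Phi-\lambda)^{-1}\vphi,\psi}=\scal{\vphi,\psi}$ and then substitutes $\vphi=(1-\e^{-t((H_\Phi-E)^2-(\lambda-E)^2)})\delta_y$; your factorization identity for $F_{t,\lambda-E}$ sidesteps the explicit resolvent, making the hypothesis $\lambda\notin\sigma(H_\Phi)$ indeed superfluous, as you note), and part (ii) carefully carries out the calibration $t=mR/A^2$, $\eta=A$ together with the inequality $\abs{u-y}\ge R$ for $u\in\partial^\Theta_{\mathrm{in}}\Phi$ — precisely the step the paper compresses into ``\eqref{FHestRy} follows from \eqref{FHestxy}'' (note \eqref{FHestxy} is stated with $t=m\abs{x-y}/A^2$ varying with $x$, so your explicit verification that the fixed choice $t=mR/A^2$ still yields the decay is genuinely needed and correct).
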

  
  \begin{proof}We take $E=0$ by replacing the potential $V$ by $V-E$.
   By hypothesis we have $\lambda \notin \sigma(H_\Phi)$ and
 \beq
 \scal{(H_\Th -\lambda)\vphi, \psi}=0 \qtx{for all} \vphi \in \ell^2(\Phi),
  \eeq
  so
   \beq \label{pointeig33459} 
 \scal{(H_\Phi - H_\Theta) (H_\Phi -\lambda)^{-1}\vphi, \psi}=\scal{\vphi, \psi} \qtx{for all} \vphi \in \ell^2(\Phi).
  \eeq
  It follows that for all $y \in \Phi$ and $t>0$  we have
  \begin{align}
  ¥{\psi(y)} &= \scal {\delta_y,\psi}\\
\notag &=\scal { \e^{-t \pa{H_{\Phi}^2-\lambda^2 }}\delta_y,\psi}  +\scal {(H_\Phi - H_\Theta) (H_\Phi -\lambda)^{-1}\pa{1-  \e^{-t \pa{H_{\Phi}^2-\lambda^2}} }\delta_y,\psi}  \\ 
\notag &=\scal { \e^{-t \pa{H_{\Phi}^2-\lambda^2 }}\delta_y,\psi}   -  \scal {F_{t,\lambda} (H_\Phi)\delta_y,\Gamma_{ \boldsymbol{ \partial}^{ \Theta}  \Phi} \psi} ,
  \end{align}¥
  where $\Gamma_{ \boldsymbol{ \partial}^{ \Theta}  \Phi}$ is defined in \eq{Hdecomp1} and
  the function $F_{t,\lambda}(z)$ is defined in \eq{defanf}.

Let $0<R \le R_y^{\partial^{ \Theta}_{\mathrm{in}}\Phi}$,      $m$ as in \eq{mag}, and assume   $\lambda \in I=(-A,A)$, $A>0$.   Recalling \eq{Hdecomp1},  \eq{FHestRy} follows from \eq{FHestxy}.
    \end{proof}

\begin{lemma} \label{lemkey2}
 Let $\Phi \subset \Z^d$,  $I=(E-A,E+A)$,  where  $A>0$ and   $E\in \R$, and $\lambda \in I$. Then for all  $t>0$ we have
\begin{align}
\norm{ \e^{-t \pa{\pa{H_{\Phi}-E}^2-(\lambda-E)^2 }} \Chi_{\R \setminus I}(H_\Phi)} \le \e^{-t A^2 h_{I}(\lambda)}.
\end{align}
\end{lemma}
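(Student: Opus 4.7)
The plan is a direct application of the spectral theorem to the self-adjoint operator $H_\Phi$. Both $\e^{-t((H_\Phi-E)^2 - (\lambda-E)^2)}$ and the spectral projection $\Chi_{\R\setminus I}(H_\Phi)$ are bounded Borel functions of $H_\Phi$, so they commute and their product is the functional calculus applied to
\[
g(s) = \e^{-t((s-E)^2 - (\lambda-E)^2)}\,\Chi_{\R\setminus I}(s).
\]
Consequently the operator norm is bounded by $\sup_{s \in \R\setminus I} \e^{-t((s-E)^2 - (\lambda-E)^2)}$, and it remains to establish a scalar inequality.

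For $s \notin I=(E-A,E+A)$ one has $|s-E| \ge A$, hence $(s-E)^2 \ge A^2$, and therefore
\[
(s-E)^2 - (\lambda-E)^2 \;\ge\; A^2 - (\lambda-E)^2 \;=\; A^2 h_I(\lambda),
\]
the last identity being the definition $h_I(\lambda) = 1 - \pa{\tfrac{\lambda-E}{A}}^2$ on $I$ from \eqref{eq:h_I}. Since $t > 0$, exponentiating yields the claimed bound $\e^{-t A^2 h_I(\lambda)}$.

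There is no real technical obstacle here; the content of the lemma is the clean matching of constants $A^2 h_I(\lambda) = \dist\pa{(\lambda-E)^2,\set{(s-E)^2:s\notin I}}$. This is precisely the quantitative ingredient that allows Lemma~\ref{lemkey} to extract decay from the first term in the expansion \eqref{eq:2terms}, controlling the contribution of the spectral subspace of $H_\Phi$ associated with eigenvalues outside $I$ --- a subspace about which the $(m,I)$-localization hypothesis supplies no direct spatial information. The modulating function $h_I$ that appears in Definition~\ref{defmIloc} is thus naturally forced by this estimate to match the decay produced by Lemma~\ref{lementirefn}.
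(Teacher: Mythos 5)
Your proof is correct and uses exactly the same approach as the paper: bound the operator norm by the supremum of the scalar function $g(s)=\e^{-t((s-E)^2-(\lambda-E)^2)}$ over $s\notin I$, use $(s-E)^2\ge A^2$ there, and recognize $A^2-(\lambda-E)^2 = A^2 h_I(\lambda)$. The paper's proof is the same computation written in a single line.
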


\begin{proof}We have
\begin{align}
¥\norm{ \e^{-t \pa{\pa{H_{\Phi}-E}^2-(\lambda-E)^2 }} \Chi_{\R \setminus I}(H_\Phi)} \le \e^{-t \pa{A^2 -(\lambda-E)^2}}= \e^{-t A^2 h_{I}(\lambda)}.
\end{align}¥
\end{proof}

\subsection{Lemmas for the multiscale analysis}

 Let  $I=(E-{A},E+{A})$ with $E\in \R$ and $A>0$, and  fix a constant $ m_->0$.  When we state that a box  $\La_\ell$ is   $(m,I)$-localizing  we always assume
\beq\label{upbm}
0<m_- \ell^{-\kappa^\pr} \le m   \le  \tfrac 1 2 \log \pa{1 + \tfrac {A}{4d}}.
\eeq

We also introduce the following notation:
\begin{itemize}
\item  Given $\Theta \subset \Z^d$ and $J \subset \R$, we set  $\sigma_J(H_\Theta)= \sigma(H_\Theta) \cap J$.

\item Let $\La_\ell \subset \Theta \subset \Z^d$ be an $(m,I)$-localizing box with an  $(m,I)$-localized eigensystem $\set{(\vphi_\nu, \nu)}_{\nu \in \sigma(H_{\La_\ell})}$, and let $t>0$.  Then, for $J\subset I$ we set
\beq
\sigma_{J}^{\Th,{t}}(H_{\La_\ell})= \set{\nu\in \sigma_{J} (H_{\La_\ell}); \; x_\nu \in    \La_\ell^{\Th,{t}}} .
\eeq

\end{itemize}

 The following lemmas plays an important role in our multiscale analysis. In particular,  the role of the modulating function $h_I$  becomes  transparent in the proof of Lemma~\ref{lemdecay2}.

\subsubsection{Localizing boxes}

  \begin{lemma}\label{lemdecay2}   Let $\psi\colon{\Theta}\subset \Z^d  \to \C$ be a generalized eigenfunction for $H_{\Theta}$ with generalized eigenvalue   $\lambda \in I_\ell$.
Consider a box  $ \La_\ell \subset{\Theta}$  such that $\La_\ell$ is   $(m,I)$-localizing   with   an $(m,I)$-localized eigensystem $\set{\vphi_\nu,\nu}_{\nu \in \sigma{(H_{\La_\ell})}}$.  Suppose
\beq\label{distpointeig}
\abs{\lambda - \nu}
\ge \tfrac 1 2\e^{-L^\beta}\qtx{for all} \nu \in \sigma_{I}^{\Th,\ell_\tau}(H_{\La_\ell}). 
\eeq  
 Then for  $\ell$ sufficiently large we have:
 
\begin{enumerate}
\item If  $y\in \La_\ell^{{\Theta},2\ell_\tau}$ we have
\beq\label{decayest00}
\abs{\psi(y)}\le \e^{- m_2 h_{ I}\pa{\lambda} \ell_\tau}  \abs{\psi(v)}\qtx{for some} v \in {\partial}^{\Th, 2\ell_\tau }  \Lambda_{\ell},
\eeq
where  
\beq\label{m414}
m_2=m_2(\ell) \ge  m\pa{1 - C_{d,m_-}\ell^{-(\tau - \gamma \beta -\kappa-\kappa^\pr)}}.
\eeq

\item If $y\in \La_\ell^{\Theta,\ell_{\ttau}}$,  we have
\beq\label{decayest12}
\abs{\psi(y)}\le  \e^{-m_3 h_{I}\pa{\lambda} {R_y^{\partial^{ \Theta}_{\mathrm{in}}\La_\ell}}}  \abs{\psi(v)}\qtx{for some} v \in {\partial}^{\Th, 2\ell_\tau }  \Lambda_{\ell},
\eeq
where
\beq\label{m4} 
m_3= m_3(\ell)\ge  m \pa{1 - C_{d,m_-}\ell^{-(\frac{1- \tau}2)}}.  
\eeq
\end{enumerate}
 
  \end{lemma}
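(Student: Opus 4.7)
The plan is to apply Lemma~\ref{lemkey} with $\Phi=\La_\ell$ and $t=mR/A^2$ for a scale-dependent $R\le R_y^{\partial^\Theta_{\mathrm{in}}\La_\ell}$, obtaining the identity $\psi(y)=T_1+T_2'$, where
\[
T_1=\langle e^{-t((H_{\La_\ell}-E)^2-(\lambda-E)^2)}\delta_y,\psi\rangle,\qquad T_2'=-\langle F_{t,\lambda-E}(H_{\La_\ell}-E)\delta_y,\Gamma_{\boldsymbol{\partial}^\Theta\La_\ell}\psi\rangle.
\]
Lemma~\ref{lemkey}(ii) directly controls $T_2'$ by a constant times $e^{-m h_I(\lambda) R}\,|\psi(v_0)|$ with $v_0\in\partial_{\mathrm{ex}}^\Theta\La_\ell$. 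For $T_1$, expand $\delta_y$ in the $(m,I)$-localized eigenbasis $\{\vphi_\nu\}$ of $H_{\La_\ell}$ and split the sum according to: (a)~$\nu\notin I$, handled by Lemma~\ref{lemkey2} which yields $\|e^{-t((H_{\La_\ell}-E)^2-(\lambda-E)^2)}\Chi_{\R\setminus I}(H_{\La_\ell})\|\le e^{-mR h_I(\lambda)}$; (b)~$\nu\in I$ with $x_\nu\in\La_\ell^{\Theta,\ell_\tau}$ (``deep''), where the generalized eigenfunction equation tested against $\vphi_\nu$ (extended by zero to $\Theta$) produces the identity $(\nu-\lambda)\langle\vphi_\nu,\psi\rangle=-\langle\Gamma_{\boldsymbol{\partial}^\Theta\La_\ell}\vphi_\nu,\psi\rangle$, combined with the level-spacing bound $|\nu-\lambda|^{-1}\le 2e^{L^\beta}$ from \eqref{distpointeig}; and (c)~$\nu\in I$ with $x_\nu\notin\La_\ell^{\Theta,\ell_\tau}$ (``shallow''), where we use that $y$ is deep. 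The heart of the argument is the observation that for $\nu\in I$ one has $(\nu-E)^2-(\lambda-E)^2=A^2(h_I(\lambda)-h_I(\nu))$, so the functional-calculus factor combines with the $(m,I)$-localization $|\vphi_\nu(\cdot)|\le e^{-m h_I(\nu)\delta}$ (for some distance $\delta\ge \ell_\tau$) as
\[
e^{-tA^2(h_I(\lambda)-h_I(\nu))}\cdot e^{-mh_I(\nu)\delta}= e^{-mR h_I(\lambda)}\cdot e^{mh_I(\nu)(R-\delta)},
\]
which eliminates the unknown $h_I(\nu)$ whenever $\delta\ge R$. This cancellation is precisely what produces the modulating factor $h_I(\lambda)$ in the final bound.

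For part (i), take $R=\ell_\tau$. In case~(c), the triangle inequality (since $y\in\La_\ell^{\Theta,2\ell_\tau}$ and $x_\nu$ lies within $\ell_\tau$ of $\Theta\setminus\La_\ell$) gives $\|y-x_\nu\|\ge\ell_\tau=R$, so the localization of $\vphi_\nu(y)$ supplies the required $\delta=\ell_\tau$. In case~(b), the localization of $\vphi_\nu$ on $\partial_{\mathrm{in}}^\Theta\La_\ell$ combined with level spacing does the same, with $\delta\ge \ell_\tau-1$. Summing yields $|\psi(y)|\le C_d\ell^{O(d)}e^{L^\beta}e^{-m\ell_\tau h_I(\lambda)}\bigl(\max_v|\psi(v)|+\sqrt{|\La_\ell|}\|\psi|_{\La_\ell}\|_\infty\bigr)$; the prefactor $\ell^{O(d)}e^{L^\beta}$ is absorbed into a fraction of the exponent by using $m\ell_\tau h_I(\lambda)\ge m_-\ell^{\tau-\kappa-\kappa'}\gg\ell^{\gamma\beta}=L^\beta$ from \eqref{gamtzetabeta2}, and the standard ``take-the-max over $y\in\La_\ell^{\Theta,2\ell_\tau}$'' dichotomy yields \eqref{decayest00} with $m_2$ as in \eqref{m414}.

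For part (ii), take $R=R_y^{\partial^\Theta_{\mathrm{in}}\La_\ell}-\ell_\tau$. The key new ingredient is the refined triangle inequality
\[
\|y-x_\nu\|+\mathrm{dist}(x_\nu,\partial_{\mathrm{in}}^\Theta\La_\ell)\ge R_y^{\partial^\Theta_{\mathrm{in}}\La_\ell}
\]
valid for every $x_\nu\in\La_\ell$ (it is $\mathrm{dist}(y,\partial_{\mathrm{in}}^\Theta\La_\ell)$ rewritten via the triangle inequality). In case~(b) one now uses \emph{both} the decay of $\vphi_\nu$ at $y$ and at the interior boundary, so that the combined distance is $\delta\ge R_y>R$; in case~(c) one uses $\|y-x_\nu\|\ge R_y-\ell_\tau=R$ alone. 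Both cases thus yield per-term bound $e^{-mR h_I(\lambda)}$ up to the same polynomial and $e^{L^\beta}$ prefactors as before. Bounding $\|\psi|_{\La_\ell}\|$ via part~(i) by a constant times $\max_{v\in\partial^{\Theta,2\ell_\tau}\La_\ell}|\psi(v)|$, and using $R_y\ge\ell_{\ttau}$ with $\ttau-\tau=(1-\tau)/2$ to estimate the relative loss $\ell_\tau/R_y\le\ell^{-(1-\tau)/2}$, produces $m_3\ge m(1-C_{d,m_-}\ell^{-(1-\tau)/2})$ as in~\eqref{m4}. The main technical obstacle lies in case~(b) for ``intermediate'' $x_\nu$ (those with $\ell_\tau\le\mathrm{dist}(x_\nu,\partial_{\mathrm{in}}^\Theta\La_\ell)<R$): here neither distance alone is $\ge R$, and one must verify that the two exponential factors coming from $\vphi_\nu(y)$ and $\vphi_\nu$ on $\partial_{\mathrm{in}}$ combine through the triangle-inequality identity above \emph{without} leaving a residual $h_I(\nu)$-dependent positive term in the exponent.
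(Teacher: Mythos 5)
Your proof is correct and follows essentially the same route as the paper: decomposition via Lemma~\ref{lemkey}, the $P_I/\bar P_I$ split with Lemma~\ref{lemkey2} for the $\bar P_I$ piece, and the cancellation $(\nu-E)^2-(\lambda-E)^2=A^2\bigl(h_I(\lambda)-h_I(\nu)\bigr)$ against the modulated eigenfunction decay for the $P_I$ piece. Your flagged worry about intermediate $x_\nu$ in part~(ii) in fact disappears: whenever $\dist(x_\nu,\partialin^{\Theta}\La_\ell)<R$, the triangle inequality forces $\norm{y-x_\nu}>\ell_\tau$, so both decay bounds hold, the combined distance is at least $R_y^{\partial^\Theta_{\mathrm{in}}\La_\ell}>R$, and the residual factor $\e^{-m h_I(\nu)(\delta-R)}$ is at most one; the paper's version of part~(ii) shaves the rate ($m\to m_1^\pr$) rather than the distance ($R_y^{\partial^\Theta_{\mathrm{in}}\La_\ell}\to R_y^{\partial^\Theta_{\mathrm{in}}\La_\ell}-\ell_\tau$), but the two bookkeeping choices are interchangeable.
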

 
Lemma~\ref{lemdecay2}  resembles \cite[Lemma~3.5]{EK}, but there are important differences.  
 The box  $ \La_\ell \subset{\Theta}$   is   $(m,I)$-localizing, and hence we only have decay for eigenfunctions with eigenvalues in $I$.  Thus we can only use \eq{distpointeig}   for $\nu \in \sigma_{I}^{\Th,\ell_\tau}(H_{\La_\ell})$. To compensate, we take $\lambda \in I_\ell$,
 and use  Lemmas~\ref{lemkey} and \ref{lemkey2}.

 \begin{proof}[Proof of  Lemma~\ref{lemdecay2}]

We take $E=0$ by replacing the potential $V$ by $V-E$.  Given $y \in \La$, we write $\psi(y)$ as in \eq{eq:2terms}.

Setting  $P_{I}= \Chi_{I}\pa{H_ {\Lambda_{\ell}}}$ and $\bar P_{I}= 1-P_{I}$, 
we have
\begin{align}\label{Jdecomp}
\scal {\e^{-t \pa{H_{\La}^2-\lambda^2}}\delta_y,\psi} =   \scal {\e^{-t \pa{H_{\La}^2-\lambda^2}}P_{I}\delta_y,\psi} +  \scal {\e^{-t \pa{H_{\La}^2-\lambda^2}}\bar P_{I}\delta_y,\psi}.
\end{align}
It follows from Lemma~\ref{lemkey2} that
\begin{align}\label{barJest}
&\abs{\scal {\e^{-t \pa{H_{\La}^2-\lambda^2}}\bar P_{I}\delta_y,\psi}}\le \norm{\Chi_\La \psi}\norm{\e^{-t \pa{H_{\La}^2-\lambda^2}}\bar P_{I}}\le (\ell+1)^{\frac d 2} \e^{-t {A}^2 h_{I}(\lambda) }\abs{\psi(v)},
\end{align}
for some $v \in \La$.

We have
 \begin{align}
\scal {\e^{-t \pa{H_{\La}^2-\lambda^2}}\ P_{I}\delta_y,\psi}&= \sum_{\mu \in \sigma_{I}(\H_\La)}\e^{-t (\mu^2-\lambda^2)}{\vphi_\mu(y)} \scal{ \vphi_\mu,\psi}.
\end{align}

Let $y \in \La_\ell^{{\Theta},2\ell_\tau}$.  For $\mu \in \sigma_{I}(H_\La)$ we have, as shown in \cite[Eqs. (3.37) and (3.39)]{EK}, 
 \beq\label{337}
 \abs{{\vphi_\mu(y)}\scal{ \vphi_\mu,\psi}}\le
 2\ell^{\frac {3d}2} \e^{L^\beta} \e^{-m h_I (\mu){\ell_\tau}}\abs{\psi(v_1)}\mqtx{for some} v_1 \in \La_\ell \cup\partial_{\mathrm{ex}}^{{\Theta}}  \Lambda_{\ell}.
\eeq
It follows that
\begin{align}
¥\e^{-t (\mu^2-\lambda^2)}\abs{{\vphi_\mu(y)}\scal{ \vphi_\mu,\psi}}\le  2\ell^{\frac {3d}2} \e^{L^\beta}  \e^{-t (\mu^2-\lambda^2)}\e^{-m h_I (\mu){\ell_\tau}}\abs{\psi(v_1)}.
\end{align}¥

We now take 
\beq
t= \tfrac{m \ell_\tau}{{A}^2}  \quad \Longrightarrow \quad \e^{-t (\mu^2-\lambda^2)}\e^{-m h_I (\mu){\ell_\tau}}= \e^{-m h_I (\lambda){\ell_\tau}}\qtx{for} \mu \in I,
\eeq
obtaining
\begin{align}\label{Jest}
 \abs{\scal {\e^{-\tfrac{m \ell_\tau}{{A}^2} \pa{H_{\La}^2-\lambda^2}} P_{I}\delta_y,\psi}}&\le   2\ell^{\frac {3d}2} (\ell+1)^d  \e^{L^\beta} \e^{-m h_I (\lambda){\ell_\tau}}\abs{\psi(v)}\\
 \nn & \le  \e^{2L^\beta} \e^{-m h_I (\lambda){\ell_\tau}}\abs{\psi(v)},
\end{align}¥
for some  $v \in \La_\ell \cup\partial_{\mathrm{ex}}^{{\Theta}}  \Lambda_{\ell}$.
Combining \eq{Jdecomp}, \eq{barJest} and \eq{Jest} yields
\begin{align}\label{combJ}
¥\abs{\scal {\e^{-\tfrac{m \ell_\tau}{{A}^2}\pa{H_{\La}^2-\lambda^2}} \delta_y,\psi}}\le 2 \e^{2L^\beta}\e^{-m h_{I}(\lambda) \ell_\tau}\abs{\psi(v)},
\end{align}¥
for some  $v \in \La_\ell \cup\partial_{\mathrm{ex}}^{{\Theta}}  \Lambda_{\ell}$.

Using \eq{FHestRy}, noting  $y \in \La_\ell^{{\Theta},2\ell_\tau}$ implies  ${R_y^{\partial^{ \Theta}_{\mathrm{in}}\La_\ell}} \ge 2\ell_\tau -1>\ell_\tau$, we get 
   \begin{align}\label{eq:anpar36}
¥\abs{\scal {F_{\frac {m  \ell_\tau}{{A}^2},\lambda} (H_\La)\delta_y,\Gamma_{ \boldsymbol{ \partial}^{ \Theta}  \La} \psi} }\le 70 s_d \ell^{d-1}A^{-1} \e^{-m h_{I}(\lambda)\ell_\tau }\abs{\psi(v)}, \end{align}¥
for some  $v \in \partial^{ \Theta}_{\mathrm{ex}}\La$.

 Combining \eq{combJ} and \eq{eq:anpar36}, and using \eq{upbm}, we conclude that \begin{align}
\abs{\psi(y)}\le C_{d,m_-}\ell^{\kappa^\pr} \e^{2L^\beta} e^{-m h_{I}(\lambda) \ell_\tau}  \abs{\psi(v)}\le      \e^{-m_2 h_{ I}(\lambda)  \ell_\tau}  \abs{\psi(v)},
\end{align}¥
for some $v \in \La_\ell \cup\partial_{\mathrm{ex}}^{{\Theta}}  \Lambda_{\ell}$
where, using $\lambda \in I_{\ell}$,
\beq
m_2\ge m\pa{1 - C_{d,m_-}\ell^{-\pa{\tau - \gamma \beta -\kappa-\kappa^\pr}}}.
\eeq  
By repeating the argument as many times a necessary we can get  $v \in  {\partial}^{\Th, 2\ell_\tau }  \Lambda_{\ell}$.  This proves  part (i).

To prove part (ii), let  $y\in \La_\ell^{\Theta,\ell_{\ttau}}$, so ${R_y^{\partial^{ \Theta}_{\mathrm{in}}\La_\ell}}\ge \ell_{\ttau}$. We proceed as before, but replace \eq{337} by the following estimate. For $\mu \in \sigma_{I}^{\Th,{t}}(H_{\La_\ell})$  and $v^\pr \in \partialin^{{\Theta}}  \Lambda_{\ell}$, we have, as in \cite[Eq. (3.41)]{EK}, 
\beq\label{mipr}
\abs{{\vphi_\mu(y)}\vphi_\mu(v^\pr)}\le  \e^{-m_1^\pr   h_I(\mu) {R_y^{\partial^{ \Theta}_{\mathrm{in}}\La_\ell}}} \qtx{with}  m^\pr_1\ge   m(1-  2  \ell^{\frac {\tau -1}2}),
\eeq
so, as in \cite[Eq. (3.44)]{EK},
\begin{align}
\abs{{\vphi_\mu(y)}\scal{ \vphi_\mu,\psi}}\le 2 \e^{L^\beta} s_d\ell^{d-1}  \e^{-m_1^\pr   h_I(\mu) {R_y^\Theta}}\abs{\psi(v_1)}\le \e^{2L^\beta}   \e^{-m_1^\pr   h_I(\mu) {R_y^{\partial^{ \Theta}_{\mathrm{in}}\La_\ell}}}\abs{\psi(v_1)},
\end{align}¥
for some $v_1 \in\partialex^{{\Theta}}  \Lambda_{\ell}$.  If $\mu \in \sigma_{I} (H_{\La_\ell})$ with  $x_\mu \in  \partialin^{\Th,\ell_\tau} \La_\ell$, we have 
\beq\norm{x_\mu-y}\ge {R_y^{\partial^{ \Theta}_{\mathrm{in}}\La_\ell}} - \ell_{\tau} \ge   {R_y^{\partial^{ \Theta}_{\mathrm{in}}\La_\ell}}\pa{1- 2 \ell^{\tau- \ttau}}={R_y^{\partial^{ \Theta}_{\mathrm{in}}\La_\ell}}\pa{1- 2 \ell^{\frac {\tau-1}2}},
\eeq
 so 
\begin{align}
&\abs{{\vphi_\nu(y)} \scal{ \vphi_\nu,\psi}} \le \e^{-m   h_I(\mu) \norm{x_\mu-y}}\norm{\Chi_\La \psi} \\ \notag & \quad
\le \e^{-m   h_I(\mu) {R_y^{\partial^{ \Theta}_{\mathrm{in}}\La_\ell}}\pa{1- 2 \ell^{\frac {\tau-1}2}}} (\ell+1)^{\frac d 2} \abs{\psi(v_2)}
\le  (\ell+1)^{\frac d 2} \e^{-m_1^\pr   h_I(\mu) {{R_y^{\partial^{ \Theta}_{\mathrm{in}}\La_\ell}}}}  \abs{\psi(v_2)} ,
\end{align}
for some  $v_2 \in \La$, where $m_1^\pr$ is given in \eq{mipr}.
 It follows that for all $\mu \in \sigma_{I}(H_\La)$ we have
\begin{align}
¥\e^{-t (\mu^2-\lambda^2)}\abs{{\vphi_\mu(y)}\scal{ \vphi_\mu,\psi}}\le  \e^{2L^\beta} 
\e^{-t (\mu^2-\lambda^2)}  \e^{-m_1^\pr   h_I(\mu) {R_y^{\partial^{ \Theta}_{\mathrm{in}}\La_\ell}}}\abs{\psi(v)} ,
\end{align}¥
for some $ v\in \La\cup \in\partialex^{{\Theta}}  \Lambda$.

We now take  
\beq
t= \tfrac{m_1^\pr {R_y^{\partial^{ \Theta}_{\mathrm{in}}\La_\ell}}}{{A}^2}  \quad \Longrightarrow \quad \e^{-t (\mu^2-\lambda^2)}\e^{-m_1^\pr h_I (\mu){\ell_\tau}}= \e^{-m_1^\pr h_I (\lambda){\ell_\tau}}\qtx{for} \mu \in I,
\eeq
obtaining
\begin{align}\label{Jest32}
 \abs{\scal {\e^{-\tfrac{m_1^\pr {R_y^{\partial^{ \Theta}_{\mathrm{in}}\La_\ell}}}{{A}^2}  \pa{H_{\La}^2-\lambda^2}} P_{I}\delta_y,\psi}}&\le   (\ell+1)^d \e^{2L^\beta} \e^{-m_1^\pr h_I (\lambda){{R_y^{\partial^{ \Theta}_{\mathrm{in}}\La_\ell}}}}\abs{\psi(v)}\\ \notag & \le  \e^{3L^\beta} \e^{-m_1^\pr h_I (\lambda){{R_y^{\partial^{ \Theta}_{\mathrm{in}}\La_\ell}}}}\abs{\psi(v)},
\end{align}¥
for some  $v \in \La_\ell \cup\partial_{\mathrm{ex}}^{{\Theta}}  \Lambda_{\ell}$.
Combining \eq{Jdecomp}, \eq{barJest} and \eq{Jest32} yields
\begin{align}\label{noJest}
¥\abs{\scal {\e^{-\tfrac{m_1^\pr {R_y^{\partial^{ \Theta}_{\mathrm{in}}\La_\ell}}}{{A}^2} \pa{H_{\La}^2-\lambda^2}} \delta_y,\psi}}&\le 2 \e^{3L^\beta}\e^{-m_1^\pr  h_{I}(\lambda) {R_y^{\partial^{ \Theta}_{\mathrm{in}}\La_\ell}}}\abs{\psi(v)},
\end{align}¥
for some  $v \in \La_\ell \cup\partial_{\mathrm{ex}}^{{\Theta}}  \Lambda_{\ell}$.

Using \eq{FHestRy} (with $m=m_1^\pr$), we get
\begin{align}\label{eq:anpar99}
&\abs{\scal {F_{\frac {m_1^\pr{R_y^{\partial^{ \Theta}_{\mathrm{in}}\La_\ell}}}{{A}^2},\lambda} (H_\La)\delta_y,\Gamma_{ \boldsymbol{ \partial}^{ \Theta}  \La} \psi} }\le     70 s_d \ell^{d-1}A^{-1}\e^{-m_1^\pr  h_I(\lambda)  {R_y^{\partial^{ \Theta}_{\mathrm{in}}\La_\ell}}}\abs{\psi(v)},
\end{align}
for some  $v \in {\partial}_{\mathrm{ex}}^{\Th} \La$.   
We conclude  from \eq{noJest}  and   \eq{eq:anpar99} that
\begin{align}\label{decayest9999} 
\abs{\psi(y)}\le C_{d,m_-} \ell^{\kappa^\pr}  \e^{3L^\beta}\e^{-m_1^\pr h_{I}(\lambda) {R_y^{\partial^{ \Theta}_{\mathrm{in}}\La_\ell}}}\abs{\psi(v)}\le      \e^{-m_3 h_{ I}(\lambda)  {R_y^{\partial^{ \Theta}_{\mathrm{in}}\La_\ell}}}  \abs{\psi(v)},
\end{align}¥
for some $v \in \La_\ell \cup\partial_{\mathrm{ex}}^{{\Theta}}  \Lambda_{\ell}$
where, using $h_{ I}(\lambda)  \ge \ell^{-\kappa}$ since  $\lambda \in I_{\ell}$, we have 
\begin{align}
m_3 &\ge  m \pa{1 - C_{d,m_-}\ell^{-\min \set{\ttau -  \gamma \beta -\kappa-\kappa^\pr, \frac{1- \tau}2}}} = m \pa{1 - C_{d,m_-}\ell^{-(\frac{1- \tau}2)}}.
\end{align}

If $v \notin {\partial}^{\Th, 2\ell_\tau }  \Lambda_{\ell}$, we can apply \eq{decayest00} repeatedly until we get \eq{decayest9999} with $v \in{\partial}^{\Th, 2\ell_\tau }  \Lambda_{\ell}$.
\end{proof}

 \begin{lemma}\label{lem:ident_eigensyst} Let the finite set  $\Theta\subset\Z^d$ be $L$-level spacing for $H$, and  let  $\set{(\psi_\lambda,\lambda)}_{\lambda \in \sigma(H_\Th)}$ be an eigensystem for $H_\Th$.  
 
 Then the following holds for sufficiently large $L$:

 \begin{enumerate}

\item   Let   $\Lambda_\ell\subset \Th$  be an  $(m,I)$-localizing box with an $(m,I)$-localized eigensystem $\set{(\vphi_{\lambda}, \lambda)}_{\lambda\in \sigma(H_{\La_\ell})}$.

\begin{enumerate}
\item

There exists an injection    
\beq\label{injection}
\lambda\in  \sigma_{I_{2\ell }}^{\Th,{\ell_\tau}}(H_{\La_\ell}) \mapsto  \wtilde{\lambda}\in \sigma(H_\Th), 
\eeq
such that for all $\lambda\in    \sigma_{I_{2\ell}}^{\Th,{\ell_\tau}}(H_{\La_\ell}) $ we have
\begin{align}
\abs{ \wtilde{\lambda} -\lambda} \le \e^{-m_1h_I(\lambda){\ell_\tau}},\sqtx{with}  m_1=m_1(\ell)\ge  m\pa{1- C_{d,m_-} \tfrac {\log \ell}{\ell^{\tau-\kappa-\kappa^{\pr}}}},  \label{tildedist132}
\end{align}
and, redefining  ${\vphi_{\lambda}}$ so $\scal{\psi_{\wtilde{\lambda}},\vphi_{\lambda}}>0$,
\beq\label{difeq86}
\norm{\psi_{\wtilde{\lambda}}-{\vphi_{\lambda}}}\le 2 \e^{-m_1h_I(\lambda){\ell_\tau}}\e^{L^\beta}.
\eeq
\item Let
\beq
 \sigma_{\set{\La_\ell}}(H_\Th):= \set{ \wtilde{\lambda};\ \lambda\in  \sigma_{I_{2\ell}}^{\Th,{\ell_\tau}}(H_{\La_\ell})}.
\eeq
Then for 
 $\nu \in  \sigma_{\set{\La_\ell}}(H_\Th)$  we have
\beq \label{psidecout}
\abs{\psi_\nu(y)}\le  2 \e^{-m_1{(2\ell)^{-\kappa}}{\ell_\tau}}\e^{L^\beta} \qtx{for all} y\in \Th \setminus \La_\ell.
\eeq

\item If  $\nu \in \sigma_{I_{\ell}}(H_\Th)\setminus  \sigma_{\set{\La_\ell}}(H_\Th)$, we have
 \beq\label{distpointeiga}
 \abs{\nu - \lambda}\ge  \tfrac 12 \e^{-\L^\beta} \qtx{for all} \lambda \in  \sigma_{I}^{\Th,{\ell_\tau}}(H_{\La_\ell}),
 \eeq
and 
\beq
\abs{\psi_\nu(y)}\le \e^{-m_2 h_{I}(\nu){{\ell_\tau}}} \sqtx{for} y\in \La_\ell^{{\Theta},2{{\ell_\tau}}}, \sqtx{with} m_2=m_2(\ell) \;\text{as in \eq{m414}}.   \label{psidecgood}
\eeq
Moreover, if  $y\in \La_\ell^{{\Theta},\ell_{\ttau}}$ we have 
\beq\label{psidecgoodpr}
\abs{\psi_\nu(y)}\le \e^{-m_3 h_{I}(\nu) {R_y^{\partial^{ \Theta}_{\mathrm{in}}\La_\ell}}}\abs{\psi_\lambda(y_1)} \qtx{for some} y_1 \in{\partial}^{\Th,2\ell_{\tau }}  \Lambda_{\ell},
\eeq
 with $m_3=m_3(\ell)$ as is  in \eq{m4}.

  \end{enumerate}

\item    Let  $\set{\Lambda_\ell(a)}_{a\in \cG} $, where $\cG \subset \R^d$ and  $\Lambda_\ell(a)\subset \Th$ for all $a \in \cG$,   be a collection of   $(m,I)$-localizing boxes with  $(m,I)$-localized eigensystems    \\ $\set{(\vphi_{\lambda\up{a}}, \lambda\up{a})}_{\lambda\up{a}\in \sigma(H_{\La_\ell}(a))}$, and set  
\begin{align}\label{defcE}
 \cE_\cG^\Th(\lambda)&=  \set{{\lambda\up{a}}; \;a\in \cG,\,\lambda\up{a}\in  \sigma_{I_{2\ell}}^{\Th,{\ell_\tau}}(H_{\La_\ell}(a)), \,  \wtilde{\lambda}\up{a}=\lambda}\sqtx{for} \lambda\in \sigma(H_\Th), \\ \notag
\sigma_\cG(H_\Th)&=\set{ \lambda\in \sigma(H_\Th); \, \cE_\cG^\Th(\lambda)\ne \emptyset} =\textstyle\bigcup_{a\in \cG} \sigma_{\set{\La_\ell (a)}}(H_\Th).
\end{align}

\begin{enumerate}
\item 
Let $a,b \in \cG$, $a\ne b$,  Then, for  $\lambda\up{a}\in   \sigma_{I_{2\ell}}^{\Th,{\ell_\tau}}(H_{\La_\ell}(a))$ and $\lambda\up{b}\in   \sigma_{I_{2\ell}}^{\Th,{\ell_\tau}}(H_{\La_\ell}(b))$, 
\beq \label{vphilambda}
 {\lambda\up{a}},{\lambda\up{b}}\in \cE_\cG^\Th(\lambda) \quad \Longrightarrow \quad  \norm{x_{\lambda\up{a}}-x_{\lambda\up{b}}} < 2 {{\ell_\tau}}.
\eeq
As a consequence,
\beq\label{sigmaab}
\Lambda_\ell(a) \cap \Lambda_\ell(b)=\emptyset   \quad \Longrightarrow \quad \sigma_{\set{\La_\ell(a)}}(H_\Th)\cap \sigma_{\set{\La_\ell(b)}}(H_\Th)=\emptyset  .
\eeq
 
  \item  If  $\lambda \in\sigma_\cG(H_\Th)$, we have
  \beq \label{psidecout63}
\abs{\psi_\lambda(y)}\le  2 \e^{-m_1{(2\ell)^{-\kappa}}{\ell_\tau}}\e^{L^\beta}   \sqtx{for all} y\in\Th \setminus {\Th}_{\cG}, \sqtx{where}  {\Th}_{\cG}:= \bigcup_{a\in \cG}\La_\ell(a).
\eeq

 \item  If  $\lambda \in \sigma_{I_{\ell}}(H_\Th)\setminus\sigma_\cG(H_\Th)$, we have
\beq\label{psidecgood63}
\abs{\psi_\lambda(y)}\le\e^{-m_2 h_{I}(\lambda){{\ell_\tau}}}   \qtx{for all} y\in{\Th}_{\cG,\tau}:= \bigcup_{a\in \cG}\La_\ell^{{\Theta},2{{\ell_\tau}}}(a).
\eeq
 \end{enumerate}
\end{enumerate} 

\end{lemma}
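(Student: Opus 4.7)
The plan is to treat the $(m,I)$-localized eigenpairs $(\vphi_\lambda,\lambda)$ of $H_{\La_\ell}$, extended by zero from $\La_\ell$ to $\Th$, as approximate eigenpairs for $H_\Th$, and then to use level spacing of $H_\Th$ to convert these approximations into genuine spectral information about $H_\Th$. Concretely, for part~(i)(a), fix $\lambda\in\sigma_{I_{2\ell}}^{\Th,\ell_\tau}(H_{\La_\ell})$ with localization center $x_\lambda\in\La_\ell^{\Th,\ell_\tau}$ and extend $\vphi_\lambda$ by zero. The residual $(H_\Th-\lambda)\vphi_\lambda=\Gamma_{\boldsymbol{\partial}^\Th\La_\ell}\vphi_\lambda$ is supported on $\partial_{\mathrm{in}}^\Th\La_\ell$, each point of which sits at distance $>\ell_\tau$ from $x_\lambda$; the localization estimate \eqref{hypdec} together with $\abs{\boldsymbol{\partial}^\Th\La_\ell}\le s_d\ell^{d-1}$ gives $\norm{(H_\Th-\lambda)\vphi_\lambda}\le C_d\ell^{(d-1)/2}\e^{-m h_I(\lambda)\ell_\tau}$. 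Using $\lambda\in I_{2\ell}\Rightarrow h_I(\lambda)\ge(2\ell)^{-\kappa}$ from \eqref{lowerbdh} and $m\ge m_-\ell^{-\kappa^\pr}$ from \eqref{upbm}, the polynomial prefactor is absorbed into the exponential, producing the rate $m_1$ of \eqref{tildedist132}. Since the exponent condition $\tau-\kappa-\kappa^\pr>\gamma\beta$ (a consequence of \eqref{gamtzetabeta2} with $L=\ell^\gamma$) makes the residual $\ll\tfrac12\e^{-L^\beta}$, the level spacing of $H_\Th$ isolates a unique eigenvalue $\wtilde\lambda$ of $H_\Th$ in a window of radius $\tfrac12\e^{-L^\beta}$ around $\lambda$, giving \eqref{tildedist132}. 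Projecting $\vphi_\lambda$ onto the rank-one spectral subspace at $\wtilde\lambda$ and dividing by the spectral gap $\tfrac12\e^{-L^\beta}$ delivers \eqref{difeq86} after the sign normalization. Injectivity follows because two distinct preimages $\lambda,\lambda'$ would force $\vphi_\lambda,\vphi_{\lambda'}$ to lie close to $\psi_{\wtilde\lambda}$, contradicting their orthogonality $\norm{\vphi_\lambda-\vphi_{\lambda'}}=\sqrt{2}$.

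Part~(i)(b) is then immediate: for $y\in\Th\setminus\La_\ell$ the extended $\vphi_\lambda$ vanishes, so $\abs{\psi_\nu(y)}\le\norm{\psi_\nu-\vphi_\lambda}$, and \eqref{difeq86} combined with $h_I(\lambda)\ge(2\ell)^{-\kappa}$ yields \eqref{psidecout}. The heart of the argument is part~(i)(c), which amounts to verifying the separation \eqref{distpointeiga}. Suppose for contradiction that some $\lambda_0\in\sigma_I^{\Th,\ell_\tau}(H_{\La_\ell})$ satisfies $\abs{\nu-\lambda_0}<\tfrac12\e^{-L^\beta}$. The gap between $I_\ell$ and $\partial I_{2\ell}$ is of order $A\ell^{-\kappa}\gg\e^{-L^\beta}$, so this proximity forces $\lambda_0\in I_{2\ell}$, and hence $\lambda_0\in\sigma_{I_{2\ell}}^{\Th,\ell_\tau}(H_{\La_\ell})$. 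Part~(i)(a) now furnishes $\wtilde{\lambda_0}\in\sigma_{\{\La_\ell\}}(H_\Th)$ with $\abs{\wtilde{\lambda_0}-\lambda_0}\le\e^{-m_1 h_I(\lambda_0)\ell_\tau}\ll\e^{-L^\beta}$, whence $\abs{\wtilde{\lambda_0}-\nu}<\e^{-L^\beta}$, and level spacing of $H_\Th$ identifies $\nu=\wtilde{\lambda_0}\in\sigma_{\{\La_\ell\}}(H_\Th)$, contradicting the hypothesis. Once \eqref{distpointeiga} is in hand, Lemma~\ref{lemdecay2} applied to $\psi_\nu$ (a generalized eigenfunction of $H_\Th$ with generalized eigenvalue $\nu\in I_\ell$) yields \eqref{psidecgood} and \eqref{psidecgoodpr} directly.

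For part~(ii), suppose $a\ne b$ and $\lambda\up{a},\lambda\up{b}\in\cE_\cG^\Th(\lambda)$: part~(i)(a) places both $\vphi_{\lambda\up{a}}$ and $\vphi_{\lambda\up{b}}$ (extended by zero) within small distance of $\psi_\lambda$, hence of each other. Each eigenfunction has essentially all its $\ell^2$-mass inside the ball of radius $\ell_\tau$ around its localization center (using $h_I(\lambda\up{\cdot})\ge(2\ell)^{-\kappa}$), so $\norm{x_{\lambda\up{a}}-x_{\lambda\up{b}}}\ge 2\ell_\tau$ would make them almost orthogonal, contradicting their proximity; this proves \eqref{vphilambda}. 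The consequence \eqref{sigmaab} then follows because $x_{\lambda\up{a}}\in\La_\ell^{\Th,\ell_\tau}(a)$ and $x_{\lambda\up{b}}\in\La_\ell^{\Th,\ell_\tau}(b)$ have $\ell_\tau$-balls contained in $\La_\ell(a)$ and $\La_\ell(b)$ respectively, and disjointness of these boxes forces the centers to lie at distance $>2\ell_\tau$. Claim~\eqref{psidecout63} is (i)(b) applied to any $a$ with $\lambda\in\sigma_{\{\La_\ell(a)\}}(H_\Th)$, and claim~\eqref{psidecgood63} is (i)(c) applied for every $a\in\cG$, followed by taking the union of the sets $\La_\ell^{\Th,2\ell_\tau}(a)$.

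The main obstacle is the pigeonhole argument in part~(i)(c): the injection of part~(i)(a) is defined only on the narrower set $\sigma_{I_{2\ell}}^{\Th,\ell_\tau}(H_{\La_\ell})$, whereas Lemma~\ref{lemdecay2} requires separation from all of $\sigma_I^{\Th,\ell_\tau}(H_{\La_\ell})$. The buffer $I_\ell\subsetneq I_{2\ell}$, with quantitative gap of order $\ell^{-\kappa}\gg\e^{-L^\beta}$, is precisely what allows a would-be witness $\lambda_0$ to be pulled back into the domain of the injection; this is the structural reason for modulating the decay by $h_I$ and for introducing the auxiliary intervals $I_\ell$ and $I_{2\ell}$, without which the two-scale comparison cannot close. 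The remaining bookkeeping of polynomial losses so that the rates meet \eqref{tildedist132}, \eqref{m414}, and \eqref{m4} is routine but tightly constrained by the exponent inequalities \eqref{ttauzeta}–\eqref{gamtzetabeta2}.
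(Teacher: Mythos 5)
Your proof is correct and follows essentially the same approach as the paper: approximate eigenpairs via extension by zero and boundary residual bounds (the paper cites \cite[Lemmas~3.2--3.3]{EK} for these steps, which you rederive), level spacing of $H_\Th$ to produce the injection, and the gap $I_\ell \subsetneq I_{2\ell}$ (of order $A\ell^{-\kappa}\gg\e^{-L^\beta}$) to reduce the separation condition in (i)(c) to the set $\sigma_{I_{2\ell}}^{\Th,\ell_\tau}(H_{\La_\ell})$ where the injection is defined. Your presentation of (i)(c) as a contradiction argument and your injectivity argument via near-orthogonality of distinct $\vphi_\lambda,\vphi_{\lambda'}$ are minor cosmetic departures from the paper's direct estimate $\abs{\wtilde\lambda-\wtilde\nu}\ge\e^{-\ell^\beta}-2\e^{-m_1(2\ell)^{-\kappa}\ell_\tau}$, but they carry the same content.
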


\begin{proof}  Let   $\Lambda_\ell\subset \Th$ be be an  $(m,I)$-localizing box with an $(m,I)$-localized eigensystem $\set{(\vphi_\lambda, \lambda)}_{\lambda\in \sigma(H_{\La_\ell})}$.
   Given $\lambda\in  \sigma_{I_{2\ell}}^{\Th,{\ell_\tau}}(H_{\La_\ell})$, it follows from \cite[Eq.~(3.10) in Lemma~3.2]{EK} that 
 \beq\label{appeig}
\dist\pa{\lambda, \sigma(H_{\Theta})}\le  \sqrt{ s_d }\, \ell^{\frac {d-1}2}  \e^{-m h_I(\lambda){\ell_\tau}},
\eeq
so the existence of $\wtilde{\lambda}\in \sigma(H_\Th)$ satisfying \eq{tildedist132} follows.
 Uniqueness follows from  the fact that $\Theta$ is $L$-level spacing  and $\gamma \beta <\tau$. In addition, note that $\wtilde{\lambda} \ne \wtilde{\nu}$ if $\lambda,\nu  \in   \sigma_{I_{2\ell}}^{\Th,{\ell_\tau}}(H_{\La_\ell})$, $\lambda\ne \nu$, because in this case we have
 \beq
\abs{ {\wtilde{\lambda}} -\wtilde{\nu}}\ge  \abs{ {\lambda} -\nu}- \abs{\wtilde{\lambda} - \lambda}  -  \abs{\wtilde{\nu} - \nu}\ge \e^{-\ell^\beta} - 2\e^{-m_1 {(2\ell)^{-\kappa}}{\ell_\tau}}\ge  \tfrac 12 \e^{-\ell^\beta},
\eeq
as $\La_\ell(a)$ is level spacing for $H$,  and $\kappa +\beta < \tau$.
Moreover, it follows from \cite[Lemma~3.3]{EK} that,  after multiplying  ${\vphi_{\lambda}}$ by a phase factor if necessary to get so $\scal{\psi_{\wtilde{\lambda}},\vphi_{\lambda}}>0$, we have \eq{difeq86}.

If $\nu \in \sigma_{\set{\La_\ell}}(H_\Th)$, we have $\nu=  \wtilde{\lambda}$ for some $\lambda\in  \sigma_{I_{2\ell}}^{\Th,{\ell_\tau}}(H_{\La_\ell})$, so \eq{psidecout} follows from \eq{difeq86} as
$\vphi_{\lambda}(y)=0$ for all  $y\in \Th \setminus \La_\ell(a)$.

 Let $\nu\in  \sigma_{I_{\ell}}(H_\Th)\setminus  \sigma_{\set{\La_\ell}}(H_\Th)$.  Then for all $\lambda\in  \sigma_{I_{2\ell}}^{\Th,{\ell_\tau}}(H_{\La_\ell})$ we have
 \begin{align}
\label{distpointsigeigab}
 \abs{\nu - \lambda}&\ge  \abs{\nu - \wtilde{\lambda}} -  \abs{\wtilde{\lambda} - \lambda}\ge   \e^{-\L^\beta} -\e^{-m_1h_I(\lambda){\ell_\tau}}\\  \nn & \ge  \e^{-\L^\beta} -\e^{-m_1 {(2\ell)^{-\kappa}}{\ell_\tau}}\  \ge \tfrac 12 \e^{-\L^\beta},
 \end{align}
since $\Theta$ is $L$-level spacing for $H$, we have \eq{tildedist132}, and $\kappa + \gamma \beta <\tau$. Thus
\beq
\abs{\nu - \lambda}\ge  \tfrac 12 \e^{-\L^\beta} \qtx{for all} \lambda \in  \sigma_{I_{2\ell}}^{\Th,{\ell_\tau}}(H_{\La_\ell}).
\eeq
Since $\nu \in I_\ell$, we actually have \eq{distpointeiga}. 
Thus  \eq{psidecgood} follows from Lemma~\ref{lemdecay2}(i) and $\norm{\psi_\nu}=1$, and  \eq{psidecgoodpr} follows from Lemma~\ref{lemdecay2}(ii).

Now let 
 $\set{\Lambda_\ell(a)}_{a\in \cG} $, where $\cG \subset \R^d$ and  $\Lambda_\ell(a)\subset \Th$ for all $a \in \cG$,  be a collection of   $(m,I)$-localizing boxes with  $(m,I)$-localized eigensystems    $\set{(\vphi_{\lambda\up{a}}, \lambda\up{a})}_{\lambda\up{a}\in \sigma(H_{\La_\ell}(a))}$.  Let  $\lambda\in \sigma(H_\Th)$, $a,b \in \cG$, $a\ne b$, $\lambda\up{a}\in   \sigma_{I_{2\ell}}^{\Th,{\ell_\tau}}(H_{\La_\ell}(a))$ and $\lambda\up{b}\in   \sigma_{I_{2\ell}}^{\Th,{\ell_\tau}}(H_{\La_\ell}(b))$. Suppose ${\lambda\up{a}},{\lambda\up{b}}\in \cE^\Th_\cG(\lambda)$, where $\cE_\cG^\Th(\lambda)$ is given in \eq{defcE}.
It then  follows from  \eq{difeq86} that
\beq \label{difeq}
\norm{{\vphi_{\lambda\up{a}}}- \vphi_{\lambda\up{b}}}\le  4 \e^{-m_1(2 \ell)^{-\kappa}{\ell_{\tau}}}\e^{L^\beta},
\eeq
so
\beq\label{disjs}
\abs{\scal{\vphi_{\lambda\up{a}},\vphi_{\lambda\up{b}}}}\ge  \Re \scal{{\vphi\up{a}_x},{\vphi\up{b}_y}}\ge 1- 8\e^{-2 m_1(2 \ell)^{-\kappa}{\ell_{\tau}}}\e^{2L^\beta} .
\eeq
On the other hand,  it follows from \eq{hypdec} that
\beq\label{disjs2}
\norm{x_{\lambda\up{a}}-x_{\lambda\up{b}}} \ge 2 {\ell_\tau} \quad \Longrightarrow  \quad \abs{\scal{{\vphi\up{a}_x},{\vphi\up{b}_y}}}\le   (\ell+1)^d  \e^{-m (2\ell)^{-\kappa}\ell_\tau    }.
\eeq
Combining \eq{disjs} and \eq{disjs2} we conclude that
\beq \label{vphiabxy}
{\lambda\up{a}},{\lambda\up{b}}\in \cE^\Th_\cG (\lambda) \quad \Longrightarrow \quad \norm{x_{\lambda\up{a}}-x_{\lambda\up{b}}} < 2 {\ell_{\tau}}.
\eeq

To prove \eq{sigmaab}, let $a,b \in \cG$, $a\ne b$. If $\Lambda_\ell(a) \cap \Lambda_\ell(b)=\emptyset $,  we have that
\beq
\lambda\up{a}\in \sigma_{I_{2\ell}}^{\Th,{\ell_\tau}}(H_{\La_\ell}(a))\sqtx{and}\lambda\up{b}\in  \sigma_{I_{2\ell}}^{\Th,{\ell_\tau}}(H_{\La_\ell}(b))\;\Longrightarrow \; \norm{x_{\lambda\up{a}}-x_{\lambda\up{b}}}  \ge 2\ell_{\tau},
\eeq
so it follows from  \eq{vphilambda}  that $\sigma_{\set{\La_\ell(a)}}(H_\Th)\cap \sigma_{\set{\La_\ell(b)}}(H_\Th)=\emptyset $.

Parts (ii)(b) and (ii)(c) are immediate consequence of parts (i)(b) and (i)(c), respectively. 
\end{proof}

\subsubsection{Buffered subsets}  In the multiscale analysis we will need to consider boxes $\La_\ell \subset \La_L $ that are  not $(m,I)$-localizing for $H$.  Instead of studying eigensystems for such boxes, we will surround them with a buffer of $(m,I)$-localizing boxes and study eigensystems for the augmented subset.

  \begin{definition}\label{defbuff} 
 We call $\Ups\subset \La_L$ an $(m,I)$-buffered subset of the box  $\La_L$   if 
 the following holds:
  
  \begin{enumerate}
\item  $\Ups $ is a connected set in $\Z^d$ of the form
\beq\label{defUpsinitial}
 \Ups= \bigcup_{j=1}^J \La_{R_j}(a_j)\cap \La_L,
 \eeq  
 where $J\in \N$, $a_1,a_2,\ldots, a_J \in \La^\R_L$, and $\ell \le R_j\le L$ for $j=1,2,\ldots,J$.
  
  \item ${\Upsilon}$  is $L$-level spacing for $H$.

  \item  There exists $\cG_\Ups \subset \La^\R_L$ such that:
  \begin{enumerate}
\item For all $a\in\cG_\Ups$ we have    $\La_\ell(a) \subset\Ups$, and    $\La_\ell(a)$ is  an  $(m,I)$-localizing box   for $H$.
\item  For all $y \in \partialin^{\La_L}\Ups$ there exists $a_y \in\cG_\Ups$ such that $y\in \La_\ell^{ {\Ups}, 2{\ell_\tau}}(a_y)$.
\end{enumerate}

  \end{enumerate}   
 In this case we set 
   \beq\label{defUpscheck}
\widecheck{\Upsilon} =\bigcup_{a \in \cG_\Ups}\La_\ell (a), \quad  \widecheck{\Upsilon}_{\tau} = \bigcup_{a \in \cG_\Ups}\La_\ell^{ {\Upsilon}, 2{\ell_\tau}}(a), \quad  \widehat{\Upsilon} = {\Upsilon} \setminus  \widecheck{\Upsilon},  \qtx{and} \widehat{\Upsilon}_\tau = {\Upsilon} \setminus  \widecheck{\Upsilon}_\tau .
\eeq
($\widecheck{\Upsilon} = \Upsilon_{\cG_\Ups}$ and $\widecheck{\Upsilon}_\tau = \Upsilon_{\cG_\Ups,\tau}$ in the notation of Lemma~\ref{lem:ident_eigensyst}.)

\end{definition}

  The set $\widecheck{\Upsilon}_{\tau}\supset \partialin^{\La_L}\Ups$ is a localizing buffer between  $ \widehat{\Upsilon}$ and $\La_L \setminus \Ups$, as shown in the following lemma.

 \begin{lemma}
 Let  $\Ups$  be an $(m,I)$-buffered subset of $\La_L$,  and let $\set{(\psi_\nu,\nu)}_{\nu \in \sigma(H_{\Upsilon})}$ be an eigensystem  for $H_{\Upsilon}$. 
   Let $\cG=\cG_\Ups$  and set 
 \beq\label{sigmabad}
\sigma_\cB(H_{{\Upsilon}})= \sigma_{I_{\ell}}(H_{{\Upsilon}})\setminus \sigma_\cG(H_{{\Upsilon}}),
\eeq 
where $\sigma_\cG(H_\Ups)$ is as in \eq{defcE}.  Then the following holds for sufficiently large $L$:

\begin{enumerate}
\item 
For all $\nu\in\sigma_\cB(H_{{\Upsilon}})$ we have 
\beq
\abs{\psi_\nu(y)} \le  e^{-m_2h_{I}(\nu){\ell_\tau}} \sqtx{for all} y \in \widecheck{\Upsilon}_{\tau}, \sqtx{with} m_2=m_2(\ell) \;\text{as in \eq{m414}}.  \label{vthetasmall}
\eeq.

\item Let $ \La_L$ be level spacing for $H$, and let $\set{(\phi_\lambda,\lambda)}_{\lambda \in \sigma(H_{\La_L})}$ be an eigensystem  for $H_{\La_L}$. 
There exists an injection    
\beq\label{injbad}
 \nu\in\sigma_\cB(H_{{\Upsilon}})  \mapsto \widetilde{\nu}\in \sigma(H_{\La_L})\setminus\sigma_\cG (H_{\La_L}),
\eeq
such that for  $\nu\in\sigma_\cB(H_{{\Upsilon}})$ we have 
\begin{gather}\label{tildedist}
\abs{\wtilde{\nu} -\nu} \le\e^{-m_4\ell^{\tau- \kappa}}, \sqtx{with} 
m_4=m_4(\ell) \ge  m\pa{1 - C_{d,m_-}\ell^{-\pa{\tau - \gamma \beta -\kappa-\kappa^\pr}}},
\end{gather}
and, redefining  ${\psi_\nu}$ so $\scal{\phi_{\wtilde{\nu}},\psi_\nu}>0$,
\beq \label{difeqr}
\norm{\phi_{\wtilde{\nu}}- \psi_\nu}\le 2\e^{-m_4\ell^{\tau- \kappa}}\e^{L^\beta}.
\eeq
\end{enumerate}
 \end{lemma}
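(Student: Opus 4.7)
\smallskip

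\noindent\textbf{Proof proposal.} The plan is to treat the two parts in sequence, using the earlier Lemma~\ref{lem:ident_eigensyst} as a black box for part (i) and as a template for part (ii).

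For part (i), I would simply invoke Lemma~\ref{lem:ident_eigensyst}(ii)(c) with $\Theta=\Upsilon$ and the collection $\{\Lambda_\ell(a)\}_{a\in\cG_\Upsilon}$. All hypotheses are built into Definition~\ref{defbuff}: item (ii) of that definition gives $L$-level spacing of $\Upsilon$, and item (iii)(a) gives that each $\Lambda_\ell(a)\subset\Upsilon$ is $(m,I)$-localizing for $H$. Since $\sigma_\cB(H_\Upsilon)$ is exactly $\sigma_{I_\ell}(H_\Upsilon)\setminus\sigma_\cG(H_\Upsilon)$ and $\widecheck\Upsilon_\tau=\Upsilon_{\cG_\Upsilon,\tau}$, the estimate \eqref{vthetasmall} is precisely \eqref{psidecgood63}.

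For part (ii), the strategy is to use each $\psi_\nu$ as an approximate eigenfunction of $H_{\Lambda_L}$. Extending $\psi_\nu$ by zero to $\psi_\nu^{\mathrm{ext}}\in\ell^2(\Lambda_L)$ and using \eqref{Hdecomp1} yields
\[
(H_{\Lambda_L}-\nu)\psi_\nu^{\mathrm{ext}}=-\Gamma_{\boldsymbol\partial^{\Lambda_L}\Upsilon}\psi_\nu^{\mathrm{ext}},
\]
a function supported on $\partial_{\mathrm{ex}}^{\Lambda_L}\Upsilon$ whose values are pointwise bounded by $|\psi_\nu|$ at neighboring sites in $\partial_{\mathrm{in}}^{\Lambda_L}\Upsilon$. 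By Definition~\ref{defbuff}(iii)(b), $\partial_{\mathrm{in}}^{\Lambda_L}\Upsilon\subset\widecheck\Upsilon_\tau$, so part (i) applies; together with $\nu\in I_\ell$ (giving $h_I(\nu)\ge\ell^{-\kappa}$ via \eqref{lowerbdh}), the polynomial boundary estimate \eqref{bdryest}, and the shape bound implicit in \eqref{defUpsinitial}, I get $\|(H_{\Lambda_L}-\nu)\psi_\nu^{\mathrm{ext}}\|\le e^{-m_4\ell^{\tau-\kappa}}$ with $m_4$ of the claimed form (the polynomial prefactor is absorbed by the room between $\tau-\kappa$ and $\tau-\gamma\beta-\kappa-\kappa^\pr$ in the exponents). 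Standard spectral theory then yields $\widetilde\nu\in\sigma(H_{\Lambda_L})$ with $|\widetilde\nu-\nu|\le e^{-m_4\ell^{\tau-\kappa}}$, and uniqueness/injectivity follow from $L$-level spacing of $\Lambda_L$ combined with $\tau-\kappa>\gamma\beta$ (from \eqref{gamtzetabeta2}). The norm bound \eqref{difeqr} is then obtained, after an appropriate phase choice, by decomposing $\psi_\nu^{\mathrm{ext}}$ along the eigenbasis $\{\phi_\lambda\}$ and using the uniform gap $|\lambda-\nu|\ge\tfrac12 e^{-L^\beta}$ for $\lambda\ne\widetilde\nu$, as in the proof of \cite[Lemma~3.3]{EK}.

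The main obstacle, and the only step that is not essentially mechanical, is verifying that the image of the injection avoids $\sigma_\cG(H_{\Lambda_L})$. The plan here is a comparison argument between the two maps $\widetilde{\phantom{m}}_\Upsilon$ and $\widetilde{\phantom{m}}_{\Lambda_L}$ produced by Lemma~\ref{lem:ident_eigensyst}(i)(a) applied with $\Theta=\Upsilon$ and $\Theta=\Lambda_L$ respectively. Suppose $\widetilde\nu=\widetilde{\mu^{(a)}}_{\Lambda_L}$ for some $a\in\cG_\Upsilon$ and $\mu^{(a)}\in\sigma_{I_{2\ell}}^{\Lambda_L,\ell_\tau}(H_{\Lambda_\ell(a)})$. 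The key observation is that $\Lambda_\ell^{\Theta,t}(a)$ \emph{grows} as $\Theta$ shrinks, since $\Theta\setminus\Lambda_\ell(a)$ shrinks; hence $\Lambda_\ell^{\Lambda_L,\ell_\tau}(a)\subset\Lambda_\ell^{\Upsilon,\ell_\tau}(a)$ and $\mu^{(a)}$ also lies in $\sigma_{I_{2\ell}}^{\Upsilon,\ell_\tau}(H_{\Lambda_\ell(a)})$, producing a corresponding $\widetilde{\mu^{(a)}}_\Upsilon\in\sigma_\cG(H_\Upsilon)$. Chaining the three estimates $|\nu-\widetilde\nu|$, $|\widetilde\nu-\mu^{(a)}|$, and $|\mu^{(a)}-\widetilde{\mu^{(a)}}_\Upsilon|$ via the triangle inequality places $\widetilde{\mu^{(a)}}_\Upsilon$ within $\tfrac12 e^{-L^\beta}$ of $\nu$; level spacing of $\Upsilon$ then forces $\widetilde{\mu^{(a)}}_\Upsilon=\nu$, contradicting $\nu\in\sigma_\cB(H_\Upsilon)$. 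This comparison is where the exponent inequalities \eqref{ttauzeta}--\eqref{gamtzetabeta2} are used most delicately, and it is the step I would write out with the most care.
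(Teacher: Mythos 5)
Your proposal is correct and largely parallels the paper's proof. Part (i) is identical: both simply invoke Lemma~\ref{lem:ident_eigensyst}(ii)(c) with $\Theta=\Upsilon$. For the approximate-eigenvalue estimate, the injection via $L$-level spacing, and the eigenfunction bound \eqref{difeqr} via \cite[Lemma~3.3]{EK}, you follow the same steps the paper does. The one place you genuinely diverge is the argument that $\widetilde\nu\notin\sigma_\cG(H_{\Lambda_L})$. The paper assumes $\widetilde\nu\in\sigma_\cG(H_{\Lambda_L})$, picks $\lambda^{(a)}\in\cE_\cG^{\Lambda_L}(\widetilde\nu)$ and a corresponding $\lambda_1\in\sigma_\cG(H_\Upsilon)$ with $\lambda^{(a)}\in\cE_\cG^{\Upsilon}(\lambda_1)$, and then chains \emph{eigenfunction norms} $\psi_{\lambda_1}\to\vphi_{\lambda^{(a)}}\to\phi_{\widetilde\nu}\to\psi_\nu$ using \eqref{difeq86} and \eqref{difeqr} to contradict $\|\psi_{\lambda_1}-\psi_\nu\|=\sqrt 2$ (orthogonality of distinct eigenvectors of $H_\Upsilon$). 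You instead chain \emph{eigenvalue distances} $\nu\to\widetilde\nu\to\mu^{(a)}\to\widetilde{\mu^{(a)}}_\Upsilon$ and use $L$-level spacing of $\Upsilon$ to force $\nu=\widetilde{\mu^{(a)}}_\Upsilon\in\sigma_\cG(H_\Upsilon)$, which contradicts $\nu\in\sigma_\cB(H_\Upsilon)$ directly. Both routes exploit the same inclusion $\Lambda_\ell^{\Lambda_L,\ell_\tau}(a)\subset\Lambda_\ell^{\Upsilon,\ell_\tau}(a)$ (the paper uses it implicitly when applying Lemma~\ref{lem:ident_eigensyst}(i)(a) with $\Theta=\Upsilon$ to the eigenvalue $\lambda^{(a)}$, which is a priori only known to lie in $\sigma_{I_{2\ell}}^{\Lambda_L,\ell_\tau}$) and the same exponent inequality $\tau-\kappa>\gamma\beta$; your version trades the $\sqrt 2$-orthogonality contradiction for one more use of level spacing, which is arguably a bit more streamlined since it avoids re-assembling three eigenfunction perturbation estimates.
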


\begin{proof} Part (i) follows immediately from Lemma~\ref{lem:ident_eigensyst}(ii)(c).

Now let $ \La_L$ be level spacing for $H$, and let $\set{(\phi_\lambda,\lambda)}_{\lambda \in \sigma(H_{\La_L})}$ be an eigensystem  for $H_{\La_L}$.  It follows from \cite[Eq.~(3.11) in Lemma~3.2]{EK} that for $ \nu\in\sigma_\cB(H_{{\Upsilon}})$ we have
 \begin{align}\nn
 \norm{\pa{H_{\La_L} -\nu}{\psi_\nu} }&\le (2d-1) \abs{ \partialex^{ \La_L}{\Ups}}^{\frac 12}\norm{\psi_\nu
 \Chi_{ \partialin^{ \La_L}{\Ups}}}_\infty\le  (2d-1)  L^{\frac {d}2} e^{-m_2h_{I}(\nu){\ell_\tau}}\\  &\le (2d-1)  L^{\frac {d}2} e^{-m_2\ell^{-\kappa}{\ell_\tau}} \le \e^{-m_4\ell^{\tau- \kappa}},  \label{appeig47549}
 \end{align}
where we used 
$\partialin^{ \La_L}{\Ups} \subset  \widecheck{\Upsilon}_{\tau}$ and \eq{vthetasmall}, and $m_4$ is given in \eq{tildedist}.
Since $ \La_L$  and $\Ups$ are $L$-level spacing for $H$, the map in \eq{injbad} is a well defined injection into $\sigma (H_{\La_L})$, and   \eq{difeqr} follows from \eq{tildedist} and \cite[Lemma~3.3]{EK}. 

To finish the proof we must show that $\wtilde\nu\notin \sigma_\cG(H_{\La_L})$ for all $\nu\in\sigma_\cB(H_{{\Upsilon}})$. 
Suppose  $\wtilde{\nu}\in  \sigma_\cG(H_{\La_L})$ for some $\nu\in\sigma_\cB(H_{{\Upsilon}}) $. Then there is $a\in \cG$ and $\lambda\up{a}\in   \sigma_{I_{2\ell}}^{\Th,{\ell_\tau}}(H_{\La_\ell}(a))$ such that
${\lambda\up{a}} \in\cE_\cG^{\La_L}(\wtilde{\nu})$.  On the other hand, it follows from   Lemma~\ref{lem:ident_eigensyst}(i)(a) that ${\lambda\up{a}} \in\cE_\cG^{\Ups}(\lambda_1) $ for some $\lambda_1 \in \sigma_\cG(H_{\Ups})$.  We conclude  from \eq{difeq86} and \eq{difeqr}  that
\begin{align}
\sqrt{2}= \norm{\psi_{\lambda_1} -\psi_{\nu}}&\le 
\norm{\psi_{\lambda_1}-\vphi_{\lambda\up{a}}  }+\norm{\vphi_{\lambda\up{a}} -{\phi}_{\wtilde{\nu}} }+ \norm{{\phi}_{\wtilde{\nu}} -\psi_{\nu} }\\
\notag & \le {4}\e^{-m_1{(2\ell)^{-\kappa}}{\ell_\tau}}\e^{L^\beta}+  {2}\e^{-m_4\ell^{\tau- \kappa}}\e^{L^\beta}< 1,
\end{align}
 a contradiction.
\end{proof}

 \begin{lemma}\label{lembad}   Let $\La_L=\La_L(x_0)$, $x_0 \in \R^d$. 
Let $ {\Upsilon}$ be    an $(m,I)$-buffered subset of  $\La_L$.  
 Let  $\cG=\cG_\Ups$, and for $\nu\in \sigma(H_{\Upsilon})$  set
\begin{align}\label{defcEUps}
\cE_\cG^{\La_L} (\nu)&=  \set{{\lambda\up{a}}; \; a\in \cG,\, \lambda\up{a}\in\sigma_{I_{2\ell}}^{\La_L,{\ell_\tau}}(H_{\La_\ell}(a) ), \,  \wtilde{\lambda}\up{a}=\nu}\subset \cE_\cG^{\Ups} (\nu), \notag \\ 
\sigma_\cG^{\La_L} (H_{{\Upsilon}})&=\set{\nu \in \sigma(H_{{\Upsilon}}); \; \cE_\cG^{\La_L} (\nu)\ne \emptyset} \subset \sigma_\cG (H_{{\Upsilon}}).
\end{align}
The following holds for sufficiently large $L$:
\begin{enumerate}
\item Let  $(\psi,\lambda)$ be an  eigenpair for $H_{\La_L }$ such that $\lambda \in I_{\ell}$ and
\beq\label{distpointeig1}
\abs{\lambda - \nu}
\ge \tfrac 1 2\e^{-L^\beta}\qtx{for all} \nu \in\sigma_\cG^{\La_L} (H_{{\Upsilon}})\cup \sigma_\cB  (H_{{\Upsilon}}).
\eeq
Then for all $y\in \Ups^{{\La_L},2{\ell_\tau} }$  we have
\beq\label{gggsum}
\abs{\psi(y)}\le   \e^{- m_5 h_{I_{\ell}}(\lambda)\ell_\tau }\abs{\psi(v)}\qtx{for some} v\in{\partial}^{\La_L, 2{\ell_\tau} }  {\Upsilon},
\eeq
where 
\beq\label{M21}
m_5=m_5(\ell)\ge  m\pa{1 - C_{d,m_-}\ell^{-\min \set{\kappa,\tau - \gamma \beta -\kappa-\kappa^\pr}}}.
\eeq
\item Let $\La_L$ be  level spacing for $H$,  let $\set{(\psi_\lambda,\lambda)}_{\lambda \in \sigma(H_{\La_L})}$ be an eigensystem  for $H_{\La_L}$, recall \eq{injbad},  and set
 \beq
\sigma_\Ups (H_{\La_L})= \set{\wtilde\nu; \,  \nu\in\sigma_\cB(H_{{\Upsilon}})}\subset  \sigma(H_{\La_L})\setminus\sigma_{\cG} (H_{\La_L}).
\eeq
  Then for all
   \[
\lambda \in \sigma_{ I_{\ell}} (H_{{\La_L}})\setminus\pa{  \sigma_\cG (H_{{\La_L}})\cup  \sigma_\Ups (H_{\La_L})},
\]  
the condition \eq{distpointeig1} is satisfied, and $\psi_\lambda$  satisfies  \eq{gggsum}.
\end{enumerate}
  \end{lemma}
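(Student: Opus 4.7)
The plan for part~(i) is to start from the identity of Lemma~\ref{lemkey}(i) applied with $\Phi=\Upsilon$,
\[
\psi(y)=\scal{e^{-t((H_\Upsilon-E)^2-(\lambda-E)^2)}\delta_y,\psi|_\Upsilon}-\scal{F_{t,\lambda-E}(H_\Upsilon-E)\delta_y,\Gamma_{\boldsymbol\partial^{\La_L}\Upsilon}\psi},
\]
and to choose $t=m_2\ell_\tau/A^2$ with $m_2$ as in \eq{m414}. Since $y\in\Upsilon^{\La_L,2\ell_\tau}$ forces $R_y^{\partial^{\La_L}_{\mathrm{in}}\Upsilon}>\ell_\tau$, Lemma~\ref{lemkey}(ii) controls the boundary term by $e^{-m_2 h_I(\lambda)\ell_\tau}$ times polynomial-in-$L$ prefactors and $\abs{\psi(v)}$ for some $v\in\partial_{\mathrm{ex}}^{\La_L}\Upsilon$.

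For the semigroup term I decompose $1=P_{I_\ell}+\bar P_{I_\ell}$ with respect to $H_\Upsilon$. The $\bar P_{I_\ell}$ part is estimated by Lemma~\ref{lemkey2} applied to $I_\ell$ rather than $I$, yielding $e^{-tA_\ell^2 h_{I_\ell}(\lambda)}\norm{\psi|_\Upsilon}$; the factor $A_\ell^2/A^2=(1-\ell^{-\kappa})^2$ is precisely the source both of the $\ell^{-\kappa}$ loss in $m_5$ and of the replacement of $h_I$ by $h_{I_\ell}$ in \eq{gggsum}. The $P_{I_\ell}$ part I expand in the eigensystem $\set{(\psi_\nu,\nu)}$ of $H_\Upsilon$ and split $\sigma_{I_\ell}(H_\Upsilon)=\sigma_\cG(H_\Upsilon)\cup\sigma_\cB(H_\Upsilon)$. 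The algebraic identity $(\nu-E)^2-(\lambda-E)^2=A^2(h_I(\lambda)-h_I(\nu))$ then gives
\[
e^{-t((\nu-E)^2-(\lambda-E)^2)}=e^{-m_2 h_I(\lambda)\ell_\tau}\,e^{m_2 h_I(\nu)\ell_\tau},
\]
so whenever a per-$\nu$ estimate of $\abs{\psi_\nu(y)\scal{\psi_\nu,\psi|_\Upsilon}}$ carries a factor $e^{-m_\ast h_I(\nu)\ell_\tau}$ with $m_\ast\ge m_2$, the $\nu$-dependent exponentials cancel and only $e^{-m_2 h_I(\lambda)\ell_\tau}$ survives.

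Such a per-$\nu$ estimate is supplied in three cases. For $\nu\in\sigma_\cB(H_\Upsilon)$ I use $\scal{\psi_\nu,\psi|_\Upsilon}=-\scal{\psi_\nu,\Gamma_{\boldsymbol\partial^{\La_L}\Upsilon}\psi}/(\nu-\lambda)$, with $\abs{\nu-\lambda}\ge\tfrac12 e^{-L^\beta}$ from \eq{distpointeig1} and the buffer bound $\abs{\psi_\nu(u)}\le e^{-m_2 h_I(\nu)\ell_\tau}$ on $\partial^{\La_L}_{\mathrm{in}}\Upsilon\subset\widecheck\Upsilon_\tau$ from \eq{vthetasmall}. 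For $\nu\in\sigma_\cG^{\La_L}(H_\Upsilon)$ the same boundary mechanism applies: separation again holds by \eq{distpointeig1}, and $\psi_\nu$, approximated via \eq{difeq86} by a localized $\vphi_{\lambda\up{a}}$ whose center lies in $\La_\ell^{\La_L,\ell_\tau}(a)$, is exponentially small on $\partial^{\La_L}_{\mathrm{in}}\Upsilon$. The case $\nu\in\sigma_\cG(H_\Upsilon)\setminus\sigma_\cG^{\La_L}(H_\Upsilon)$ is the main obstacle, since \eq{distpointeig1} is silent there; I resolve it by the geometric observation that for such $\nu$ the center $x_{\lambda\up{a}}$ lies in $\La_\ell^{\Upsilon,\ell_\tau}(a)\setminus\La_\ell^{\La_L,\ell_\tau}(a)$, hence within distance $\ell_\tau$ of $\La_L\setminus\Upsilon$, so that $y\in\Upsilon^{\La_L,2\ell_\tau}$ gives $\norm{y-x_{\lambda\up{a}}}\ge\ell_\tau$ and the localization of $\vphi_{\lambda\up{a}}$ yields $\abs{\psi_\nu(y)}\lesssim e^{-m h_I(\nu)\ell_\tau}$ directly, so that the trivial bound $\abs{\scal{\psi_\nu,\psi|_\Upsilon}}\le L^{d/2}\abs{\psi(v)}$ suffices. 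Combining the three contributions with the boundary and $\bar P_{I_\ell}$ terms yields \eq{gggsum} with $m_5$ as in \eq{M21}; if the maximizer $v$ happens to lie in $\Upsilon^{\La_L,2\ell_\tau}$, a finite iteration of the estimate pushes it into $\partial^{\La_L,2\ell_\tau}\Upsilon$.

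Part~(ii) then follows from part~(i) once \eq{distpointeig1} is verified at the given $\lambda$. Level spacing of $\La_L$ gives $\abs{\lambda-\mu}\ge e^{-L^\beta}$ for every $\mu\in\sigma(H_{\La_L})\setminus\set{\lambda}$, and in particular for every $\mu\in\sigma_\cG(H_{\La_L})\cup\sigma_\Upsilon(H_{\La_L})$. For $\nu\in\sigma_\cG^{\La_L}(H_\Upsilon)$, \eq{defcEUps} supplies some $\lambda\up{a}\in\sigma_{I_{2\ell}}^{\La_L,\ell_\tau}(H_{\La_\ell}(a))$, and applying \eq{tildedist132} both in $\La_L$ and in $\Upsilon$ shows that $\nu$ is within $2e^{-m_1(2\ell)^{-\kappa}\ell_\tau}$ of $\wtilde{\lambda\up{a}}\in\sigma_\cG(H_{\La_L})$; for $\nu\in\sigma_\cB(H_\Upsilon)$, \eq{tildedist} shows that $\nu$ is within $e^{-m_4\ell^{\tau-\kappa}}$ of $\wtilde\nu\in\sigma_\Upsilon(H_{\La_L})$. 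The exponent ordering $\gamma\beta<\tau-\kappa$, which is a consequence of \eq{gamtzetabeta2}, renders these corrections negligible compared with $e^{-L^\beta}$, giving $\abs{\lambda-\nu}\ge\tfrac12 e^{-L^\beta}$ in both cases and establishing \eq{distpointeig1}.
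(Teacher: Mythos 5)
Your proposal is correct and reproduces the paper's own argument in essentially the same form: the same splitting via $P_{I_\ell}$ and $\bar P_{I_\ell}$, the same grouping of $\sigma_{I_\ell}(H_\Upsilon)$ into the parts controllable by the separation hypothesis \eq{distpointeig1} and the ``escaping'' part $\sigma_\cG(H_\Upsilon)\setminus\sigma_\cG^{\La_L}(H_\Upsilon)$, and the same geometric observation that for the latter the localization center $x_{\lambda^{(a)}}$ sits within $\ell_\tau$ of $\La_L\setminus\Upsilon$ while $y\in\Upsilon^{\La_L,2\ell_\tau}$, forcing $\norm{y-x_{\lambda^{(a)}}}>\ell_\tau$. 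For part (ii) you argue slightly differently from the paper — you go directly through level spacing of $\La_L$ and the double application of \eq{tildedist132} rather than invoking Lemma~\ref{lem:ident_eigensyst}(i)(c) — but this is a minor, equally valid shortcut; one small point to keep in mind is that the exponent in the heat factor should really be $\min\set{m_1',m_2}$ so that all per-$\nu$ rates dominate it, but as you implicitly note, that minimum equals $m_2$ to leading order, so your choice $t=m_2\ell_\tau/A^2$ is harmless.
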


\begin{proof} To prove part (i), 
we take $E=0$ by replacing the potential $V$ by $V-E$. 
Let  $(\psi,\lambda)$ be an  eigenpair for $H_{\La_L }$ satisfying \eq{distpointeig1}.  Given $y \in \Ups$, we write $\psi(y)$ as in \eq{eq:2terms}. We set
$P= \Chi_{I_{\ell }}\pa{H_ {\Ups}}$ and $\bar P= 1-P_{I_{\ell}}$. We use Lemma~\ref{lemkey2} with $\Phi=\Ups$ and  $J=I_{\ell}$. 

 To estimate  $\scal {\e^{-t \pa{H_{\Ups}^2-\lambda^2}}P\delta_y,\psi}$, let
 $\set{(\vtheta_\nu,\nu)}_{\nu \in \sigma(H_{\Upsilon})}$ be an eigensystem  for $H_{\Upsilon}$.    
For each  $\nu \in \sigma_\cG (H_{\Upsilon})$ we fix $\lambda\up{a_\nu}\in \cE_\cG^\Ups (\nu)$,  where 
  $a_\nu\in \cG,\, \lambda\up{a_\nu}\in  \sigma_{I_{2\ell}}^{\La_L,{\ell_\tau}}(H_{\La_\ell}(a_\nu) )$,  picking  $\lambda\up{a_\nu}  \in \cE_\cG^{\La_L} (\nu)$ if $\nu \in \sigma_\cG^{\La_L} (H_{\Upsilon})$, so   $ x_{\lambda\up{a_\nu}}\in \La_\ell^{{\La_L},\ell_{\tau}}(a_\nu )$.  If  $\nu \in\sigma_\cG (H_{{\Upsilon}})\setminus \sigma_\cG^{\La_L}(H_{{\Upsilon}})$ we have  $  x_{\lambda\up{a_\nu}}\in \La_\ell^{{\Upsilon},\ell_{\tau}}(a_\nu)\setminus \La_\ell^{{\La_L},\ell_{\tau}}(a_\nu)$.

 Given $J\subset \R$, we set
 $\sigma_{\cG,J} (H_{{\Upsilon}})=
 \sigma_\cG (H_{{\Upsilon}})\cap J$,   $\sigma_{\cG,J}^{\La_L} (H_{{\Upsilon}})=
 \sigma_\cG^{\La_L} (H_{{\Upsilon}})\cap J$.   We have
\begin{align}\label{sum000227}
\scal {\e^{-t \pa{H_{\Ups}^2-\lambda^2}}P\delta_y,\psi} &=\sum_{\nu \in\sigma_{I_{\ell}}( {\Upsilon})}\e^{-t \pa{\nu^2-\lambda^2}}{{\vtheta_\nu}(y)} \scal{ {\vtheta_\nu},\psi}\\   \notag  &  =  \sum_{\nu \in \sigma_{\cG,{I_{\ell}}}^{\La_L} (H_{{\Upsilon}})\cup \sigma_\cB  (H_{{\Upsilon}})} \e^{-t \pa{\nu^2-\lambda^2}}{{\vtheta_\nu}(y)} \scal{ {\vtheta_\nu},\psi}\\   \notag  & \qquad \quad +   \sum_{\nu \in\sigma_{\cG,{I_{\ell}}} (H_{{\Upsilon}})\setminus \sigma_{\cG,{I_{\ell}}}^{\La_L} (H_{{\Upsilon}})} \e^{-t \pa{\nu^2-\lambda^2}}{{\vtheta_\nu}(y)} \scal{ {\vtheta_\nu},\psi}.
\end{align}

If $\nu \in\sigma_\cG^{\La_L} (H_{{\Upsilon}})\cup \sigma_\cB  (H_{{\Upsilon}})$ we have
\beq
\scal{ {\vtheta_\nu},\psi}= \pa{\lambda- \nu }^{-1}\scal{ {\vtheta_\nu},  \pa{H_{{\La_L}}- \nu}\psi}= \pa{\lambda- \nu }^{-1}\scal{\pa{H_{{\La_L}}- \nu} {\vtheta_\nu}, \psi} .
\eeq
It follows from  \eq{distpointeig1} and \cite[Eq.~(3.10) in Lemma~3.2]{EK} that
\begin{align}\label{vphipsi39909}
&\abs{{{\vtheta_\nu}(y)} \scal{ {\vtheta_\nu},\psi}}\le 2\e^{L^\beta}   \abs{ \vtheta_\nu(y)} \sum_{  v \in    \partialex^{{{\La_L}}}  {\Upsilon}} \pa{ \textstyle\sum_{\substack{v^\pr \in \partial_{\mathrm{in}}^{ \La_L}{\Ups} \\ \abs{v^\pr-v}=1}}\abs{ \vtheta_\nu({v^\pr})}} \abs{\psi(v)}\\
&\qquad  \notag \le  2 L^{d}\e^{L^\beta} \set{2d \max_{  u \in   \partialin^{{{\La_L}}}  {\Upsilon}} {\abs{\vtheta_\nu(u)}} }  \abs{\psi(v_1)} \qtx{for some} v_1 \in\partialex^{{{\La_L}}} {\Upsilon}.
\end{align}
If $\nu \in \sigma_\cB  (H_{{\Upsilon}})$ it follows from \eq{vthetasmall} that 
\beq\max_{  u \in    \partialin^{{\La_L}}  {\Upsilon}} {\abs{\vtheta_\nu(u)}}   \le  e^{-m_2h_{I}(\nu){\ell_\tau}}.
\eeq
If  $\nu \in \sigma_{\cG,J}^{\La_L} (H_{{\Upsilon}})$, it follows from  \eq{difeq86} and   \eq{hypdec} that
\begin{align}
&\max_{  u \in   \partialin^{{{\La_L}}}  {\Upsilon}} {\abs{\vtheta_\nu(u)}}   \le 
\max_{  u \in   \partialin^{{{\La_L}}}  {\Upsilon}} \pa{{\abs{\vtheta_\nu(u)-\vphi\up{a_\nu}(u)}}+\abs{\vphi\up{a_\nu} (u)}}\\ 
&  \qquad  \notag \le 2 \e^{-m_1h_I(\lambda\up{a_\nu}){\ell_\tau}}\e^{L^\beta} +  e^{-mh_I(\lambda\up{a_\nu}){\ell_\tau}}\le 3\e^{-m_1h_I(\lambda\up{a_\nu}){\ell_\tau}}\e^{L^\beta}\\ \notag & \qquad \le 3 \e^{-m_1^\pr  h_I(\nu){\ell_\tau}}\e^{L^\beta}, \qtx{where} m_1^\pr \ge m_1(1 - \e^{-C_{m_-} \ell^{\tau-\kappa-\kappa^\pr})},
\end{align} 
where we used \eq{tildedist132}.
It follows that, with  
\beq
m_1^{\pr\pr}=\min \set{m_1^\pr, m_2}\ge  m\pa{1 - C_{d,m_-}\ell^{-\pa{\tau - \gamma \beta -\kappa-\kappa^\pr}}}, 
\eeq
for all $\nu \in \sigma_{\cG,J}^{\La_L} (H_{{\Upsilon}})\cup \sigma_\cB  (H_{{\Upsilon}})$ we have
\begin{align}
\abs{{{\vtheta_\nu}(y)} \scal{ {\vtheta_\nu},\psi}}\le \e^{3L^\beta}\e^{-m_1^{\pr\pr}  h_{I}(\nu){\ell_\tau}} \abs{\psi(v_1)} \sqtx{for some} v_1 \in\partialex^{{{\La_L}}} {\Upsilon}.
\end{align}¥
Picking $ t = \frac {m_1^{\pr\pr}\ell_\tau}{A^2 }$, for all $\nu \in \sigma_{\cG,J}^{\La_L} (H_{{\Upsilon}})\cup \sigma_\cB  (H_{{\Upsilon}})$ we get
\begin{align}
\e^{-t \pa{\nu^2-\lambda^2}}\abs{{{\vtheta_\nu}(y)} \scal{ {\vtheta_\nu},\psi}}\le \e^{3L^\beta}\e^{-m_1^{\pr\pr}  h_{I}(\lambda){\ell_\tau}} \abs{\psi(v_1)} \sqtx{for some} v_1 \in\partialex^{{{\La_L}}} {\Upsilon},
\end{align}¥
so
\begin{align}\label{gsum53}
¥\abs{ \sum_{\nu \in \sigma_{\cG,J}^{\La_L} (H_{{\Upsilon}})\cup \sigma_\cB  (H_{{\Upsilon}})} \e^{-t \pa{\nu^2-\lambda^2}}{{\vtheta_\nu}(y)} \scal{ {\vtheta_\nu},\psi}}\le (L+1)^d \e^{3L^\beta}\e^{-m_1^{\pr\pr}  h_{I}(\lambda){\ell_\tau}} \abs{\psi(v_2)}, 
\end{align}¥
for some $ v_2 \in\partialex^{{{\La_L}}} {\Upsilon}$.

Now let $\nu \in\sigma_{\cG,J} (H_{{\Upsilon}})\setminus \sigma_{\cG,J}^{\La_L} (H_{{\Upsilon}})$.  In this case  we have $  x_{\lambda\up{a_\nu}}\in \La_\ell^{{\Upsilon},\ell_{\tau}}(a_\nu)\setminus \La_\ell^{{\La_L},\ell_{\tau}}(a_\nu)$, so we have 
\beq
  \dist \pa{ x_{\lambda\up{a_\nu}}, \Ups \setminus \La_\ell(a_\nu)} > \ell_\tau \qtx{and}\dist \pa{ x_{\lambda\up{a_\nu}}, {\La_L}\setminus \La_\ell(a_\nu)} \le \ell_\tau,
 \eeq
 so there is $u_0 \in {\La_L} \setminus \Ups$ such that $\norm{ x_{\lambda\up{a_\nu}}- u_0  }\le  {\ell_\tau}$.
We now assume $y\in  \Ups^{{\La_L},2{\ell_\tau} }$, so we have  $\norm{y- u_0  }>2 \ell_\tau$.  We conclude that
\beq
\abs{ x_{\lambda\up{a_\nu}}-y}\ge  \norm{y- u_0  }- \norm{ x_{\lambda\up{a_\nu}}- u_0  } > 2\ell_\tau-\ell_\tau=\ell_\tau.
\eeq
Thus
\begin{align}\label{vthetabadg}
\abs{{\vtheta_\nu}(y)}&\le \abs{{\vtheta_\nu}(y) - \vphi_{\lambda\up{a_\nu}}(y)}+  \abs{  \vphi_{\lambda\up{a_\nu}}(y)}\\   \notag & \le {2}\e^{-m_1h_I (\lambda\up{a_\nu}) \ell_\tau}\e^{L^\beta}+ \e^{-   mh_I (\lambda\up{a_\nu}) \ell_\tau}\\   \notag & \le 3\e^{-m_1h_I (\lambda\up{a_\nu}) \ell_\tau}\e^{L^\beta}\le  
 3\e^{-m_1^\pr h_I (\nu) \ell_\tau}\e^{L^\beta},
 \end{align}
using  \eq{difeq86},  \eq{hypdec}, and  \eq{tildedist132}. It follows that
\beq\label{bsum40}
\abs{\sum_{\nu \in\sigma_{\cG,J} (H_{{\Upsilon}})\setminus \sigma_{\cG,J}^{\La_L} (H_{{\Upsilon}})}\e^{-t \pa{\nu^2-\lambda^2}}{{\vtheta_\nu}(y)} \scal{ {\vtheta_\nu},\psi}}
\le3 (L+1)^{\frac {3d}2}\e^{-m_1^{\pr\pr} h_I (\lambda)\ell_\tau}\e^{L^\beta}\abs{\psi(v_3)},
\eeq
for some  $v_3 \in {\Upsilon}$.

Combining \eq{sum000227}, \eq{gsum53} and \eq{bsum40}, we get for $y\in  \Ups^{{\La_L},2\fl{\ell^\tau} }$ that 
\beq\label{sum000227334}
\abs{\scal {\e^{-t \pa{H_{\Ups}^2-\lambda^2}}P\delta_y,\psi} }\le\e^{4L^\beta}\e^{-m_1^{\pr\pr} h_{I} (\lambda)\ell_\tau}\abs{\psi(v_4)} ,
\eeq
for some $v_4 \in {\Upsilon} \cup\partialex^{ {\La_L}}  {\Upsilon}$.

From  Lemma~\ref{lemkey2}, we get,
\begin{align}\label{aplemkey2}
¥\abs{\scal {\e^{- \frac {m_1^{\pr\pr}\ell_\tau}{A^2 }\pa{H_{\Ups}^2-\lambda^2}}\bar P\delta_y,\psi} }&\le (\ell+1)^{\frac d 2} \e^{m_1^{\pr\pr}(1- \ell^{-\kappa})^2h_{I_{\ell}}(\lambda)\ell_\tau  }\abs{\psi(v_5)},
\end{align}¥
for some $v_5\in \Ups$.

Combining \eq{sum000227334} and   \eq{aplemkey2},  we get 
\begin{align}\label{sum00022733426}
¥&\abs{\scal {\e^{-\frac {m_1^{\pr\pr}\ell_\tau}{A^2} \pa{H_{\Ups}^2-\lambda^2}}\delta_y,\psi} }\le 
2\e^{4L^\beta}\ \e^{- m_1^{\pr\pr} (1-\ell^{-\kappa})^2 h_{I_{\ell}}(\lambda)\ell_\tau }\abs{\psi(v_6)},
\end{align}¥
for some $v_6\in  {\Upsilon} \cup\partialex^{ {\La_L}}  {\Upsilon}$.

Using \eq{FHestRy} (with $ m= m_1^{\pr\pr}$,  $I=I_\ell$, $R=\ell_\tau$),     noting  $y\in \Ups^{{\La_L},2{\ell_\tau} }$ implies  $R_y^{\partial^{ \La_L}_{\mathrm{in}}\Ups}\ge 2\ell_\tau -1>\ell_\tau$, we get 
   \begin{align}\label{eq:anpar368888}
¥&\abs{\scal {F_{\frac {m_1^{\pr\pr}\ell_\tau}{A^2},\lambda} (H_\Ups)\delta_y,\Gamma_{ \boldsymbol{ \partial}^{ \Theta}  \Ups} \psi} }
\le 70 L^dA^{-1}  \e^{-m_1^{\pr\pr}h_{I}(\lambda)\ell_\tau }\abs{\psi(v)}, 
\end{align}¥
for some  $v \in \partial^{ \Theta}_{\mathrm{ex}}\Ups$.

 Combining \eq{sum00022733426} and \eq{eq:anpar368888}we get
 \begin{align}
 ¥\abs{\psi(y)}\le C_{d,m_-} \ell^{\kappa^\pr} \e^{4L^\beta}\ \e^{- m_1^{\pr\pr} (1-\ell^{-\kappa})^2 h_{I_{\ell}}(\lambda)\ell_\tau }\abs{\psi(v)}\le  \e^{- m_5 h_{I_{\ell}}(\lambda)\ell_\tau }\abs{\psi(v)},
\end{align}¥
 for some $v \in  {\Upsilon} \cup\partialex^{ {\La_L}}  {\Upsilon}$,
 where $m_5$ is as in \eq{M21}. Repeating the procedure as many times as needed, we can require $v \in \in{\partial}^{\La_L, 2{\ell_\tau} }  {\Upsilon}$.

Now  suppose $\La_L$ is level spacing for $H$, and let   $\lambda \in \sigma_{I_{\ell}} (H_{{\La_L}})\setminus\pa{  \sigma_\cG (H_{{\La_L}})\cup  \sigma_\Ups (H_{\La_L})}$.   If $\lambda \notin   \sigma_\cG (H_{{\La_L}})$, it follows from Lemma \ref{lem:ident_eigensyst}(i)(c) that \eq{distpointeiga} holds for all $a \in\cG$. If $\lambda \notin   \sigma_\Ups (H_{\La_L})$, the  argument in \eq{distpointsigeigab}, modified by the use of  \eq{tildedist} instead of  \eq{tildedist132}, using \eq{gamtzetabeta2}, gives $\abs{\lambda -\nu}\ge  \tfrac 12 \e^{-\L^\beta}$ for all $ \nu\in\sigma_\cB(H_{{\Upsilon}})$.  Thus  we have \eq{distpointeig1}, which implies \eq{gggsum}.
 \end{proof}

\subsection{Suitable covers of a box}\label{subsecsc}
To perform the multiscale analysis in an efficient way, it is convenient to use  a canonical  way to cover a box of side $L$ by boxes of side $\ell <L$.  We will  use the idea of
  suitable covers of a box as in \cite[Definition~3.12]{GKber}, adapted to the discrete case. 
   Since we will use  \eq{decayest12} to get decay of eigenfunctions in scale $L$ from decay in scale $\ell$, we will need to make sure $R_y^{\partial^{ \La_L}_{\mathrm{in}}\La_\ell} \approx \frac \ell 2$.  We will do so by ensuring that for all $y \in\La_L$ we can find a box $\La_\ell$ in the cover such that $y \in 
\La_\ell$ with $R_y^{\partial^{ \La_L}_{\mathrm{in}}\La_\ell} \approx \frac \ell 2\ge  \frac {\ell -\ell^\vs} 2 -1 $ for a fixed $\vs \in (0,1)$.   Later we will require $\vs$ as in \eq{vsdef} for convenience.

\begin{definition}\label{defcov} Fix  $\vs \in (0,1)$.  Let $\La_L=\La_L(x_0)$, $x_0 \in \R^d$ be  a box in $\Z^d$, and let $\ell < L$.
A suitable $\ell$-cover of $\La_L$ is
the collection of  boxes  
\begin{align}\label{standardcover}
{\mathcal C}_{L,\ell}={\mathcal C}_{L,\ell} \left(x_0 \right)= \set{ {\Lambda}_{\ell}(a)}_{a \in  \Xi_{L,\ell}},
\end{align}
where
\beq  \label{bbG}
 \Xi_{L,\ell}= \Xi_{L,\ell}(x_0):= \set{ x_0+ {\rho}\ell^\vs  \Z^{d}}\cap \La_L^\R 
\mqtx{with}   {\rho}\in  \br{\tfrac {1} {2},1}   \cap \set{\tfrac {L-\ell}{2 \ell^{\vs} k}; \, k \in \N }.
\eeq
We call ${\mathcal C}_{L,\ell} $ the suitable $\ell$-cover of $\La_L$  if   ${\rho} ={\rho}_{L,\ell}: =\max \br{\tfrac {1} {2},1}   \cap \set{\tfrac {L-\ell}{2 \ell^{\vs} k}; \, k \in \N }.
$
\end{definition}

We adapt \cite[Lemma~3.13]{GKber} to our context.

\begin{lemma}\label{lemcover } Let $\ell \le \frac  L   2$. Then for  every  box   $\La_L=\La_L(x_0)$, $x_0 \in \R^d$, a suitable $\ell$-cover
  ${\mathcal C}_{L,\ell}={\mathcal C}_{L,\ell}\left(x_0\right) $  satisfies
  \begin{align}\label{nestingproperty} 
&\La_L=\bigcup_{a \in  \Xi_{L,\ell}} {\Lambda}_{\ell}(a);\\ \label{covproperty}
&\text{for all}\; \; b \in\La_L  \sqtx{there is} {\Lambda}_{\ell}^{(b)} \in {\mathcal C}_{L,\ell}  \sqtx{such that}  
  b\in \pa{ {\Lambda}_{\ell}^{(b)}}^{\La_L,\frac {\ell -\ell^\vs}2},\\ \notag
 & \qtx{i.e.,} \La_L=\bigcup_{a \in  \Xi_{L,\ell}} {\Lambda}_{\ell}^{\La_L, \frac {\ell -\ell^\vs}2}(a);
    \\ \label{number}
&   \# \Xi_{L,\ell}= \pa{ \tfrac{L-\ell} {{\rho} \ell^\vs}+1}^{d }\le   \pa{\tfrac{2L} {\ell^\vs}}^{d}.  
\end{align}
Moreover,  given $a \in x_0 + {\rho}\ell^\vs  \Z^{d}$ and $k \in \N$, it follows that
\beq \label{nesting}
{\Lambda}_{(2  k {\rho}\ell^\vs  + \ell)}(a)= \bigcup_{b \in  \{ x_0 + {\rho}\ell^\vs  \Z^{d}\}\cap {\Lambda}^{\R}_{(2k {\rho} \ell^\vs + \ell)}(a) } {\Lambda}_{\ell}(b),
\eeq
and  $ \{ \Lambda_{\ell}(b)\}_{b \in  \{ x_0 + {\rho}\ell^\vs  \Z^{d}\}\cap {\Lambda}^{\R}_{(2k {\rho} \ell^\vs + \ell)}(a) }$ is a suitable $\ell$-cover of the box $\Lambda_{(2k {\rho} \ell^\vs + \ell)}(a)$. 
\end{lemma}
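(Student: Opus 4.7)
The plan is to verify the four statements by direct inspection of the grid $\Xi_{L,\ell}$. First I would establish that a valid $\rho \in [\tfrac12,1]$ exists: for $\ell$ large enough (so that $L - \ell \ge 2\ell^\vs$, which is implied by $\ell \le L/2$ past a scale depending on $\vs$), the interval $[(L-\ell)/(2\ell^\vs),\,(L-\ell)/\ell^\vs]$ has length at least one and hence contains an integer $k$, and one sets $\rho = (L-\ell)/(2\ell^\vs k)$, so that $(L-\ell)/2 = k\rho\ell^\vs$ is exactly $k$ grid steps in each coordinate direction. In particular, $\Xi_{L,\ell} = x_0 + \rho\ell^\vs\set{-k,\ldots,k}^d$, giving immediately $\#\Xi_{L,\ell} = (2k+1)^d = ((L-\ell)/(\rho\ell^\vs)+1)^d$, and the upper bound $(2L/\ell^\vs)^d$ in \eq{number} follows from $\rho \ge \tfrac12$.

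Next, given $b \in \La_L$, I would pick $a^{(b)} \in \Xi_{L,\ell}$ by a coordinatewise nearest-neighbor rule, setting $a^{(b)}_j = x_{0,j} + \rho\ell^\vs m_j$ with $m_j \in \set{-k,\ldots,k}$ minimizing $|a^{(b)}_j - b_j|$. If $|b_j - x_{0,j}| \le (L-\ell)/2$ (the interior case for direction $j$) then $|a^{(b)}_j - b_j| \le \rho\ell^\vs/2 \le \ell^\vs/2$; otherwise (the boundary case) the nearest grid point is the extreme one and $|a^{(b)}_j - b_j| \le \ell/2$. In either case $b \in \La_\ell(a^{(b)})$, proving \eq{nestingproperty}. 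For the strengthened property \eq{covproperty}, any $z \in \La_L \setminus \La_\ell(a^{(b)})$ has some coordinate $j_0$ with $|z_{j_0} - a^{(b)}_{j_0}| > \ell/2$; in the interior case the triangle inequality gives $|b_{j_0} - z_{j_0}| > \ell/2 - \rho\ell^\vs/2 \ge (\ell-\ell^\vs)/2$, while in the boundary case the constraint $z \in \La_L$ forces $z_{j_0}$ to lie on the opposite side of $a^{(b)}_{j_0}$ from $b_{j_0}$, and a short calculation using $|a^{(b)}_{j_0} - x_{0,j_0}| = (L-\ell)/2$ and $|z_{j_0} - x_{0,j_0}| \le L/2$ yields $|b_{j_0} - z_{j_0}| > \ell/2$. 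Integrality of $|b_{j_0} - z_{j_0}|$ then upgrades the real-valued strict bound to $> \fl{(\ell-\ell^\vs)/2}$, proving \eq{covproperty}.

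For the scale-consistency statement \eq{nesting}, put $L' = 2k'\rho\ell^\vs + \ell$ and observe the identity $(L'-\ell)/(2\ell^\vs k') = \rho$, so the same $\rho$ realizes the construction of Definition~\ref{defcov} for the box $\La_{L'}(a)$; since $a$ lies in $x_0 + \rho\ell^\vs\Z^d$, the subgrid $\set{x_0 + \rho\ell^\vs \Z^d} \cap \La^\R_{L'}(a)$ coincides with $\Xi_{L',\ell}(a)$, and applying \eq{nestingproperty} to this cover yields \eq{nesting}. The only mildly delicate point throughout is verifying the strict inequality ``$> \fl{(\ell-\ell^\vs)/2}$'' in \eq{covproperty}, which needs separate handling of the interior and boundary coordinates and uses the integrality of lattice-point separations to pass from the real bound to the floor-based one; everything else is routine lattice bookkeeping, so I do not anticipate any substantive obstacle.
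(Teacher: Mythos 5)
Your overall strategy---coordinatewise nearest-neighbor selection of $a^{(b)}$, splitting into interior and boundary coordinates, and upgrading the strict real inequality to the floor-based bound by integrality---is the right one and is essentially the argument behind \cite[Lemma~3.13]{GKber}, which the paper cites without reproducing a proof. The case analysis for \eqref{covproperty} is careful and correct, and the subgrid identification behind \eqref{nesting} is also fine.

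There is, however, a genuine gap at the very first step: you assert ``$\Xi_{L,\ell}=x_0+\rho\ell^\vs\set{-k,\ldots,k}^d$'' as if it were immediate, but it does not follow from Definition~\ref{defcov} as written there. The definition sets $\Xi_{L,\ell}=\set{x_0+\rho\ell^\vs\Z^d}\cap\La_L^\R$, and since $\vs<1$ the spacing $\rho\ell^\vs$ is much smaller than $\ell/2$ once $\ell$ is large; consequently all lattice points $x_0+\rho\ell^\vs m$ with $k<\norm{m}\le k+\ell^{1-\vs}/(2\rho)$ still satisfy $\rho\ell^\vs\norm{m}\le (L-\ell)/2+\ell/2=L/2$ and hence lie in $\La_L^\R$. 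For such extra indices $\La_\ell(x_0+\rho\ell^\vs m)\not\subset\La_L$, so the equality in \eqref{nestingproperty} overshoots and the exact count in \eqref{number} fails. (This is a defect inherited from the $\vs=1$ setting of \cite{GKber,EK}, where $\rho\ell\ge\ell/2$ automatically forces $\norm{m}\le k$; with $\vs<1$ the intended definition is $\Xi_{L,\ell}=\set{x_0+\rho\ell^\vs\Z^d}\cap\La_{L-\ell}^\R$, which is exactly the set $x_0+\rho\ell^\vs\set{-k,\ldots,k}^d$ you use.) A correct proof should either flag and repair the definition or explicitly verify your claimed form of $\Xi_{L,\ell}$, since otherwise the first inference is unjustified; everything downstream is fine once this is fixed.
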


Note that $\Lambda_{\ell}^{(b)}$  does not denote a box  centered at $b$, just some box in ${\mathcal C}_{L,\ell} \left(x_0 \right)$ satisfying \eq{covproperty}.     By  $\Lambda_{\ell}^{(b)}$ we will always mean such a box.  We will use
\beq
R_b^{\partial^{\La_L}_{\mathrm{in}}\Lambda_{\ell}^{(b)}}  \ge  \tfrac {\ell -\ell^\vs} 2 -1 \qtx{for all} b \in \La_L.
\eeq
Note also  that  ${\rho} \le 1$ yields \eq{covproperty}.   We specified ${\rho}={\rho}_{L,\ell}$ in  for \emph{the} suitable $\ell$-cover for convenience, so there is no ambiguity  in the definition of ${\mathcal C}_{L,\ell} \left(x_0 \right) $.

Suitable covers are convenient for the construction of buffered subsets (see Definition~\ref{defbuff}) in the multiscale analysis.   We  will use the following observation:

\begin{remark} \label{remdisj}
 Let ${\mathcal C}_{L,\ell}$ be a suitable $\ell$-cover for the box $\La_L$, and set
\beq\label{defkell}
k_\ell= k_{L,\ell }=  \fl{\rho^{-1} \ell^{1-\vs}} +1.
\eeq
Then for all $a,b \in {\mathcal C}_{L,\ell}$ we have
\beq\label{disjbox}
\La_\ell^\R(a) \cap \La_\ell^\R(b) =\emptyset \quad \iff \quad \norm{a-b} \ge k_\ell \rho \ell^\vs .\eeq
\end{remark}

\subsection{Probability estimate for   level spacing}\label{secprobest}

The eigensystem multiscale analysis uses  a probability estimate of Klein and Molchanov  \cite[Lemma~2]{KlM},  which we state as in \cite[Lemma~2.1]{EK}.
If $J\subset \R$, we set $\diam J =\sup_{s,t \in J} \abs{s-t}$.

\begin{lemma}\label{lemSep}  Let  $H_{\bom}$ be an Anderson model as in Definition~\ref{defAnd}.
Let $\Th \subset \Z^d$ and $L>1$. Then
\begin{align}
\P\set{\Th \sqtx{is} L\text{-level spacing for}\;H_{\bom} }\ge 1 -Y_{\mu} \e^{-(2\alpha-1)L^\beta}\abs{\Th}^2.
\end{align}
where
\beq
 Y_{\mu} = 2^{2\alpha-1} \wtilde{K}^2 \pa{ \diam \supp \mu + 2d  +1},
\eeq
with $\wtilde{K}=K$ if $\alpha=1$ and $\wtilde{K}=8K$ if  $\alpha \in ( \frac 12,1)$.

In the special case of a box $\La_L$, we have
\begin{align}\label{probsep}
\P\set{\La_L \sqtx{is level spacing for} H }\ge 1 -Y_{\mu}\pa{L+1}^{2d} \e^{-(2\alpha-1)L^\beta}.
\end{align}
\end{lemma}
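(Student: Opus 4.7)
The plan is to derive the estimate from the H\"older-continuous form of Minami's estimate combined with a union bound over a short-interval cover of the spectrum, essentially following Klein--Molchanov.

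First I would invoke the Minami-type inequality adapted to an $\alpha$-H\"older continuous single-site distribution: for any bounded interval $J\subset\R$,
\begin{align*}
\P\set{\tr\Chi_J(H_{\bom,\Th})\ge 2} \;\le\; \tfrac 12\, \E\br{\tr\Chi_J(H_{\bom,\Th})\pa{\tr\Chi_J(H_{\bom,\Th})-1}} \;\le\; \tfrac 12\, \wtilde K^2 \abs{J}^{2\alpha} \abs{\Th}^2.
\end{align*}
The first inequality is Markov; the second is the H\"older Minami bound. For $\alpha=1$ this is the original Minami estimate with $\wtilde K=K$, and for $\alpha\in(\tfrac 12, 1)$ one obtains $\wtilde K = 8K$ via a smoothing/approximation argument on the single-site distribution, in the spirit of Combes--Germinet--Klein.

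Next, I would reduce level spacing to a two-eigenvalue collision estimate: if $\Th$ is not $L$-level spacing then $H_{\bom,\Th}$ either has a multiple eigenvalue or a pair of distinct eigenvalues within $\e^{-L^\beta}$, and in either case some interval of length $\e^{-L^\beta}$ contains at least two eigenvalues (counted with multiplicity). Since $\sigma(H_{\bom,\Th})\subset \sigma(-\Delta_\Th)+\supp\mu$ lies in a fixed interval $\cI$ of length at most $\diam\supp\mu + 4d$, I would cover $\cI$ by intervals $\set{J_j}_{j=1}^M$ of length $2\e^{-L^\beta}$, arranged so that every subinterval of length $\e^{-L^\beta}$ lies in some $J_j$, with the count bounded sharply by $M\le (\diam\supp\mu + 2d +1)\e^{L^\beta}$ after absorbing the additive end-effects into the length bound. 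A union bound then gives
\begin{align*}
\P\set{\Th \sqtx{is not} L\text{-level spacing}} \;\le\; \sum_{j=1}^M \P\set{\tr\Chi_{J_j}(H_{\bom,\Th})\ge 2} \;\le\; \tfrac 12\, M \wtilde K^2 \pa{2\e^{-L^\beta}}^{2\alpha}\abs{\Th}^2,
\end{align*}
which rearranges to the claimed bound with $Y_\mu = 2^{2\alpha-1}\wtilde K^2(\diam\supp\mu + 2d +1)$. The box case follows from $\abs{\La_L}\le (L+1)^d$.

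The main obstacle is the H\"older Minami estimate itself with a sharp constant, since outside the Lipschitz regime one cannot differentiate the single-site density directly; the smoothing/approximation procedure needed to reduce the general $\alpha\in(\tfrac 12,1)$ case to the absolutely continuous one, and to produce the factor $\wtilde K=8K$, is the only nontrivial ingredient. The remaining steps, namely the localization of the spectrum in a bounded interval, the covering, and the union bound, are elementary.
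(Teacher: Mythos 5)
The paper does not give its own proof of this lemma: it is quoted verbatim from Klein--Molchanov \cite[Lemma~2]{KlM} as restated in \cite[Lemma~2.1]{EK}, and the multiscale analysis simply invokes the statement. Your reconstruction is indeed the standard Klein--Molchanov argument that those references use --- the Markov inequality applied to the H\"older-continuous Minami estimate (the $\alpha\in(\frac12,1)$ case coming from the Combes--Germinet--Klein generalization, which is where $\wtilde K=8K$ originates), followed by a covering of the deterministic spectral window $\sigma(H_{\bom,\Th})\subset[-2d,2d]+\supp\mu$ by overlapping intervals of length $2\e^{-L^\beta}$ and a union bound --- so the approach matches.

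One detail in your accounting is off: the spectral window $[-2d,2d]+\supp\mu$ has diameter $\diam\supp\mu+4d$, not $\diam\supp\mu+2d$, so the overlapping cover actually requires on the order of $(\diam\supp\mu+4d+1)\e^{L^\beta}$ intervals. Your count $M\le(\diam\supp\mu+2d+1)\e^{L^\beta}$ is too small by roughly a factor of two in the $d$-dependence, and you should either run the covering with the correct range or track more carefully how the quoted $Y_\mu$ is normalized in \cite{KlM} and \cite{EK} (e.g.\ whether their Minami constant absorbs an additional $\tfrac12$). This only affects the explicit constant $Y_\mu$, not the form $\e^{-(2\alpha-1)L^\beta}\abs{\Th}^2$ or any application in the paper.
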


\section{Eigensystem multiscale analysis}\label{secEMSA}
In this section we fix an   Anderson model $H_{\bom}$ and prove Theorem~\ref{thmMSA}.
Note that $\vrho$ is given in  \eq{defvrho}.

 \begin{proposition}\label{propMSA}  Fix $m_- >0$.  There exists a 
 a finite scale   $\cL= \cL(d,m_-) $ with the following property:  Suppose for some scale 
$L_0 \ge \cL$ we have
  \begin{align}\label{initialconinduc}
\inf_{x\in \R^d} \P\set{\La_{L_0} (x) \sqtx{is}  (m_0,I_0) \text{-localizing for} \; H_{\bom}} \ge 1 -  \e^{-L_0^\zeta},
\end{align}
where  $I_0=(E-{A_0},E+{A_0})\subset \R$,   with $E\in \R$ and $A_0>0$, and  
 \beq\label{upbm2}
m_- L_0^{-\kappa^\pr} \le  m_0   \le  \tfrac 1 2 \log \pa{1 + \tfrac {A_0}{4d}}.
 \eeq
Set
 $L_{k+1}=L_k^\gamma$,  $A_{k+1}= A_k (1- L_k^{-\kappa})$, and $I_{k+1}= (E-A_{k+1}, E+A_{k+1})$,   for $k=0,1,\ldots$. 
 Then  for all $k=1,2,\ldots$ 
  we have
   \begin{align} \label{MSALk}
\inf_{x\in \R^d} \P\set{\La_{L_k} (x) \sqtx{is} ( m_k , I_k,I_{k-1})  \text{-localizing for} \; H_{\bom}} \ge 1 -  \e^{-L_k^\zeta} ,
\end{align}
where 
\begin{gather}\label{Minduc2}
m_-   L_k^{-\kappa^\pr}  <  m_{k-1}\pa{1- C_{d,m_-}  L_{k-1}^{-\vrho}}
 \le m_k < \tfrac 1 2 \log \pa{1 + \tfrac {A_k}{4d}}.  \end{gather}
 \end{proposition}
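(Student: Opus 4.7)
The plan is induction on $k$, promoting $(m_k, I_k, I_{k-1})$-localization at scale $L_k$ to $(m_{k+1}, I_{k+1}, I_k)$-localization at scale $L_{k+1} = L_k^\gamma$; setting $I_{-1}:=I_0$, the base case uses the hypothesis \eq{initialconinduc} directly. Fix $x_0 \in \R^d$ and write $L=L_{k+1}$, $\ell=L_k$. I would work inside the suitable $\ell$-cover $\mathcal{C}_{L,\ell}(x_0)$ of $\La_L(x_0)$ from Definition~\ref{defcov}, and split the argument into a probabilistic step that isolates a favorable configuration of $\ell$-boxes and a deterministic step that extracts the desired localization on that configuration.

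For the probabilistic step I would call $\La_\ell(a) \in \mathcal{C}_{L,\ell}$ \emph{bad} if it fails to be $(m_k,I_k,I_{k-1})$-localizing for $H_\bom$, and declare the configuration \emph{admissible} if the bad boxes are so sparse that they can be grouped into clusters far from each other, each cluster enclosed in an $(m_k,I_k)$-buffered subset $\Upsilon_j\subset\La_L$ in the sense of Definition~\ref{defbuff}, with the surrounding good $\ell$-boxes playing the role of $\cG_{\Upsilon_j}$. Non-admissibility forces the existence of $K_L$ bad boxes at mutual distance $\ge k_\ell\rho\ell^\vs$ (disjointness being controlled by Remark~\ref{remdisj}); by independence, by $|\mathcal{C}_{L,\ell}|\le (2L/\ell^\vs)^d$ from \eq{number}, and by Lemma~\ref{lemSep} for the level spacing of $\La_L$, the probability of a non-admissible configuration is bounded by $(2L/\ell^\vs)^{dK_L}\e^{-K_L L_k^\zeta}+Y_\mu(L+1)^{2d}\e^{-(2\alpha-1)L^\beta}$. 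The inequalities $\gamma<\sqrt{\zeta/\xi}$ and $\zeta<\beta$ in \eq{ttauzeta} make it possible to choose $K_L$ polynomial in $L$ so that this total is $\le \e^{-L^\zeta}$.

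On the admissible event I would run the deterministic argument, which splits according to whether an eigenvalue $\lambda$ of $H_{\bom,\La_L}$ in $I_{k+1}$ is \emph{exceptional} or \emph{typical}. Exceptional eigenvalues are those lying in the image $\bigcup_j\sigma_{\Upsilon_j}(H_{\La_L})$ of the injections \eq{injbad}; for these, \eq{difeqr} identifies the corresponding eigenfunction with a $\sigma_\cB(H_{\Upsilon_j})$-eigenfunction of some $\Upsilon_j$ up to an exponentially small error, and Lemma~\ref{lem:ident_eigensyst}(ii)(c) then assigns the required localization center inside $\Upsilon_j$ with $h_{I_k}$-modulated decay. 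For typical eigenvalues, the separation hypothesis \eq{distpointeig1} holds with respect to every $\Upsilon_j$, and Lemma~\ref{lembad}(i) delivers one-step decay at rate $m_5 h_{I_k}(\lambda)\ell_\tau$ across each buffered region; for a point $b\in\La_L$ lying outside every $\Upsilon_j$, I would use the good $\ell$-box $\La_\ell^{(b)}$ provided by \eq{covproperty} together with Lemma~\ref{lem:ident_eigensyst}(i)(c) and Lemma~\ref{lemdecay2}(ii) to gain the full rate $m_k h_{I_k}(\lambda)R_b^{\partial^{\La_L}_{\mathrm{in}}\La_\ell^{(b)}}$. Chaining these one-step estimates along a path from any $y\in\La_L$ to its eventual localization center inflates them into the global decay required by $(m_{k+1},I_{k+1},I_k)$-localization.

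The main obstacle lies in tracking the losses in the decay rate so that \eq{Minduc2} is verified. Each iteration introduces multiplicative losses $(1-C_{d,m_-}\ell^{-(\tau-\gamma\beta-\kappa-\kappa')})$ and $(1-C_{d,m_-}\ell^{-(1-\tau)/2})$ coming from \eq{M21} and \eq{m4}, and the passage from $h_{I_{k-1}}$-modulated decay at scale $\ell$ to $h_{I_k}$-modulated decay at scale $L$ absorbs a further factor of the form $(1-L_k^{-\kappa})$ through the lower bound \eq{lowerbdh}; the worst of these is precisely $(1-C_{d,m_-}L_k^{-\vrho})$ with $\vrho$ as in \eq{defvrho}, giving $m_{k+1}\ge m_k(1-C_{d,m_-}L_k^{-\vrho})$. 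Because $L_k=L_0^{\gamma^k}$ with $\vrho>0$, the infinite product converges to a positive limit, and the constraint \eq{gamtzetabeta2} ensures that this limit stays above the threshold $m_- L_k^{-\kappa'}$ imposed by \eq{upbm} for re-application of the lemmas at the next scale, thereby closing the induction.
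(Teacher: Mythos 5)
Your proposal follows essentially the same strategy as the paper: the paper proves Proposition~\ref{propMSA} by a short induction that applies Lemma~\ref{lemInduction} at each step (the only extra observation being that $(M,I_\ell,I)$-localizing implies $(M,I_\ell)$-localizing, so the hypothesis of Lemma~\ref{lemInduction} is reproduced at the next scale), and you have unwound the content of Lemma~\ref{lemInduction} directly into the inductive step, with the same split into a probabilistic admissibility event and a deterministic case analysis on that event, and essentially the same rate bookkeeping. Two details are missing from your outline, both of which the paper's proof of Lemma~\ref{lemInduction} supplies.

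First, Definition~\ref{defbuff} requires each buffered subset $\Ups_j$ to be $L$-level spacing for $H_\bom$, but your displayed probability bound accounts only for the level spacing of $\La_L$ itself. Since the candidate buffered sets are $\bom$-dependent (they are determined by where the bad $\ell$-boxes sit), one must union-bound over all $\Ups_\Phi$ with $\Phi\subset\Xi_{L,\ell}$ $\G_2$-connected and $\abs{\Phi}\le N_\ell=\fl{\ell^{(\gamma-1)\tzeta}}$, applying Lemma~\ref{lemSep} to each; the paper counts these in \eq{cFN}, and the resulting combinatorial prefactor is still dominated by $\e^{-(2\alpha-1)L^\beta}$ because $(\gamma-1)\tzeta<\gamma\beta$. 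Second, your dichotomy between exceptional and typical eigenvalues is only exhaustive once one proves $\sigma_{I_\ell}(H_{\La_L})\subset\sigma_\cG(H_{\La_L})\cup\sigma_\cB(H_{\La_L})$, the analogue of \eq{claimsp}. This is a separate structural claim established by a normalization contradiction: any $\lambda\in\sigma_{I_\ell}(H_{\La_L})$ lying outside both sets would, by \eq{psidecgood63}, \eq{gggsum} and the covering \eq{LadecompU}, give $\norm{\psi_\lambda}<1$. Without it, the case analysis you sketch is not exhaustive.
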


The proof of Proposition~\ref{propMSA} relies on the following lemma, the induction step for the multiscale analysis.

\begin{lemma}\label{lemInduction}
 Fix $m_->0$.  Let $I=(E-{A},E+{A})\subset \R$,   with $E\in \R$ and $A>0$, and $m>0$. Suppose for some scale $\ell$ we have
 \begin{align}\label{hypMSAlem}
\inf_{x\in \R^d} \P\set{\La_{\ell} (x) \sqtx{is}  (m,I) \text{-localizing for} \; H_{\bom}} \ge 1 -  \e^{-\ell^\zeta},
\end{align}
where
\beq\label{upbm3}
m_-  \ell^{-\kappa^\pr} \le m  \le  \tfrac 1 2 \log \pa{1 + \tfrac {A}{4d}}.
 \eeq
Then, if $\ell$ is sufficiently large, we have (recall $L=\ell^\gamma$)
 \begin{align}
\inf_{x\in \R^d} \P\set{\La_{L} (x) \sqtx{is}   (M,I_\ell,I) \text{-localizing for} \; H_{\bom}} \ge 1 -  \e^{-L^\zeta},
\end{align}
where
 \begin{align}\label{Minduc}
m_-  L^{-\kappa^\pr}  <  m\pa{1- C_{d,m_-} \ell^{-\vrho}}
 \le M < \tfrac 1 2 \log \pa{1 + \tfrac {A(1-\ell^{-\kappa})}{4d}}.
\end{align}
\end{lemma}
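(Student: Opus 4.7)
The plan is to perform the induction step $\ell \to L = \ell^\gamma$ in the spirit of a standard multiscale analysis but using eigensystems. I cover $\La_L = \La_L(x_0)$ by its suitable $\ell$-cover $\mathcal C_{L,\ell}$ from Definition~\ref{defcov}, construct a deterministic event $\Omega_L$ of probability $\ge 1 - \e^{-L^\zeta}$ on which the non-localizing boxes in the cover are enclosed in a small family of $(m,I)$-buffered subsets of diameter $O(\ell)$, and then transfer exponential decay of eigenfunctions with eigenvalue in $I_\ell$ from scale $\ell$ to scale $L$ via Lemma~\ref{lem:ident_eigensyst} and Lemma~\ref{lembad}.

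\emph{Probability step.} Call $a \in \Xi_{L,\ell}$ \emph{good} if $\La_\ell(a)$ is $(m,I)$-localizing for $H_\bom$, \emph{bad} otherwise. Hypothesis \eqref{hypMSAlem} gives $\P\{a\text{ bad}\} \le \e^{-\ell^\zeta}$, and by Remark~\ref{remdisj} the events associated to $\La_\ell(a)$'s separated by $\ge k_\ell\rho\ell^\vs$ are independent. A standard Bernoulli-type union bound as in \cite{EK} shows that with probability $\ge 1 - \tfrac 1 2 \e^{-L^\zeta}$ the bad sites in $\Xi_{L,\ell}$ cluster into well-separated groups of total size $\le n_\ell = O(L^\zeta \ell^{-\zeta})$, the polynomial combinatorial factor $\pa{L/\ell^\vs}^{n_\ell d}$ being beaten by the $\e^{-n_\ell \ell^\zeta}$ gain thanks to $\gamma\zeta < 1$ from \eqref{ttauzeta0}. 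Adding, via Lemma~\ref{lemSep} and $\beta > \zeta$, that $\La_L$ and every candidate buffered subset of diameter $\le C\ell$ is level spacing for $H_\bom$ costs another $\tfrac 1 2 \e^{-L^\zeta}$, defining $\Omega_L$. On $\Omega_L$ each bad cluster is enclosed in an $(m,I)$-buffered subset $\Upsilon_j \subset \La_L$ obtained by wrapping it with a collar of good $\La_\ell(b)$'s whose depth is a fixed multiple of $\ell$; condition~(iii) of Definition~\ref{defbuff} then holds with $\cG_{\Upsilon_j}$ the centers of the collar boxes.

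\emph{Decay step.} Fix $\bom \in \Omega_L$ and an eigenpair $(\vphi_\nu,\nu)$ of $H_{\La_L}$ with $\nu \in I_\ell$; let $x_\nu$ maximize $\abs{\vphi_\nu}$ on $\La_L$. For $y \in \La_L$ with $\norm{y - x_\nu} \ge L_\tau$, I iterate: if $y \notin \bigcup_j \Upsilon_j$, pick $\La_\ell^{(y)} \in \mathcal C_{L,\ell}$ with $y \in \pa{\La_\ell^{(y)}}^{\La_L,(\ell - \ell^\vs)/2}$ from \eqref{covproperty} (a good box), and apply Lemma~\ref{lem:ident_eigensyst}(ii)(c) (its hypothesis provided by the level spacing of $\La_L$); if $y \in \Upsilon_j$, apply Lemma~\ref{lembad}(ii) (whose hypothesis \eqref{distpointeig1} is likewise in force on $\Omega_L$). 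Each step produces a factor $\e^{-r(\nu) \cdot \ell}$ for a move of size $(\ell - \ell^\vs)/2$ toward $x_\nu$, with $r(\nu)$ at worst $m\pa{1 - C_{d,m_-}\ell^{-\vrho}} h_I(\nu)$ by \eqref{m4}, \eqref{m414}, \eqref{M21}, combined with \eqref{lowerbdh} to upgrade the $h_{I_\ell}$-modulation produced by Lemma~\ref{lembad} to the required $h_I$-modulation on $I_\ell$. Iterating along a chain $y_0 = y, y_1, \ldots$ that terminates in a neighborhood of $x_\nu$ after $O(L/\ell)$ steps yields $\abs{\vphi_\nu(y)} \le \e^{-M h_I(\nu)\norm{y-x_\nu}}$ with $M$ satisfying \eqref{Minduc}.

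\emph{Main obstacle.} The delicate point is the combinatorial-probabilistic matching: all failure events (non--level spacing of $\La_L$, too many bad sites, a cluster too large, non--level spacing of some $\Upsilon_j$) must fit in the single budget $\e^{-L^\zeta}$, which requires both $\zeta < \beta$ and $\gamma\zeta < 1$ with polynomial slack, both supplied by \eqref{ttauzeta0}. A subtler analytic point is reconciling the two modulations $h_I$ and $h_{I_\ell}$ that appear in Lemma~\ref{lem:ident_eigensyst}(ii)(c) and Lemma~\ref{lembad}, respectively: one uses $\Chi_{I_\ell} h_I \ge \ell^{-\kappa}$ from \eqref{lowerbdh} to absorb the mismatch into the $(1 - C_{d,m_-}\ell^{-\vrho})$ per-step loss, the exponent $\vrho$ being chosen in \eqref{defvrho} exactly so that all three constraints $\kappa$, $(1-\tau)/2$, and $\gamma\tau - (\gamma-1)\tzeta - 1$ are simultaneously controllable.
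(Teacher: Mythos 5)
Your outline — suitable $\ell$-cover, cluster the bad boxes, enclose them in $(m,I)$-buffered subsets, iterate decay via Lemma~\ref{lem:ident_eigensyst} and Lemma~\ref{lembad} — matches the paper's strategy, and the probability step is essentially right (one small slip: the candidate buffered subsets that must be included in the level-spacing event can have diameter up to $O(\ell N_\ell)$ with $N_\ell=\fl{\ell^{(\gamma-1)\tzeta}}$, not just $O(\ell)$; the relevant point is the cardinality bound on $\cF_N$, which is what makes the Lemma~\ref{lemSep} union bound close with $\zeta<\beta$). The decay step, however, has a genuine gap.

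You take $x_\nu$ to be a maximizer of $\abs{\vphi_\nu}$ and then claim you can iterate \eqref{psidecgoodpr} or \eqref{gggsum} ``along a chain that terminates in a neighborhood of $x_\nu$,'' with the parenthetical that the hypotheses of Lemma~\ref{lem:ident_eigensyst}(ii)(c) and Lemma~\ref{lembad}(ii) are ``provided by the level spacing of $\La_L$.'' That is not what those hypotheses are. To apply \eqref{psidecgoodpr} in the good box $\La_\ell^{(y)}$ you need $\nu\notin \sigma_{\set{\La_\ell^{(y)}}}(H_{\La_L})$, and to apply Lemma~\ref{lembad}(ii) in $\Ups_j$ you need $\nu\notin \sigma_\cG(H_{\La_L})\cup\sigma_{\Ups_j}(H_{\La_L})$; level spacing alone provides neither. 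The paper handles this by first proving the inclusion \eqref{claimsp}, $\sigma_{I_\ell}(H_{\La_L})\subset \sigma_\cG(H_{\La_L})\cup\sigma_\cB(H_{\La_L})$ (by contradiction: if $\nu$ were in neither set, the iteration would apply in every box and buffer, forcing $\norm{\psi_\nu}<1$), and by then choosing $x_\nu$ as the \emph{spectral pinning location}: a point in $\La_1(a_\nu)$ if $\nu\in\sigma_{\set{\La_\ell(a_\nu)}}(H_{\La_L})$, or a point in $\Ups_s^{\La_L,2\ell_\tau}$ if $\nu\in\sigma_{\Ups_s}(H_{\La_L})$. The disjointness statement \eqref{sigmaab} then guarantees that whenever $y$ is far from this $x_\nu$, the box or buffer containing $y$ is spatially disjoint from the pinning box/buffer, hence $\nu$ is not in its $\sigma_{\set{\cdot}}$, and the iteration step is legal. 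Your choice of $x_\nu$ as the maximizer does not give you this, so the chain is not guaranteed to be long enough — it may stop, without accumulating the required $M h_I(\nu)\norm{y-x_\nu}$ decay, at the pinning location if that happens to sit far from the maximizer. (A secondary inaccuracy: the reconciliation of $h_I$ and $h_{I_\ell}$ is not done by \eqref{lowerbdh}; the $h_{I_\ell}$-modulated estimate of Lemma~\ref{lembad} is used only as a factor $<1$, and all accumulated decay comes from the $h_I$-modulated good-box steps. The role of \eqref{lowerbdh} is to control the multiplicative prefactors and turn them into the $\pa{1-C\ell^{-\vrho}}$ loss.) Filling the gap requires adding the claim \eqref{claimsp} and replacing the maximizer choice by the pinning-location choice, with the disjointness argument \eqref{sigmaab} justifying each iteration step.
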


\begin{proof}
To prove the lemma we proceed as  in \cite[Proof of Lemma~4.5]{EK},  with some modifications.  The  crucial estimate  \eq{decayest12}  is a somewhat weaker statement than its counterpart  \cite[Eq.~(3.31)]{EK}.   For this reason we are forced to modify the definition
of an $\ell$-cover of a box, and use the version  given in Definition~\ref {defcov} with $\vs$ as in \eq{vsdef}, which differs from the version given in   \cite[Definition~3.10]{EK} which has $\vs=1$.  In particular, we have  \eq{disjbox},  while in \cite{EK} the corresponding  statement holds with the simpler $\norm{a-b} \ge 2 \rho \ell$.

 We assume \eq{hypMSAlem} \and \eq{upbm3} for a scale $\ell$. 
 We take
$\La_L=\La(x_0)$, where $x_0\in \R^d$,  and let
${\mathcal C}_{L,\ell}={\mathcal C}_{L,\ell} \left(x_0 \right)$ be the suitable $\ell$-cover of $\La_L$ (with $\vs$ as in \eq{vsdef}).  Given $a,b \in {\Xi}_{L,\ell}$, we will say that the boxes $\La_\ell(a)$ and $\La_\ell(b)$ are disjoint if and only if $ \La_\ell^\R(a) \cap \La_\ell^\R(b) =\emptyset$, that is, if and only if   $\norm{a-b} \ge k_\ell \rho \ell^\vs$  (see Remark~\ref{remdisj}).
  We take $N= N_\ell= \fl{\ell^{(\gamma-1)\tzeta}}$ (recall \eq{ttauzeta2}), and 
let  $\cB_N$ denote the event that there exist at most $N$ disjoint boxes in ${\mathcal C}_{L,\ell}$ that are not $(m,I)$-localizing for  $ H_{\bom}$. For sufficiently large $\ell$, we have, using \eq{number},  \eq{hypMSAlem}, and the fact that events on disjoint boxes are independent, that
\begin{align}\label{probBN}
\P\set{\cB_N^c}\le \pa{\tfrac{2L} {\ell^\vs}}^{(N+1)d} \e^{-(N+1)\ell^\zeta}= 2^{(N+1)d} \ell^{(\gamma-\vs)(N+1)d}\e^{-(N+1)\ell^\zeta} < \tfrac 12 \e^{-L^\zeta}.
\end{align}

We  now fix   $\bom \in\cB_N$.  There exists $\cA_N=\cA_N(\bom)\subset  \Xi_{L,\ell}=\Xi_{L,\ell} \left(x_0 \right)$ such that
$\abs{\cA_N}\le N$ and   $\norm{a-b} \ge k_\ell \rho \ell^\vs$ if $a,b \in \cA_N$ and $a\ne b$, with the following property:  if $a\in \Xi_{L,\ell} $ with
$\dist (a,\cA_N)\ge  k_\ell {\rho} \ell^\vs$, so $\La_\ell^\R(a)\cap \La_\ell^\R(b)=\emptyset$ for all $b\in \cA_N$,  the box $\La_\ell(a)$ is $(m,I)$-localizing for  $ H_{\bom}$.
 In other words,
\beq\label{implyloc}
a \in  \Xi_{L,\ell} \setminus \bigcup_{b\in \cA_N}\La^\R_{2(k_\ell-1) \rho \ell^\vs}(b) \quad \Longrightarrow  \quad \La_\ell(a) \sqtx{is} (m,I)\text{-localizing for} \quad  H_{\bom}.
\eeq

We  want to  embed the   boxes $\set{\La_\ell(b) }_{b \in \cA_{N}}$  into   $(m,I)$-buffered subsets of $\La_L$. To do so, we consider  graphs  $\G_i=\pa{ \Xi_{L,\ell}, \E_i}$, $i=1,2$,   
both having   $ \Xi_{L,\ell}$ as the set of vertices, with sets of edges given by
\begin{align}
 \E_1&=\set{\set{a,b}\in  \Xi_{L,\ell}^2;\;  \norm{a-b} \le (k_\ell-1)\rho \ell^\vs }\\  \notag
 & =\set{\set{a,b}\in  \Xi_{L,\ell}^2;\;  a\ne b \sqtx{and} \La^\R_\ell(a)\cap \La^\R_\ell(b)\ne \emptyset },\\
 \notag
 \E_2&=\set{\set{a,b}\in  \Xi_{L,\ell}^2;\; k_\ell\rho \ell^\vs\le  \norm{a-b}\le 3(k_\ell-1)\rho \ell^\vs}\\
 \notag & =\set{\set{a,b}\in  \Xi_{L,\ell}^2;\; \La^\R_\ell(a)\cap \La^\R_\ell(b)=\emptyset\sqtx{and} \ \norm{a-b}\le 3(k_\ell-1)\rho \ell^\vs}.
 \end{align}
 
Given $\Psi \subset \Xi_{L,\ell} $, we  let   $\overline{\Psi}= \Psi \cup \partial_{\mathrm{ex}}^{ \G_1} \Psi$, where $ \partial_{\mathrm{ex}}^{ \G_1} \Psi$,   the exterior boundary of $\Psi$ in the graph $\G_1 $, is defined by
\begin{align}
 \partial_{\mathrm{ex}}^{ \G_1} \Psi&= \set{a\in \Xi_{L,\ell}\setminus \Psi; \ \dist (a,\Psi)\le (k_\ell-1)\rho \ell^\vs }\\  \nn  
 &= \set{a\in \Xi_{L,\ell}\setminus \Psi; \ (b,a) \in \E_1 \sqtx{for some} b \in \Psi } .
 \end{align}

 Let  $\Phi\subset  \Xi_{L,\ell}$ be $\G_2$-connected, so $\diam \Phi \le 3\rho \ell \pa{\abs{\Phi}-1}$. Then
  \beq
\wtilde{\Phi}= \Xi_{L,\ell} \cap  \bigcup_{a\in \Phi} \La^\R_{(2{\rho} +1)\ell}(a)= \set{a\in \Xi_{L,\ell}; \; \dist (a,\Phi)\le {\rho} \ell  }
\eeq 
 is a  $\G_1$-connected subset of $ \Xi_{L,\ell}$ such that
 \beq\label{diamwphi}
 \diam \wtilde{\Phi} \le  \diam {\Phi} +2\rho \ell  \le  3(k_\ell-1)\rho \ell^\vs\pa{\abs{\Phi}-1}. 
 \eeq
We set 
\begin{align}\label{constUps}
\Ups_\Phi\up{0}  = \bigcup_{a \in \wtilde{\Phi}}  \La_\ell (a) \qtx{and}
\Ups_\Phi  = \Ups_\Phi \up{0} \cup 
\bigcup_{a \in \partial_{\mathrm{ex}}^{ \G_1}\wtilde{\Phi}} \La_\ell(a)       =     \bigcup_{a \in \overline{\wtilde{\Phi}}}\La_\ell(a).
\end{align}

Let $\set{\Phi_r}_{r=1}^R=\set{\Phi_r(\bom)}_{r=1}^R$ denote the $\G_2$-connected components of  $\cA_{N}$ (i.e., connected in the graph $\G_2$); we have 
$R \in \set{1,2,\ldots,N}$ and  $\sum_{r=1}^R\abs{\Phi_r}=\abs{\cA_N}\le N$.
We conclude  that   $\set{\wtilde{\Phi}_r}_{r=1}^R$ is a collection of disjoint, $\G_1$-connected subsets of $ \Xi_{L,\ell}$, such that
 \begin{gather}
  \dist (\wtilde{\Phi}_r,\wtilde{\Phi}_s)\ge  k_\ell\rho \ell^\vs \qtx{if} r\ne s. \label{distPhi}
 \end{gather}
  Moreover, it follows from \eq{implyloc} that
\beq
\label{implyloc2}
a \in \cG=\cG(\bom)=  \Xi_{L,\ell} \setminus \bigcup_{r=1}^R\wtilde{\Phi}_r\quad \Longrightarrow  \quad \La_\ell(a) \sqtx{is} (m,I)\text{-localizing for} \quad  H_{\bom}.
\eeq
In particular, we conclude that $\La_\ell(a)$ is $(m,I)$-localizing for $H_{\bom}$ for all $a \in \partial_{\mathrm{ex}}^{ \G_1}\wtilde{\Phi}_r$, $r=1,2,\ldots, R$.

Each $\Ups_r=\Ups_{\Phi_r}$, $r=1,2,\dots,R$,  clearly satisfies all the requirements to be an $(m,I)$-buffered subset of $\La_L$ with $\cG_{\Ups_r}=\partial_{\mathrm{ex}}^{ \G_1}\wtilde{\Phi}_r$  (see Definition~\ref{defbuff}), except that we do not know if $\Ups_r$ is   $L$-level spacing for $H_{\bom}$.
(Note that the sets $\{\Ups_r\up{0}\}_{r=1}^R$ are disjoint, but the sets $\set{\Ups_r}_{r=1}^R$ are not necessarily disjoint.)  Note also that it follows from \eq{diamwphi} that 
\beq\label{diamUps}
\diam \Ups_r \le \diam  \overline{\wtilde{\Phi}}_r  + \ell \le   (k_\ell-1)\rho \ell^\vs\pa{3\abs{\Phi_r}+1}+\ell\le 5\ell \abs{\Phi_r},
\eeq
so, using  \eq{gamtzetabeta}, we have
\beq\label{sumdiam}
\sum_{r=1}^R \diam \Ups_r\le 5\ell N  \le 5 \ell^{(\gamma-1)\tzeta  +1} \ll \ell^{\gamma \tau}=L^\tau.
\eeq

We can  arrange for $\set{\Ups_r}_{r=1}^R$ to be a collection of  $(m,I)$-buffered subsets of $\La_L$  as follows.  It follows from Lemma~\ref{lemSep} that 
for any $\Th\subset \La_L$ we have
\begin{align}\label{probspaced}
\P\set{\Th \sqtx{is} L\text{-level spacing for}\; H_{\bom} }\ge 1 -Y_{\mu} \e^{-(2\alpha-1)L^\beta}\pa{L+1}^{2d}.
\end{align}
 Let
\begin{align}
\cF_N=\bigcup_{r=1}^N  \cF(r), \sqtx{where} \cF(r)=\set{\Phi \subset \Xi_{L,\ell} ; \;  \Phi \sqtx{is}  \G_2\text{-connected}\sqtx{and} \abs{\Phi}=r}.
\end{align}
Setting $ \cF(r,a)=\set{\Phi \in \cF_r; \, a\in \Phi}$ for $a\in \Xi_{L,\ell}$, and noting that
 each vertex in the graph $\G_2$ has less than $\pa{ 6 k_\ell-5}^d\le \pa{14 \ell^{1-\vs}}^d$ nearest neighbors , we get
\begin{align}\nn
\abs{\cF(r,a)}\le  (r-1)! \pa{14 \ell^{1-\vs}}^{(r-1)d}\; & \Longrightarrow \; \abs{\cF(r)}\le (L+1)^d  (r-1)! \pa{14 \ell^{1-\vs}}^{(r-1)d}\\\label{cFN}
& \Longrightarrow \;\abs{\cF_N}\le (L+1)^d N! \pa{14 \ell^{1-\vs}}^{(N-1)d} .
\end{align}
Letting  $\cS_N$ denote that the event that the box $\La_L$ and the subsets  $\set{\Ups_\Phi}_{\Phi \in  \cF_N}$ are all  $L$-level spacing for  $ H_{\bom}$,   we get from \eq{probspaced} and \eq{cFN} that
\beq\label{probSN}
\P\set{\cS_N^c} \le Y_{\mu}\pa{1 +  (L+1)^d N! \pa{14 \ell^{1-\vs}}^{(N-1)d} }(L+1)^{2d} \e^{-(2\alpha-1)L^\beta} < \tfrac 12 \e^{-L^\zeta} 
\eeq
for sufficiently large $L$, since $(\gamma-1)\tzeta < (\gamma-1)\beta<\gamma \beta$ and $\zeta < \beta$.

We now define the event  $\cE_N= \cB_N \cap \cS_N$. It follows from \eq{probBN} and \eq{probSN} that
$
\P\set{\cE_N}> 1-  \e^{-L^\zeta}
$.
To finish the proof we need to show that for all  $\bom \in \cE_N$ the box $\La_L$ is 
$   (M,I_\ell,I)$-localizing for $ H_{\bom}$, where $M$ is given  in \eq{Minduc}.

Let us fix $\bom \in \cE_N$.   Then we have \eq{implyloc2}, $\La_L$ is level spacing for $ H_{\bom}$, and  the subsets $\set{\Ups_r}_{r=1}^R$ constructed in \eq{constUps} are $(m,I)$-buffered subsets of $\La_L$ for $ H_{\bom}$. It follows from \eq{covproperty} and Definition~\ref{defbuff}(iii) that
\beq\label{LadecompU}
\La_L=\set{ \bigcup_{a \in  \cG} {\Lambda}_{\ell}^{\La_L, \frac {\ell -\ell^\vs}2}(a)}\cup \set{\bigcup_{r=1}^R\Ups_r^{{\La_L},2\ell_\tau}}.
\eeq

Since $\bom$ is  fixed, we omit it from the notation.
 Let $\set{(\psi_\lambda,\lambda)}_{\lambda \in \sigma(H_{\La_L})}$ be an eigensystem for $H_{\La_L}$.  Given $a\in \cG$,
let
 $\set{(\vphi_{\lambda\up{a}}, \lambda\up{a})}_{\lambda\up{a}\in \sigma(H_ {\La_\ell (a)})}$ be an  $(m,I)$-localized eigensystem for $\La_\ell(a)
 $.  For $r=1,2,\ldots,R$,  let  $\set{(\phi_{\nu\up{r}},\nu\up{r})}_{\nu\up{r} \in \sigma(H_{\Ups_r})}$ be an eigensystem for $H_{\Ups_r}$, and set
 \beq\label{sigmanotsigma}
\sigma_{\Ups_r} (H_{\La_L})= \set{\wtilde{\nu}\up{r};  \nu\up{r}\in\sigma_\cB(H_{{\Upsilon_r}})}\subset  \sigma(H_{\La_L})\setminus\sigma_{\cG} (H_{\La_L}),
\eeq
where $\wtilde{\nu}\up{r}$ is given in \eq{injbad}, which  gives $\sigma_{\Ups_r} (H_{\La_L})\subset  \sigma(H_{\La_L})\setminus\sigma_{\cG_{\Ups_r}} (H_{\La_L})$, but the argument actually shows 
$\sigma_{\Ups_r} (H_{\La_L})\subset \sigma(H_{\La_L})\setminus\sigma_{\cG} (H_{\La_L})$.
We also set
\beq
\sigma_\cB(H_{\La_L}) =\bigcup_{r=1}^R \sigma_{\Ups_r} (H_{\La_L})\subset  \sigma(H_{\La_L})\setminus\sigma_{\cG} (H_{\La_L}).
\eeq

We claim
\beq\label{claimsp}
\sigma_{I_{\ell}}(H_{\La_L})\subset \sigma_{\cG} (H_{\La_L}) \cup \sigma_\cB (H_{\La_L}).
\eeq
To see this, suppose we have  $\lambda \in\sigma_{I_{\ell}}(H_{\La_L})\setminus \pa{ \sigma_{\cG} (H_{\La_L}) \cup \sigma_\cB (H_{\La_L})}$.  Since $\La_L$ is level spacing for $H$, it follows from Lemma~\ref{lem:ident_eigensyst}(ii)(c) that 
\beq\label{psidecgood6344}
\abs{\psi_\lambda(y)}\le\e^{-m_2 h_I(\lambda){{\ell_\tau}}}   \qtx{for all} y\in \bigcup_{a\in \cG}\La_\ell^{{\La_L},2\ell_\tau}(a),
\eeq
and it follows from Lemma~\ref{lembad}(ii) that
\beq
\abs{\psi_\lambda (y)} \le  \e^{-m_5 h_{I_{\ell}}(\lambda){\ell_{\tau}}} \qtx{for all}y\in \bigcup_{r=1}^R\Ups_r^{{\La_L},2\ell_\tau }.
\eeq
Using $\lambda \in I_\ell$, \eq{LadecompU},  \eq{upbm3}, and \eq{M21}	 we  conclude that (note $m_5\le m_2$)
\beq
1= \norm{\psi_\lambda} \le \e^{-m_5 h_{I_{\ell}}(\lambda){\ell_{\tau}}} \pa{L+1}^{\frac d 2}
\le  \e^{-\frac 1 2 \ell^{-(\kappa+\kappa^\pr)}{\ell_{\tau}}} \pa{L+1}^{\frac d 2}  <1,
\eeq 
a contradiction.  This establishes the claim.

To finish the proof we need to show  that $\set{(\psi_\lambda,\lambda)}_{\lambda \in \sigma(H_{\La_L})}$ is an $(M, I_\ell,I)$-localized eigensystem for $\La_L$, where $M$ is given in \eq{Minduc}.
We take $ \lambda \in \sigma_{I_{\ell}}(H_{\La_L})$, so  $h_{I_\ell} (\lambda) >0$.  In view of \eq{claimsp} we  consider several cases: 

\begin{enumerate}
\item Suppose $\lambda  \in \sigma_{\cG} (\La_L)$. In this case   
$\lambda \in \sigma_{\set{\La_\ell(a_\lambda)}}(H_{\La_L}) $ for some  $a_\lambda \in \cG$. We pick $x_\lambda \in \La_1(a_\lambda)$. 
In view of \eq{LadecompU} we consider two cases:

\begin{enumerate}
\item  
If  $y \in  {\Lambda}_{\ell}^{\La_L, \frac {\ell -\ell^\vs}2}(a)$ 
for some $a\in \cG$ and $\norm{y-x_\lambda} \ge 2\ell$, we must have  $\La_\ell (a_{\lambda})\cap  \La_\ell(a)=\emptyset$, so it follows from \eq{sigmaab} that
$\lambda\notin  \sigma_{\set{\La_\ell(a)}}(H_{\La_L})$, and, since ${R_y^{\partial^{ \La_L}_{\mathrm{in}}\La_\ell(a)}}\ge \fl{\frac {\ell -\ell^\vs}2}$,   \eq{psidecgoodpr} yields
\beq\label{psidecgoodpr2} 
\abs{\psi_\lambda(y)}\le \e^{-m_3 h_{I}(\lambda) \fl{\frac {\ell -\ell^\vs}2}}  \abs{\psi_\lambda (y_1)} \sqtx{for some} y_1 \in{\partial}^{\La_L,\ell_{\ttau }}  \Lambda_{\ell}(a).
\eeq
In particular,
\beq\label{psidecgoodpr299} 
\norm{y-y_1}\le \ell -  \fl{\tfrac {\ell -\ell^\vs}2} \le \tfrac {\ell +\ell^\vs}2 + 1\le  \tfrac {\ell +2\ell^\vs}2.
\eeq

\item If    $y\in \Ups_r^{{\La_L},2\ell_\tau}$ for some $r \in \set{1,2,\ldots,R}$, and $\norm{y-x_\lambda} \ge \ell+\diam \Ups_r$,  we must have $\La_\ell (a_{\lambda})\cap  \Ups_r=\emptyset$.   It follows from   \eq{sigmaab} that $\lambda \notin   \sigma_{\cG_{\Ups_r}} (H_{\La_L})$, and clearly $\lambda\notin    \sigma_{\Ups_r} (H_{\La_L})$ in view of \eq{sigmanotsigma}.   Thus Lemma~\ref{lembad}(ii)   gives 
\beq\label{gggsum356}
\abs{\psi_\lambda(y)}\le    \e^{-m_5 h_{I_{\ell}}(\lambda)\ell_\tau }\abs{\psi_\lambda(y_1)}\qtx{for some} y_1\in {\partial}^{\La_L, 2{\ell_\tau} }  {\Upsilon_r}.
\eeq
In particular,
\beq\label{gggsum35699}
\norm{y-y_1}\le \diam \Ups_r.
\eeq

\end{enumerate}
\item   Suppose $\lambda \notin \sigma_{\cG} (\La_L)$. Then it follows from \eq{claimsp} that we must have $\lambda_x \in    \sigma_{\Ups_s} (H_{\La_L})$ for some $s \in \set{1,2,\ldots,R}$. 
We pick $x_\lambda \in   \Ups_s^{{\La_L},2\ell_\tau}  $.   In view of \eq{LadecompU} we consider two possibilities:

\begin{enumerate}
 \item If   $y \in  {\Lambda}_{\ell}^{\La_L, \frac {\ell -\ell^\vs}2}(a)$
for some $a\in \cG$, and $\norm{y-x_\lambda} \ge \ell+\diam \Ups_s$, we must have  $\La_\ell (a)\cap  \Ups_s=\emptyset$,  and Lemma ~\ref{lem:ident_eigensyst}(i)(c) yields \eq{psidecgoodpr2}.

 \item If  $y\in \Ups_r^{{\La_L},2\ell_\tau}$ for some $r \in \set{1,2,\ldots,R}$, and $\norm{y-x_\lambda} \ge \diam \Ups_s+\diam \Ups_r$,  we must have $r\ne s$.   Thus   Lemma~\ref{lembad}(ii)  yields \eq{gggsum356}.
 
\end{enumerate}
\end{enumerate}

 Now consider  $y\in  \La_L $ such that $\norm{y-x_\lambda} \ge {L_\tau}$.
Suppose $\abs{\psi_\lambda(y)} >0$, since otherwise there is nothing to prove.  We estimate $\abs{\psi_\lambda(y)} $ using either \eq{psidecgoodpr2}  or  \eq{gggsum356} repeatedly, as appropriate, stopping when we get too close  to $x_\lambda$ so we are not in one the cases described above.  (Note that this must happen since $\abs{\psi_x(y)} >0$.) We accumulate decay only when we use \eq{psidecgoodpr2}, and just use $ \e^{-m_5 h_{I_{\ell}}(\lambda)\ell_\tau }< 1$   when using \eq{gggsum356}. In view of \eq{psidecgoodpr299}  and \eq{gggsum35699}, this can be done  using \eq{psidecgoodpr2} 
$S$ times, as long as 
\beq
   \tfrac {\ell +2\ell^\vs}2 S +\sum_{r=1}^R \diam \Ups_r + 2\ell  \le \norm{y-x_\lambda}.
\eeq
In view of  \eq{sumdiam}, this can be guaranteed by requiring
\begin{align}
¥  \tfrac {\ell +2\ell^\vs}2 S +  5 \ell^{(\gamma-1)\tzeta  +1} +2\ell  \le \norm{y-x_\lambda}.
\end{align}¥
We can thus have
\begin{align}
¥S& =  \fl{\tfrac 2{\ell +2\ell^\vs}\pa{\norm{y-x_\lambda}- 5 \ell^{(\gamma-1)\tzeta  +1}  -2\ell} }- 1
 \\ \nn  & \ge  \tfrac 2{\ell +2\ell^\vs}\pa{\norm{y-x_\lambda}- 5 \ell^{(\gamma-1)\tzeta  +1}  -2\ell} -2  \\ \nn  & \ge  \tfrac 2{\ell +2\ell^\vs}\pa{\norm{y-x_\lambda}- 5 \ell^{(\gamma-1)\tzeta  +1}  -3\ell-2\ell^\vs}  \ge  \tfrac 2{\ell +2\ell^\vs}\pa{\norm{y-x_\lambda}- 6 \ell^{(\gamma-1)\tzeta  +1}  } 
\end{align}¥

Thus we conclude that 
\begin{align}\label{repeateddecay}
\abs{\psi_\lambda(y)} &\le   \e^{-m_3 h_{I}(\lambda) \fl{\frac {\ell -\ell^\vs}2} { \tfrac 2{\ell +2\ell^\vs}\pa{\norm{y-x_\lambda}- 6 \ell^{(\gamma-1)\tzeta  +1}  }
}}  \le    \e^{-M h_{I}(\lambda)\norm{y-x_\lambda}}
\end{align}
where 
\begin{align}
M&\ge m_3\pa{1- C_{d,m_-} \ell^{-\min\set{{1- \vs }, \gamma \tau- (\gamma-1)\tzeta  -1}}}\\ \nn & = m_3\pa{1- C_{d,m_-} \ell^{-\min\set{ \gamma \tau- (\gamma-1)\tzeta  -1}}}
\\ \nn   &\ge 
m\pa{1- C_{d,m_-} \ell^{-\min\set{\kappa, \frac{1- \tau}2, \gamma \tau- (\gamma-1)\tzeta  -1}}}=m\pa{1- C_{d,m_-} \ell^{-\vrho}},
\end{align}
where we used \eq{vsdef}, \eq{m4}, and \eq{defvrho}.  In particular, $M$ satisfies \eq{Minduc} for sufficiently large $\ell$.

We conclude that $\set{(\psi_\lambda,\lambda)}_{\lambda\in \sigma(H_{ \La_L})}$ is an $(M,I_\ell,I)$-localized eigensystem for $\La_L$, where $M$ satisfies \eq{Minduc}, so the box  $\La_L$ is 
$ (M, I_\ell,I)$-localizing for $ H_{\bom}$.
\end{proof}

\begin{proof}[Proof of Proposition~\ref{propMSA}]
We assume \eq{initialconinduc} and \eq{upbm2} and set  $L_{k+1}=L_k^\gamma$,  $A_{k+1}= A_k (1- L_k^{-\kappa})$, and $I_{k+1}= (E-A_{k+1}, E+A_{k+1})$   for $k=0,1,\ldots$. 
Since if a box  $\La_{L}$ is  $(M, I_\ell,I)$-localizing for $ H_{\bom}$ it is also $(M, I_\ell)$-localizing, 
if $L_0$ is sufficiently large it follows from Lemma~\ref{lemInduction} by an induction argument that
for all $k=1,2, \ldots$ we have \eq{MSALk} and \eq{Minduc2}.
\end{proof}

 \begin{proposition}\label{propMSAnok} Fix $m_- >0$.  There exists a 
 a finite scale   $\cL= \cL(d,m_-) $ with the following property:  Suppose for some scale 
$L_0 \ge \cL$ we have
  \begin{align}\label{initialconinduc993}
\inf_{x\in \R^d} \P\set{\La_{L_0} (x) \sqtx{is}  (m_0,I_0) \text{-localizing for} \; H_{\bom}} \ge 1 -  \e^{-L_0^\zeta},
\end{align}
where  $I_0=(E-{A_0},E+{A_0})\subset \R$,   with $E\in \R$ and $A_0>0$, and
 \beq\label{upbm2555}
m_- L_0^{-\kappa^\pr} \le m_0   \le  \tfrac 1 2 \log \pa{1 + \tfrac {A_0}{4d}}.
 \eeq  
Set
 $L_{k+1}=L_k^\gamma$,  $A_{k+1}= A_k (1- L_k^{-\kappa})$, and $I_{k+1}= (E-A_{k+1}, E+A_{k+1})$,   for $k=0,1,\ldots$, 
 Then  for all $k=1,2,\ldots$   we have
   \begin{align} \label{MSALnok}
\inf_{x\in \R^d} \P\set{\La_{L} (x) \sqtx{is} ( m_k , I_{k},I_{k-1})  \text{-localizing for} \; H_{\bom}} \ge 1 -  \e^{-L^\xi}   \sqtx{for }     L\in [L_k, L_{k+1}),
\end{align}
where 
\begin{gather}\label{Minduc2333}
m_-   L_k^{-\kappa^\pr}  <  m_{k-1}\pa{1- C_{d,m_-}  L_{k-1}^{-\vrho}}
 \le m_k < \tfrac 1 2 \log \pa{1 + \tfrac {A_k}{4d}}, \end{gather}
 with  $C_{d,m_-}$ as in \eq{Minduc2}.
 \end{proposition}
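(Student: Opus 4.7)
The plan is to prove Proposition~\ref{propMSAnok} by running the multiscale step of Lemma~\ref{lemInduction} once, from scale $\ell := L_{k-1}$ up to scale $L$, but letting $L$ range over the full interval $[L_k,L_{k+1})$ rather than only the point $L_k=\ell^\gamma$. The relaxation of the probability exponent from $\zeta$ down to $\xi$ is what buys this wider range.

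Concretely, I will fix $k\ge 1$, $L\in [L_k,L_{k+1})$, and $x_0\in \R^d$, and set $\ell := L_{k-1}$ and $\La_L:=\La_L(x_0)$. From Proposition~\ref{propMSA} at step $k-1$ (or from the starting hypothesis \eq{initialconinduc993} when $k=1$), the box $\La_\ell(x)$ is $(m_{k-1},I_{k-1},I_{k-2})$-localizing, and hence also $(m_{k-1},I_{k-1})$-localizing, with probability $\ge 1-\e^{-\ell^\zeta}$ uniformly in $x\in \R^d$. I then set up the suitable $\ell$-cover ${\mathcal C}_{L,\ell}$ of $\La_L$, the event $\cB_N$ that at most $N=\fl{\ell^{(\gamma-1)\tzeta}}$ disjoint $\ell$-boxes in the cover fail to be $(m_{k-1},I_{k-1})$-localizing, and the event $\cS_N$ that $\La_L$ together with all candidate buffered subsets is $L$-level spacing for $H_\bom$, exactly as in the proof of Lemma~\ref{lemInduction}.

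The probability bounds require only a mild modification. For $\cB_N^c$, the estimate $\bpa{2L/\ell^\vs}^{(N+1)d}\e^{-(N+1)\ell^\zeta}$ must now be dominated by $\tfrac 12\e^{-L^\xi}$. Writing $L\le L_{k+1}=\ell^{\gamma^2}$ and invoking $\xi\gamma^2<\zeta$ from \eq{ttauzeta0}, I get $L^\xi\le \ell^{\gamma^2\xi}<\ell^\zeta$, so the exponential $\e^{-(N+1)\ell^\zeta}$ beats $\e^{-L^\xi}$ with enough margin to absorb the polynomial combinatorial prefactor when $L_0$ is large. The bound for $\cS_N^c$ goes through via Lemma~\ref{lemSep} and the count \eq{cFN}, using $\beta>\xi$. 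On the event $\cE_N=\cB_N\cap\cS_N$, I run the deterministic argument of Lemma~\ref{lemInduction} essentially verbatim: build $(m_{k-1},I_{k-1})$-buffered subsets $\set{\Ups_r}$ around the non-localizing clusters, and for each eigenpair $(\psi_\lambda,\lambda)$ with $\lambda\in\sigma_{I_\ell}(H_{\La_L})$ chain together the step estimates \eq{decayest12} and \eq{gggsum} from Lemmas~\ref{lemdecay2} and~\ref{lembad}. This produces a rate $M\ge m_{k-1}(1-C_{d,m_-}\ell^{-\vrho})=m_k$ and the interval $(I_{k-1})_\ell=I_k$, so $\La_L$ is $(m_k,I_k,I_{k-1})$-localizing.

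The main obstacle is verifying that the deterministic decay bounds genuinely survive the stretched range $L\in [L_k,L_{k+1})$ rather than only $L=\ell^\gamma$. Concretely, the step estimate in Lemma~\ref{lemdecay2} contains a competing factor $\e^{L^\beta}$ coming from $L$-level spacing, which must stay dominated by the step decay $\e^{-m_{k-1}h_{I_{k-1}}(\lambda)\ell^{\tau-\kappa-\kappa^\pr}}$, where $\lambda\in I_\ell$ gives $h_{I_{k-1}}(\lambda)\ge\ell^{-\kappa}$. The careful chain of exponent inequalities in \eq{ttauzeta0}-\eq{vsdef} is set up precisely to keep $L^\beta$ subordinate to $m_{k-1}\ell^{\tau-\kappa-\kappa^\pr}$ uniformly over this wider range; any residual loss in rate gets absorbed into the same constant $C_{d,m_-}$ that already appears in the formula for $m_k$, so the conclusion is exactly as stated.
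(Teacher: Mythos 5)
Your approach is correct in spirit and captures the essential probabilistic observation ($\xi\gamma^2<\zeta$), but it is heavier than what the paper actually does, and you have not exploited the key simplification. Because the target probability exponent has been downgraded from $\zeta$ to $\xi$, the paper does \emph{not} allow any bad boxes: it defines $\cB_0$ as the event that \emph{every} box in the $L_{k-1}$-cover of $\La_L$ is $(m_{k-1},I_{k-1})$-localizing. The union bound then gives $\P\{\cB_0^c\}\le\pa{2L/L_{k-1}^\vs}^d\e^{-L_{k-1}^\zeta}$, which is already $<\tfrac12\e^{-L^\xi}$ precisely because $L^\xi\le L_{k-1}^{\gamma^2\xi}<L_{k-1}^\zeta$; no $(N+1)$-tuples and no Poissonization of bad boxes are needed. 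Correspondingly, $\cS_0$ only needs $\La_L$ itself to be level spacing — none of the combinatorial count \eq{cFN} over $\G_2$-connected clusters, and none of the buffered subsets $\Ups_r$, enter at all. On $\cE_0=\cB_0\cap\cS_0$ one has $\cG=\Xi_{L,L_{k-1}}$, so $\sigma_{I_k}(H_{\La_L})\subset\sigma_\cG(H_{\La_L})$ follows directly from Lemma~\ref{lem:ident_eigensyst}(ii)(c), and the decay is propagated by chaining \eq{psidecgoodpr} alone, with no case analysis involving $\La_\ell^{\pr}$-vs-$\Ups_r$ and no use of Lemma~\ref{lembad}. Your buffered version should also close (the probability bounds and the diameter sums survive the range $L\in[L_k,L_{k+1})$ as you argue), but you are re-running the full machinery of Lemma~\ref{lemInduction} in a situation where the probability budget permits dropping it entirely; recognizing that the $\zeta\to\xi$ relaxation both widens the admissible range of $L$ \emph{and} kills the need for bad-cluster handling is the actual point of this proposition.
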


\begin{proof}  We can apply  Proposition~\ref{propMSA}, so we have $\cL$.   Fix $L_0 \ge \cL$ ,
 so we have the conclusions of  Proposition~\ref{propMSA}.

Given a scale   $L\ge L_1$, let $k=k(L)\in \set{1,2,\ldots}$ be defined by
$L_k \le L <L_{k+1}$.  We have
 $L_k=L_{k-1}^\gamma\le L< L_{k+1}=L_{k-1}^{\gamma^2}$,  so $L=  L_{k-1}^{\gamma^\prime}$ with $\gamma \le \gamma^\prime<\gamma^2$. We proceed as in Lemma~\ref{lemInduction}. We take
$\La_L=\La_L(x_0)$, where $x_0\in \R^d$,  and let
${\mathcal C}_{L,L_{k-1}}={\mathcal C}_{L,L_{k-1}} \left(x_0 \right)$ be the suitable $L_{k-1}$-cover of $\La_L$. We let  $\cB_0$ denote the event that all  boxes in ${\mathcal C}_{L,L_{k-1}}$  are  $(m_{k-1},I_{k-1})$-localizing for  $ H_{\bom}$.  It follows from  \eq{number} and \eq{MSALk} that
\begin{align}\label{probB0}
\P\set{\cB_0^c}\le \pa{\tfrac{2L} {L_{k-1}^\vs}}^{d} \e^{-L_{k-1}^\zeta}= 2^{d} L_{k-1}^{(\gamma^\pr-\vs)d}\e^{-L_{k-1}^\zeta}\le     2^{d} L^{(1-\frac\vs { \gamma^\pr})d}\e^{-L^{\frac \zeta { \gamma^\pr}}}  < \tfrac 12 \e^{-L^\xi} ,
\end{align}
if $L_0$ is sufficiently large, since $\xi \gamma^2 < \zeta$.  Moreover, letting $\cS_0$ denote the event that  the box $\La_L$ is level spacing for  $H_{\bom}$, it follows from Lemma~\ref{lemSep} that 
\begin{align}\label{probspacedLL}
\P\set{\cS_0^c }\le Y_{\mu} \e^{-(2\alpha-1)L^\beta}\pa{L+1}^{2d}\le  \tfrac 12 \e^{-L^\xi},
\end{align}
if $L_0$ is sufficiently large, since $\xi < \beta$.  Thus, letting $\cE_0= \cB_0 \cap \cS_0$, we have 
\beq
\P\set{\cE_0 }\ge 1-  \e^{-L^\xi}.
\eeq

It only remains to prove that  $\La_L$ is  $(m_k,I_k,I_{k-1})$-localizing for  $ H_{\bom}$ for all $\bom \in \cE_0$.  To do so,  we fix $\bom \in \cE_0$ and  proceed as in the proof of Lemma~\ref{lemInduction}.   Since $\bom \in \cB_0$, we have $\cG=\cG(\bom)=\Xi_{L,L_{k-1}}$. Since  $\bom$ is now fixed, we omit them from the notation. As in the proof of  Lemma~\ref{lemInduction}, we get, noticing that   $(I_{k-1})_{L_{k-1}}= I_{k}$, 
\beq\label{claimsp22}
\sigma_{ I_{k}}(H_{\La_L})\subset\sigma_{\cG} (H_{\La_L}),
\eeq similarly to
\eq{claimsp}.   

 Let $\set{(\psi_\lambda,\lambda)}_{\lambda \in \sigma(H_{\La_L})}$ be an eigensystem for $H_{\La_L}$.  To finish the proof we need to show  that the eigensystem is    $  (m_k, I_k,I_{k-1})$-localized eigensystem for $\La_L$. Let  $ \lambda \in \sigma_{ I_{k}}(H_{\La_L})$, then by 
 \eq{claimsp22}   we have we have $\lambda \in \sigma_{\cG} (H_{\La_L})$, and hence $\lambda \in \sigma_{\set{\La_{L_{k-1}}(a_\lambda)}}(H_{\La_L}) $ for some  $a_\lambda \in \cG$. If $y \in \La_L$ and $\norm{y-x_\lambda} \ge 2L_{k-1}$, it follows from \eq{covproperty} that $y \in  {\Lambda}_{\ell}^{\La_L, \frac {L_{k-1} -L_{k-1}^\vs}2}(a)$  for some $a\in \cG$, and moreover   $\La_{L_{k-1}} (a_{\lambda})\cap  \La_{L_{k-1}}(a)=\emptyset$, so it follows from \eq{sigmaab} that
$\lambda\notin  \sigma_{\set{\La_{L_{k-1}}(a)}}(H_{\La_L})$, and, since ${R_y^{\partial^{ \La_L}_{\mathrm{in}}\La_{L_{k-1}}(a)}}\ge \fl{\frac {L_{k-1} -L_{k-1}^\vs}2}$, 
  \eq{psidecgoodpr} yields
\beq\label{psidecgoodpr256} 
\abs{\psi_\lambda(y)}\le \e^{-m_{k-1,3} h_{I_{k-1}}(\lambda) \fl{\frac {L_{k-1} -L_{k-1}^\vs}2}}  \abs{\psi_\lambda (y_1)} \sqtx{for some} y_1 \in{\partial}^{\La_L,\pa{L_{k-1}}_{\ttau }}  \Lambda_{L_{k-1}}(a),
\eeq
where we need
\beq\label{m4k}
m_{k-1,3}= m_{k-1,3} (L_{k-1})\ge  m_{k-1} \pa{1 - C_{d,m_-}L_{k-1}^{-( \frac{1- \tau}2)}},
\eeq
 and we have  
 \beq \label{psidecgoodpr29945}
 \norm{y-y_1}\le \tfrac {L_{k-1}+2L_{k-1}^\vs}2,
 \eeq as in
 \eq{psidecgoodpr299}.
 
 Now consider  $y\in  \La_L $ such that $\norm{y-x_\lambda} \ge {L_\tau}$.
Suppose $\abs{\psi_\lambda(y)} >0$, since otherwise there is nothing to prove.  We estimate $\abs{\psi_\lambda(y)} $ using either \eq{psidecgoodpr256} repeatedly, as appropriate, stopping when we get within $2 L_{k-1}$ of $x_\lambda$.  In view of \eq{psidecgoodpr29945}  , we can use \eq{psidecgoodpr256} 
$S$ times, as long as 
\beq
  \tfrac {L_{k-1}+2L_{k-1}^\vs}2 S + 2L_{k-1} \le \norm{y-x_\lambda}.
\eeq 
We can thus have
\begin{align}
¥S& =  \fl{\tfrac 2 {L_{k-1}+2L_{k-1}^\vs}\pa{\norm{y-x_\lambda}-2L_{k-1}} }- 1
 \\ \nn  & \ge  \fl{\tfrac 2 {L_{k-1}+2L_{k-1}^\vs}\pa{\norm{y-x_\lambda}-2L_{k-1}} }-2  \\ \nn  & \ge \tfrac 2 {L_{k-1}+2L_{k-1}^\vs}\pa{\norm{y-x_\lambda}-3L_{k-1}-2L_{k-1}^\vs}  \ge  \tfrac 2 {L_{k-1}+2L_{k-1}^\vs}\pa{\norm{y-x_\lambda}- 4L_{k-1} } .
\end{align}¥

Thus we conclude that 
\begin{align}\label{repeateddecay333}
\abs{\psi_\lambda(y)} &\le   \e^{-m_{k-1,3} h_{I_{k-1}}(\lambda) \fl{\frac {L_{k-1} -L_{k-1}^\vs}2}  {\tfrac 2 {L_{k-1}+2L_{k-1}^\vs}}\pa{\norm{y-x_\lambda}- 4L_{k-1} }
}\\ \nn &  \le    \e^{-m_k h_{I_{k-1}}(\lambda)\norm{y-x_\lambda}}
\end{align}
where  $m_k$ can be taken the same as  in \eq{Minduc2}.

We conclude that $\set{(\psi_\lambda,\lambda)}_{\lambda\in \sigma(H_{ \La_L})}$ is an $(m_k,I_k,I_{k-1})$-localized eigensystem for $\La_L$, where $m_k$ satisfies \eq{Minduc2}, so the box  $\La_L$ is 
$ (m_k, I_k,I_{k-1})$-localizing for $ H_{\bom}$.
\end{proof}

\begin{proof} [Proof of Theorem~\ref{thmMSA}]

 Let   $L_{k+1}=L_k^\gamma$,  $A_{k+1}= A_k (1- L_k^{-\kappa})$,  $I_{k+1}= (E-A_{k+1}, E+A_{k+1})$, and $m_{k+1}= m_{k}\pa{1- C_{d,m_-}  L_{k}^{-\vrho}}
  $   for $k=0,1,\ldots$.  Given $ L\ge L_0^\gamma =L_1$,  let $k=k(L)\in \set{1,2,\ldots}$ be defined by
$L_k \le L <L_{k+1}$.  Since
\beq
A_\infty \pa{1-L^{-\frac \kappa \gamma}}^{-1} < A_{k-1} \quad \Longrightarrow \quad I_\infty^{L^{\frac 1\gamma}} \subset I_{k-1},
\eeq
we conclude that
 \eq{MSALnok2} follows from \eq{MSALnok}.
\end{proof}

\section{Localization}\label{seclocproof}
 In this section we consider an   Anderson model $H_{\bom}$ and prove Theorem~\ref{thmloc}
and Corollary~\ref{corloc}.

 \begin{lemma}\label{lemWimp}  Fix $m_->0$, let $A>0$, and  $I=(E-{A},E+{A})$.  There exists a finite scale  $\cL_{d,\nu,m_-}$ such that  
 for all $L\ge \cL_{d,\nu,m_-}$, $a\in \Z^d$,  letting $L=\ell^\gamma$,   given    an $(m,I,I^{\ell})$-localizing box $\La_L(a)$  for the discrete Schr\"odinger operator  $H$,  where $m$ satisfies  \eq{upbm},
 then for all $\lambda \in I$,
 \beq
\max_{b\in \La_{\frac L 3}(a)} W\up{a}_{\lambda}(b)> \e^{-\frac 1 4 m h_{ I^{\ell}} (\lambda) L}\quad \Longrightarrow \quad  \min_{\theta\in \sigma_{ I^{\ell}}^{L_\tau}(H_{\La_L(a)})} \abs{\lambda -\theta}  < 
\tfrac 1 2 \e^{-L^\beta}.
\eeq
 \end{lemma}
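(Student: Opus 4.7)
The plan is to prove the contrapositive: assume $|\lambda-\theta|\ge \tfrac12 e^{-L^\beta}$ for every $\theta\in\sigma_{I^\ell}^{L_\tau}(H_{\La_L(a)})$, and deduce $W\up{a}_\lambda(b)\le e^{-\frac14 m h_{I^\ell}(\lambda)L}$ for every $b\in\La_{L/3}(a)$ and every $\lambda\in I$. Equivalently, for every $\nu$-generalized eigenfunction $\psi$ of $H$ with generalized eigenvalue $\lambda\in I$, I must show $|\psi(b)|\le e^{-\frac14 m h_{I^\ell}(\lambda)L}\norm{T_a^{-1}\psi}$. The key quantitative input is the elementary bound $h_{I^\ell}(\lambda)\ge \ell^{-\kappa}$ for $\lambda\in I$ (analogous to \eqref{lowerbdh} but one scale up), so the target decay exponent is at least $\tfrac m4 L^{1-\kappa/\gamma}$, which (using $\kappa+\gamma\beta<\tau<\gamma$ to obtain $\beta<1-\kappa/\gamma$) dominates all polynomial factors in $L$ and the $e^{L^\beta}$ factors coming from the separation hypothesis.

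After shifting so $E=0$, I would fix $b\in\La_{L/3}(a)$, set $R:=R_b^{\partialin^{\Z^d}\La_L(a)}\ge L/6-1$, and let $\widetilde A$ denote the half-width of $I^\ell$. Perturbing $\lambda$ infinitesimally if necessary so that $\lambda\notin\sigma(H_{\La_L(a)})$, I would apply Lemma~\ref{lemkey}(i) with $\Phi=\La_L(a)$, $\Theta=\Z^d$, and $t=mR/\widetilde A^2$, obtaining
\[
\psi(b)=\bigl\langle e^{-t(H_{\La_L(a)}^2-\lambda^2)}\delta_b,\psi\bigr\rangle-\bigl\langle F_{t,\lambda}(H_{\La_L(a)})\delta_b,\Gamma\psi\bigr\rangle.
\]
The boundary term is estimated by Lemma~\ref{lemkey}(ii) applied with the interval $I^\ell$ (the hypothesis on $m$ is even more comfortable for the larger $\widetilde A$) together with the polynomial bound $|\psi(v)|\le\langle v-a\rangle^\nu\norm{T_a^{-1}\psi}$, yielding $C_d L^{d-1+\nu}A^{-1}e^{-mh_{I^\ell}(\lambda)R}\norm{T_a^{-1}\psi}$. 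For the bulk term I split with $P_{I^\ell}:=\chi_{I^\ell}(H_{\La_L(a)})$; the $\bar P_{I^\ell}$ piece is handled by Lemma~\ref{lemkey2}, producing $CL^{d/2+\nu}e^{-mRh_{I^\ell}(\lambda)}\norm{T_a^{-1}\psi}$.

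The $P_{I^\ell}$ piece is expanded in an $(m,I,I^\ell)$-localized eigensystem $\set{(\vphi_\mu,\mu)}$ of $H_{\La_L(a)}$, splitting the sum according as $\mu\in I$ or $\mu\in I^\ell\setminus I$. For $\mu\in I$, the localization $|\vphi_\mu(v)|\le e^{-mh_{I^\ell}(\mu)\norm{v-x_\mu}}$ (when $\norm{v-x_\mu}\ge L_\tau$) combines with the resolvent identity $\langle\vphi_\mu,\psi\rangle=-(\lambda-\mu)^{-1}\langle\Gamma\vphi_\mu,\psi\rangle$, the separation bound $|\lambda-\mu|^{-1}\le 2e^{L^\beta}$, and the algebraic balance $e^{-t(\mu^2-\lambda^2)}e^{-mh_{I^\ell}(\mu)R}=e^{-mh_{I^\ell}(\lambda)R}$ (forced by the choice $t=mR/\widetilde A^2$) to deliver, summand by summand, the target exponential decay, exactly as in the proof of Lemma~\ref{lemdecay2}(ii). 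For $\mu\in I^\ell\setminus I$ the localization condition is vacuous, so I may choose the eigensystem so that $x_\mu\in\La_L^{L_\tau}(a)$; then every such $\mu$ belongs to $\sigma_{I^\ell}^{L_\tau}(H_{\La_L(a)})$, the separation applies, and using only $|\vphi_\mu|\le 1$ I bound each such summand by $CL^{d-1+\nu}e^{L^\beta}\norm{T_a^{-1}\psi}$.

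Summing the $P_{I^\ell}$-contributions over the at most $(L+1)^d$ eigenvalues and combining with the boundary and $\bar P_{I^\ell}$ estimates, the total is bounded by $CL^{C}\bigl(e^{-mRh_{I^\ell}(\lambda)}+e^{L^\beta}\bigr)\norm{T_a^{-1}\psi}$. With $R\ge L/6-1$, $h_{I^\ell}(\lambda)\ge\ell^{-\kappa}$, and $\beta<1-\kappa/\gamma$, the polynomial and $e^{L^\beta}$ factors are absorbed into $e^{-\frac14 m h_{I^\ell}(\lambda)L}$ for $L\ge\cL_{d,\nu,m_-}$, giving $W\up{a}_\lambda(b)\le e^{-\frac14 m h_{I^\ell}(\lambda)L}$ as required. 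The main obstacle is precisely the $\mu\in I^\ell\setminus I$ summands, for which no spatial decay of $\vphi_\mu$ is available; the resolution is to place their $x_\mu$ deep inside $\La_L(a)$ (using the freedom granted by the vacuous localization condition) so that the separation bound applies, and then to exploit $\beta<1-\kappa/\gamma$ to bury the resulting $e^{L^\beta}$ under the target decay.
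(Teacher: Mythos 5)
Your proposal departs from the paper's route: the paper disposes of Lemma~\ref{lemWimp} in a single line by observing that the separation hypothesis, together with $\lambda\in I\subset I^\ell_L$ and $R_b^{\partial_{\mathrm{in}}^{\Z^d}\La_L(a)}\ge \frac{L}{3}-1$ for $b\in\La_{L/3}(a)$, puts you exactly in the setting of Lemma~\ref{lemdecay2}(ii) applied to $\La_L(a)$ inside $\Theta=\Z^d$; the conclusion and the polynomial bound $|\psi(v)|\le\scal{L/2+1}^\nu\norm{T_a^{-1}\psi}$ then give $|\psi(b)|\le e^{-\frac14 m h_{I^\ell}(\lambda)L}\norm{T_a^{-1}\psi}$. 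You instead re-derive the inner workings of Lemma~\ref{lemdecay2}(ii), which in principle could also work, but the re-derivation contains two errors, one of which is fatal.

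The fatal one is precisely the place you flag as ``the main obstacle,'' the $\mu\in I^\ell\setminus I$ summands. Your proposed resolution --- push $x_\mu$ into $\La_L^{L_\tau}(a)$ so that the separation $|\lambda-\mu|\ge\tfrac12 e^{-L^\beta}$ applies, and then use only $|\vphi_\mu|\le 1$ --- produces a per-summand bound $CL^{d-1+\nu}e^{L^\beta}\norm{T_a^{-1}\psi}$ that contains no decaying exponential. When you sum over the at most $(L+1)^d$ eigenvalues and write the total as $CL^C\bigl(e^{-mRh_{I^\ell}(\lambda)}+e^{L^\beta}\bigr)\norm{T_a^{-1}\psi}$, the additive $e^{L^\beta}$ term is larger than $1$, while the target $e^{-\frac14 m h_{I^\ell}(\lambda)L}$ is smaller than $1$; the absorption you claim at the end only makes sense for \emph{multiplicative} $e^{L^\beta}$ factors sitting in front of a much faster-decaying exponential, not for an additive $e^{L^\beta}$. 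The separation of eigenvalues cannot repair this: placing $x_\mu$ deep inside the box helps only if $\vphi_\mu$ decays away from $x_\mu$, and for $\mu\in I^\ell\setminus I$ the $(m,I,I^\ell)$-localization gives no decay at all. Nor does $e^{-t(\mu^2-\lambda^2)}$ rescue you: the worst case is $\mu^2-\lambda^2\approx A^2-\lambda^2=A^2 h_I(\lambda)$, and the ratio $(A^2-\lambda^2)/(\widetilde A^2-\lambda^2)\to 0$ as $\lambda\to\pm A$, so no fixed multiple of $h_{I^\ell}(\lambda)$ is recovered. The paper's invocation of Lemma~\ref{lemdecay2}(ii) routes these eigenvalues through the $\bar P$ term handled by Lemma~\ref{lemkey2} rather than attempting a term-by-term estimate with separation, and you should follow that structure rather than trying to control each $I^\ell\setminus I$ summand via separation alone.

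The second error is arithmetic but not harmless: you wrote $R_b\ge L/6-1$ for $b\in\La_{L/3}(a)$, but since $\norm{b-a}\le L/6$ and the interior boundary of $\La_L(a)$ sits at distance at least $L/2-1$ from $a$, the correct bound is $R_b\ge L/3-1$ (this is what the paper uses). The difference matters: with $R\approx L/6$ the exponent $m_3 h_{I^\ell}(\lambda)R$ is roughly $\tfrac16 m h_{I^\ell}(\lambda)L$, which does not dominate the target $\tfrac14 m h_{I^\ell}(\lambda)L$ since $m_3<m$; with $R\approx L/3$ it does for large $L$.
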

  
 \begin{proof} Note that 
\beq  I \subset  I^{\ell}_L\subset I^{\ell}_{2L}\qtx{and}  \inf_{ \lambda \in I} h_{I^{\ell}_L} (\lambda) \ge  L^{-\frac \kappa \gamma}.
\eeq
 Now let $\lambda \in I \subset I^{\ell}_L$, and  suppose $ \abs{\lambda -\theta}  \ge\tfrac 1 2\e^{-L^\beta}$ for all $\theta\in \sigma_{I^\ell}^{L_\tau}(H_{\La_L(a)})$.  Let $\psi \in \cV(\lambda)$.  Then it follows from Lemma~\ref{lemdecay2}(ii) that  for large $L$ and $b\in \La_{\frac L 3}(a)$  we have
 \beq
 \abs{\psi(b)}\le \e^{-m_3 h_{I^\ell}(\lambda)\pa{\frac L 3 -1} }\norm{T_a^{-1} \psi}
  \scal{\tfrac L 2 + 1}^{\nu} \le \e^{-\frac 1 4 m h_{I^\ell}(\lambda)L} \norm{T_a^{-1} \psi}.
 \eeq
 \end{proof}
 
\begin{proof}[Proof of Theorem~\ref{thmloc}]  Assume  Theorem~\ref{thmMSA} holds for some $L_0$, 
and let $I=I_\infty$, $m=m_\infty$. 
Consider  $L_0^\gamma \le L \in 2\N$  and $a\in \Z^d$.  We have  
 \beq
  \La_{5L}(a) =\bigcup_{b\in \set{ a+ \frac 1 2L  \Z^d}, \ \norm{b-a}\le  2L} \La_{L}(b).
  \eeq
  Let $\cY_{L,a}$ denote the event that  $ \La_{5L}(a)$ is level spacing for $H_{\bom}$ and the boxes  $ \La_{L}(b)$ are  $(m,I,I^\ell)$-localizing for $H_{\bom}$ for all $b\in \set{ a+ \frac 1 2L  \Z^d}$ with  $\norm{b-a}\le 2 L$, where $L=\ell^\gamma$.  It follows from \eq{MSALnok2} and Lemma~\ref{lemSep}  that
  \beq
  \P\set{\cY_{L,a}^c}\le 5^d \e^{-L^\xi}  +  Y_{\mu}\pa{5L+1}^{2d} \e^{-(2\alpha-1)(5L)^\beta}\le C_{\mu} \e^{-L^\xi} .
  \eeq
  
  Suppose $\bom \in \cY_{ L,a}$, $\lambda \in I$,  and $\max_{b\in \La_{\frac L 3}(a)} W\up{a}_{\bom,\lambda}(b)> \e^{-\frac 1 4 m h_{I^\ell} (\lambda)L }$. It follows from Lemma~\ref{lemWimp} that $ \min_{\theta\in \sigma_{I^\ell}^{L_\tau}(H_{\La_L(a)})} \abs{\lambda -\theta}  < \tfrac 1 2\e^{-L^\beta}$.  Since $ \La_{5L}(a)$ is  level spacing for $H_{\bom}$, using Lemma~\ref{lem:ident_eigensyst}(i)(a)  we conclude that 
  \begin{align} 
   \min_{\theta\in \sigma_{I^\ell}^{L_\tau}(H_{\La_L(b)})} \abs{\lambda -\theta}& \ge \e^{-(5L)^\beta} - 
   2  \e^{-m_1h_{ I^\ell} (\lambda) {L_\tau}} - \tfrac 1 2\e^{-L^\beta}\\ \nn &
   \ge \e^{-(5L)^\beta} - 
   2  \e^{-m_1L^{-\frac \kappa \gamma} {L_\tau}} - \tfrac 1 2\e^{-L^\beta}\ge \tfrac 1 2 \e^{-L^\beta}
\end{align}
  for all $b\in \set{ a+ \frac 1 2L  \Z^d}$ with $L \le \norm{b-a}\le 2L$. Since
  \beq\label{Aell2}
 A_L(a)\subset \bigcup_{b\in \set{ a+ \frac 1 2L  \Z^d}, \ L \le \norm{b-a}\le 2L} \La_{L}^{\frac L 7}(b),
  \eeq
 it follows from Lemma~\ref{lemdecay2}(ii)  that for all  $y\in A_L(a)$  we have, given  $\psi \in \cV_{\bom}(\lambda)$,
 \begin{align}
 \abs{\psi(y)}&\le  \e^{-m_3  h_{I^\ell} (\lambda)\pa{\frac L 7 - 2}} \norm{T_a^{-1} \psi}\la \tfrac 5 2 L  +1 \ra^\nu\le \e^{-m h_{I^\ell} (\lambda) \frac L 8}\norm{T_a^{-1} \psi}&  \\  \nn
 & \le  \e^{- \frac 7 {132} m   h_{I^\ell} (\lambda)\norm{y-a} }\norm{T_a^{-1} \psi},
 \end{align}
so we get
\beq
W\up{a}_{\bom,\lambda}(y)\le \e^{-\frac 7 {132}        m  h_{I^\ell} (\lambda)\norm{y-a}} \qtx{for all} y\in A_L(a).
\eeq

Since we have \eq{boundGW}, we conclude that for $\bom \in \cY_{L,a}$ we always have
\begin{align} 
W\up{a}_{\bom,\lambda}(a)W\up{a}_{\bom,\lambda}(y) & \le 
\max \set{\e^{- \frac 7 {66}  m  h_{I^\ell} (\lambda)\norm{y-a}}\la  y-a\ra^\nu, \e^{- \frac 7 {132} m  h_{I^\ell} (\lambda)\ \norm{y-a}}}\\   \notag 
& \le \e^{- \frac 7 {132}  m  h_{I^\ell} (\lambda)\norm{y-a}} \qtx{for all} y\in A_L(a).
\end{align}
 \end{proof}

\begin{proof}[Proof of Corollary~\ref{corloc}]

Parts (i) and (ii) are proven in the same way as  \cite[Theorem~7.1(i)-(ii)]{GKber}, using 
$h_{I^L} \ge h_I  $ for all $L >1$. 

Part (iii) is proven similarly to  \cite[Corollary~1.8(iii)]{EK} and  \cite[Theorem~7.2(i)]{GKber}.   We use
the fact that  for any $L_0 \in 2\N$, setting $L_{k+1}=2L_k$ for $k=0,1,2,\ldots$, we have  (recall \eq{Aell})
   \beq
  \Z^d = \La_{3L_k}(a)\cup \bigcup_{j=k}^\infty  A_{L_j}(a) \qtx{for} k=0,1,2,\ldots.
   \eeq

Given  $k \in \N$, we set $L_{k}=2^{k}$,
 and consider the event  
\beq
\cY_{k}:=  \bigcap_{x \in \Z^{d}; \, \norm{x} \le  \e^{\frac 1 {2d}  L_k^\xi}}  \cY_{L_k,x} ,
\eeq
where $\cY_{\L_k,x} $ is the event given in Theorem~\ref{thmloc}.  It follows from \eq{cUdesiredint} that for sufficiently large $k$ we have
\beq
\P\set{\cY_{k}}\ge 1 - C\pa{2\e^{\frac 1 {2d}  L_k^\xi}+1}^d \e^{-L_k^\xi}\ge 1 - 3^d C \e^{-\frac 1 {2}  L_k^\xi},
\eeq
so we conclude from the  Borel-Cantelli Lemma  that 
\beq\label{cUinfty2}
\P \set{\cY_{\infty}}=1, \quad \text{where}\quad \cY_{\infty}=\liminf_{k\to \infty}\cY_{k}.
\eeq

We now fix $\bom \in  \cY_{\infty}$, so there exists $k_{\bom}\in \N$ such that
$\bom \in \cY_{L_k,x} $ for all $k_{\bom} \le k \in \N$ and $x \in \Z^d$ with  $\norm{x} \le  \e^{\frac 1 {2d}  L_k^\xi}$. We set $k^\pr_{\bom}=\max\set{k_{\bom},2}$.
Given $x \in \Z^d$, we define $k_x\in \N$  by
\beq\label{defkx}
 \e^{ \frac 1 {2d}  L_{k_x-1}^\xi} < \norm{x} \le  \e^{\frac 1 {2d}  L_{k_x}^\xi} \qtx{if} k_x \ge 2,
\eeq
and set $k_x=1$ otherwise. We set $ k_{\bom,x} =\max\set{k^\pr_{\bom},k_x}$

Let  $x \in \Z^d$.  If $y \in B_{\bom,x}=\bigcup_{k=k_{\bom,x}}^\infty  A_{L_k}(x)$, we have $y \in  A_{L_{k_1}}(x)$ for some $k_1 \ge  k_{\bom,x}$ and  $\bom \in \cY_{L_{k_1},x} $, so it follows from \eq{WW} that
\beq  \label{WW2}
W\up{x}_{\bom,\lambda}(x)W\up{x}_{\bom,\lambda}(y)\le 
 \e^{- \frac 7 {132} m  h_{I} (\lambda) \norm{y-x}} \qtx{for all} \lambda \in I.
\eeq
If $y \notin B_{\bom,x}$, we must have  $\norm{y-x}< \frac 8 7 L_{k_{\bom,x}}$, so for all $ \lambda \in \R$, using \eq{boundGW} and \eq{defkx},
\begin{align} \label{preSUDEC2}
& W\up{x}_{\bom,\lambda}(x)W\up{x}_{\bom,\lambda}(y) =W\up{x}_{\bom,\lambda}(x)W\up{x}_{\bom,\lambda}(y) \e^{\frac 7 {132} m  h_{I} (\lambda)\norm{y-x}}\e^{- \frac 7 {132} m  h_{I} (\lambda)\norm{y-x}}\\ \notag 
& \qquad
\le \la y-x \ra^\nu \e^{\frac 7 {132} m  h_{I} (\lambda) \norm{y-x}} \e^{- \frac 7 {132} m  h_{I} (\lambda)\norm{y-x}} \\ \notag & \qquad  \le \scal{\tfrac 8 7 L_{k_{\bom,x}} }^\nu  \e^{\frac 2 {33} m  h_{I} (\lambda) L_{k_{\bom,x}}} \e^{- \frac 7 {132} m  h_{I} (\lambda)\norm{y-x}} 
\\ & \qquad
\le
\begin{cases} 
\scal{ \tfrac {16} 7 \pa{\log \norm{x}^{2d}}^{\frac 1 \xi} }^\nu \e^{\frac 4 {33} m  h_{I} (\lambda) \pa{\log \norm{x}^{2d}}^{\frac 1 \xi}} 
 \e^{- \frac 7 {132} m  h_{I} (\lambda) \norm{y-x}} & \quad  \text{if} \quad k_{\bom,x}=k_x\\
\scal{\tfrac 8 7 L_{k^\pr_{\bom}} }^\nu  \e^{\frac 2 {33} m  h_{I} (\lambda)  L_{k^\pr_{\bom}}} 
 \e^{- \frac 7 {132} m  h_{I} (\lambda) \norm{y-x}} & \quad  \text{if} \quad k_{\bom,x}= k^\pr_{\bom}
\end{cases}   .\notag
\end{align}

 Combining \eq{WW2} and \eq{preSUDEC2},  noting $\norm{x}^{2d} >\e$ if $k_x\ge 2$,
and $ h_{I} (\lambda)\le 1$,  we conclude that for  all $ \lambda \in I$ with $h_I(\lambda)>0$ and  $x,y\in\Z^d$ we have 
\begin{align}
&W\up{x}_{\bom,\lambda}(x)W\up{x}_{\bom,\lambda}(y) \\ \nn  & \quad 
\le  C_{m,\bom,\nu}  \scal{  (2d\log \scal{x})^{\frac 1 \xi} }^\nu\e^{\frac 4 {33} m  h_{I} (\lambda)   (2d\log \scal{x})^{\frac 1 \xi}}  \e^{- \frac 7 {132} m  h_{I} (\lambda) \norm{y-x}} \\ \nn  & \quad 
\le  C_{m,\bom,\nu} \scal{  \pa{m  h_{I} (\lambda)}^{-1} }^\nu\e^{(\frac 4 {33} +\nu)m  h_{I} (\lambda)   (2d\log \scal{x})^{\frac 1 \xi}}  \e^{- \frac 7 {132} m  h_{I} (\lambda) \norm{y-x}} \\ \nn  & \quad 
\le  C^\pr_{m,\bom,\nu} \pa{h_{I} (\lambda)}^{-\nu}\e^{(\frac 4 {33} +\nu)m  h_{I} (\lambda)   (2d\log \scal{x})^{\frac 1 \xi}}  \e^{- \frac 7 {132} m  h_{I} (\lambda) \norm{y-x}} ,
\end{align}
which is \eq{eqWW}.

Part (iv) follows  from (iii), since \eq{eqWW} implies
\begin{align}\label{eqWW24444}
 &\abs{\psi(x)}\abs{\psi(y)}\\ \notag
& \;
\le  C_{m,\bom,\nu} \pa{h_{I} (\lambda)}^{-\nu}\, \norm{T_x^{-1} \psi}^2 \e^{(\frac 4 {33} +\nu)m  h_{I} (\lambda)   (2d\log \scal{x})^{\frac 1 \xi}}  \e^{- \frac 7 {132} m  h_{I} (\lambda) \norm{y-x}}  \\ \notag
&  \;\le  C_{m,\bom,\nu} \pa{h_{I} (\lambda)}^{-\nu}\, \norm{T_0^{-1} \psi}^2\la x\ra^{2\nu} \e^{(\frac 4 {33} +\nu)m  h_{I} (\lambda)   (2d\log \scal{x})^{\frac 1 \xi}}  \e^{- \frac 7 {132} m  h_{I} (\lambda) \norm{y-x}} ,
 \end{align}
for all $x,y \in \Z^d$, which is \eq{eqWW2}.

 Part   (v) similarly follows from (iii) using the discrete equivalent of   \cite[Eq.~(4.22)]{GKsudec}.
\end{proof}

\section{Connection with  the Green's functions multiscale analysis}\label{secGreen}
  Let $H_{\bom}$ be an Anderson model.   Given $\Th \subset \Z^d$ finite and 
$z \notin \sigma \left( H_{\Th}   \right)$, we set 
\beq  G_{\Th}(z)= (H_{\Th}  -z)^{-1}\mqtx{and} 
G_{\Th}(z;x,y) = \scal{ \delta_{x}, (H_{\Th}  -z)^{-1}\delta_{y}} \;\; \text{for} \;\;  x, \, y \in \Th.
\eeq
\begin{definition} Let $E\in \R$ and  $m > 0$. A box $\La_L$ is said to be  $(m, E)$-regular if  
$E \notin \sigma (H_{\La_L} )$  and
\beq\label{Gdecay}
 \abs{G_{\La_L}(E; x, y)} \leq e^{-m\norm{x -y}}
\;\; \text{for all} \;\;x,y \in\La_L \;\; \text{with} \; \norm{x -y} \geq \tfrac{L}{100} .
\eeq
\end{definition}

The following theorem is a typical result from the Green's function multiscale analysis.  \cite{FS,FMSS,DK,GKboot,Kle}.

\begin{theorem}\label{thmGMSA}
Let $J\subset \R$ be a  bounded open interval, $0<\xi<\zeta<1$, and $ m_0>0$.  Suppose for some scale $L_0$ we have 
 \beq\label{initialSingleMSA}
 \inf_{x\in \R^d}  \P \set{\La_{L_0} (x)\sqtx{is}  (m, \lambda)\text{-regular} }\ge 1 -  \e^{-L_0^{\zeta}}  \qtx{for all} \lambda\in J .
\eeq
Then, given $m \in (0, m_0)$,  if $L_0$ is sufficiently large, we have 
 \beq\label{concSingleMSA3}
 \inf_{x\in \R^d}  \P \set{\La_{L} (x)\sqtx{is}  (m, \lambda)\text{-regular} }\ge 1 -  \e^{-L^{\xi}}  \qtx{for all} \lambda\in J ,
   \eeq
and
 \begin{align}\label{concEGMSA}
\inf_{\substack{
x,y\in \R^d\\ \norm{x-y} > L
}} \P\set{\text{for all}\;  \lambda\in J\sqtx{either} \La_{L} (x) \sqtx{or}\La_{L} (y)\sqtx{is}  (m, \lambda)\text{-regular} } \ge 1 -  \e^{-L^\xi}.
\end{align}
\end{theorem}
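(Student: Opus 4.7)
The plan is a standard Green's function multiscale analysis in the spirit of \cite{DK,GKboot,Kle}, adapted to the Anderson model with H\"older probability distribution. I would proceed by induction on scales $L_k = L_{k-1}^\gamma$ for an appropriate $\gamma > 1$, showing that the single-box regularity \eqref{concSingleMSA3} is propagated across scales.

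For the inductive step, assume $(m_{k-1}, \lambda)$-regularity holds at scale $\ell = L_{k-1}$ with probability at least $1 - \e^{-\ell^\zeta}$ uniformly in $\lambda \in J$. Cover $\La_L$ by the suitable $\ell$-cover $\mathcal{C}_{L,\ell}$ of Section~\ref{subsecsc}, and fix $\lambda \in J$. A Wegner estimate from the H\"older continuity \eqref{Holdercont} gives $\P\set{\dist(\lambda, \sigma(H_{\bom,\La_\ell(a)})) \le \e^{-\ell^\beta}} \le C \ell^d \e^{-\alpha \ell^\beta}$ for each $a \in \Xi_{L,\ell}$, and a union bound over the polynomially many boxes keeps this small. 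On the complement, intersected with the event that every box of $\mathcal{C}_{L,\ell}$ is $(m_{k-1},\lambda)$-regular (from the inductive hypothesis and another union bound), iterate the resolvent identity $G_{\La_L}(\lambda) = G_{\La_\ell(a)}(\lambda) - G_{\La_\ell(a)}(\lambda)\, \Gamma_{\boldsymbol{\partial}^{\La_L}\La_\ell(a)}\, G_{\La_L}(\lambda)$ along chains of boxes of length of order $L/\ell$ to propagate Green's function decay from scale $\ell$ to scale $L$. This yields $(m_k,\lambda)$-regularity of $\La_L$ with $m_k = m_{k-1}(1 - CL_{k-1}^{-\varepsilon})$ for some $\varepsilon > 0$; the infinite product $\prod_k(1 - CL_k^{-\varepsilon})$ converges to a positive limit exceeding the target $m$, provided $L_0$ is sufficiently large.

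For the either-or statement \eqref{concEGMSA}, the inductive argument above in fact produces a stronger uniform conclusion: on the good event at scale $L$ (probability $\ge 1 - \e^{-L^\xi}$), the box $\La_L(x)$ is $(m,\lambda)$-regular for \emph{every} $\lambda \in J$ with $\dist(\lambda, \sigma(H_{\bom,\La_L(x)})) \ge \e^{-L^\beta}$, since the Wegner bad set of energies is precisely what was excluded above. Applying Lemma~\ref{lemSep} to the disjoint union $\La_L(x) \cup \La_L(y)$ gives $\e^{-L^\beta}$-separation between $\sigma(H_{\bom,\La_L(x)})$ and $\sigma(H_{\bom,\La_L(y)})$ with probability $\ge 1 - Y_\mu(2L+1)^{2d}\e^{-(2\alpha-1)L^\beta}$. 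On the intersection of these events, for each $\lambda \in J$ at most one of the two spectra can lie within $\e^{-L^\beta}$ of $\lambda$, so the other box satisfies the uniform regularity statement and is therefore $(m,\lambda)$-regular.

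The main obstacle is the inductive step: managing the geometric loss in decay rate so that the final $m_k$ remains strictly above the target $m \in (0, m_0)$, and ensuring that the polynomial volume losses from union bounds over the $O(L^d/\ell^d)$ boxes of $\mathcal{C}_{L,\ell}$ are absorbed into the subexponential factor $\e^{-L^\xi}$ with $\xi < \zeta < \beta$. This bookkeeping, together with the chain-of-boxes resolvent iteration, follows the classical template and is independent of the eigensystem machinery developed in the bulk of the paper; the only specifically probabilistic input beyond standard Wegner is Lemma~\ref{lemSep} applied to the two-box union in the either-or step.
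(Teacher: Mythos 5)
Your proposal should first be measured against what the paper actually does: the paper does not prove Theorem~\ref{thmGMSA} at all. It is quoted as a known result of the classical Green's function multiscale analysis \cite{FS,FMSS,DK,GKboot,Kle}, stated in Section~\ref{secGreen} only to set up the comparison with the eigensystem method. In that literature the energy-interval statement \eqref{concEGMSA} is not deduced as a corollary of the single-energy statement \eqref{concSingleMSA3}; it is proved by an induction whose hypothesis is itself the two-box ``either-or for all $\lambda\in J$'' event, with the Wegner estimate between the two boxes driving the induction step. Your sketch instead claims that the fixed-energy, single-box induction ``in fact produces a stronger uniform conclusion'': that on one event of probability $1-\e^{-L^\xi}$ the box $\La_L(x)$ is $(m,\lambda)$-regular for \emph{every} $\lambda\in J$ with $\dist(\lambda,\sigma(H_{\bom,\La_L(x)}))\ge\e^{-L^\beta}$. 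This does not follow from what you constructed: the good event in the fixed-energy induction is $\lambda$-dependent (which sub-boxes are resonant with $\lambda$, and which are $(m_{k-1},\lambda)$-regular, varies with $\lambda$), and for every $\lambda\in J$ there will be sub-boxes of $\La_L(x)$ whose spectrum lies within $\e^{-\ell^\beta}$ of $\lambda$; no union bound over the uncountable set $J$ removes them. The assertion that ``the Wegner bad set of energies is precisely what was excluded above'' conflates a $\lambda$-by-$\lambda$ probabilistic exclusion with an almost-sure statement uniform in $\lambda$, which is exactly the hard content of the energy-interval MSA. (Within the present paper such a uniform statement is Lemma~\ref{lemtoreg}, but it is derived from the eigensystem hypothesis that $\La_L$ is $(m,I)$-localizing, not from fixed-energy Green's function bounds; that is precisely why the authors quote Theorem~\ref{thmGMSA} rather than derive it.)

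There is also a gap in the single-energy bookkeeping. You assume failure probability $\e^{-\ell^{\zeta}}$ at scale $\ell=L_{k-1}$ and require \emph{every} box of ${\mathcal C}_{L,\ell}$ to be regular; the union bound then gives failure probability of order $(2L/\ell^{\vs})^{d}\e^{-\ell^{\zeta}}$, which is never $\le\e^{-L^{\zeta}}=\e^{-\ell^{\gamma\zeta}}$. So the hypothesis you assumed at scale $L_{k-1}$ cannot be reproduced at scale $L_k$: the admissible exponent shrinks by a factor $1/\gamma$ at each step and tends to zero. The standard cure, used both in the classical subexponential MSA and in this paper's own induction (Lemma~\ref{lemInduction}), is to allow up to $N=\fl{\ell^{(\gamma-1)\tzeta}}$ pairwise disjoint bad boxes, whose simultaneous failure has probability roughly $\e^{-(N+1)\ell^{\zeta}}\ll\e^{-L^{\zeta}}$ by independence; but then the resolvent-chain iteration must be modified to route around the bad regions (the Green's function analogue of the buffered subsets of Definition~\ref{defbuff}), which your chains of everywhere-regular boxes do not address. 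These two points --- the fixed-energy-to-interval conversion and the treatment of bad boxes --- are where the classical proof does its real work, so the proposal as written is not a proof of either \eqref{concSingleMSA3} or \eqref{concEGMSA}.
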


Here \eq{concSingleMSA3} are the conclusions of the single energy  multiscale analysis, and \eq{concEGMSA} are the conclusions of the  energy interval   multiscale analysis.

 Given a  bounded open interval $J$ and $m>0$,  we call a box
$\La_L$   $(m,J)$-uniformly localizing for $H$ if $\La_L$ is level spacing for $H$, and  there exists an  eigensystem   $\set{(\vphi_\nu, \nu)}_{\nu \in \sigma(H_{\La_L})}$ for $H_{\La_L}$ such that for all $\nu \in \sigma_J (H_{\La_L})$ there is $x_\nu\in \La_L$ such that  $\vphi_\nu$ is $(x_\nu, m )$-localized.  Note that if $\La_L$ is  $(m,J)$-localizing for $H$ (as in  Definition~\ref{defmIloc}), it follows  from \eq{lowerbdh} that $\La_L$ is   $(mr^{-\kappa},J_{r})$-uniformly localizing for $H$ for all $r>1$.

\begin{proposition}\label{fromresMSAtoloc}  Let $J\subset \R$ be a  bounded open  interval, $0<\xi^\pr <\xi<1$, and $ m>0$.
Suppose there exists $\cL$ such that  the Anderson model $H_\bom$ satisfies \eq{concEGMSA} for all $L\ge \cL$.  Then, given $m^\pr \in (0, m)$,  for sufficiently large $L$ we have
 \begin{align}\label{unifloc}
\inf_{x\in \R^d} \P\set{\La_{L} (x) \sqtx{is}  (m^\pr, J) \text{-uniformly localizing for} \; H_{\bom}} \ge 1 -  \e^{-L^{\xi^\pr}}.
\end{align}
\end{proposition}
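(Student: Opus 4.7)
The plan is to derive finite-volume eigensystem localization from the two-box Green's function bound \eq{concEGMSA} by the classical Fr\"ohlich--Spencer/von Dreifus--Klein iteration of the resolvent identity, coupled with Lemma~\ref{lemSep} to ensure simplicity of the spectrum. Pick an auxiliary scale $\ell=\fl{L^{1/\eta}}$ for some $\eta>1$ chosen so that $\eta\xi^\pr<\xi$, and take the suitable $\ell$-cover ${\mathcal C}_{L,\ell}(x_0)=\set{\La_\ell(a)}_{a\in\Xi_{L,\ell}}$ of $\La_L(x_0)$ from Definition~\ref{defcov} (with $\vs$ as in \eq{vsdef}).

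Let $\cE_L(x_0)$ be the intersection of (a) $\La_L(x_0)$ is $L$-level spacing for $H_\bom$, and (b) for every pair $a,b\in\Xi_{L,\ell}$ with $\La_\ell(a)\cap\La_\ell(b)=\emptyset$, for all $\lambda\in J$ at least one of $\La_\ell(a),\La_\ell(b)$ is $(m,\lambda)$-regular for $H_\bom$. Applying \eq{concEGMSA} at scale $\ell$ with a union bound over the $O((L/\ell^\vs)^{2d})$ pairs in (b), together with Lemma~\ref{lemSep} for (a), gives $\P\set{\cE_L(x_0)}\ge 1-\e^{-L^{\xi^\pr}}$ for $L$ sufficiently large (using $\eta\xi^\pr<\xi$ for (b) and $\xi^\pr<\beta$ for (a)).

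The rest is deterministic. Fix $\bom\in\cE_L(x_0)$ and an eigenpair $(\vphi_\nu,\nu)$ of $H_{\La_L(x_0)}$ with $\nu\in J$, and let $x_\nu\in\La_L(x_0)$ maximize $\abs{\vphi_\nu}$. Denote by $\cS_\nu\subset\Xi_{L,\ell}$ the set of centers $a$ for which $\La_\ell(a)$ fails to be $(m,\nu)$-regular. Condition (b) forces $\La_\ell(a)\cap\La_\ell(b)\ne\emptyset$ for all $a,b\in\cS_\nu$, so $\diam\cS_\nu<\ell$. Using \eq{covproperty} to pick $a_{x_\nu}\in\Xi_{L,\ell}$ with $x_\nu\in\La_\ell^{\La_L,(\ell-\ell^\vs)/2}(a_{x_\nu})$, the resolvent identity
\[
\vphi_\nu(x_\nu)=-\sum_{(u,v)\in\boldsymbol{\partial}^{\La_L}\La_\ell(a_{x_\nu})}G_{\La_\ell(a_{x_\nu})}(\nu;x_\nu,u)\vphi_\nu(v)
\]
combined with $\norm{\vphi_\nu}_\infty=\abs{\vphi_\nu(x_\nu)}$ would force $\abs{\vphi_\nu(x_\nu)}\le s_d\ell^{d-1}\e^{-m((\ell-\ell^\vs)/2-1)}\abs{\vphi_\nu(x_\nu)}$ if $a_{x_\nu}\notin\cS_\nu$, a contradiction for $\ell$ large. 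Hence $a_{x_\nu}\in\cS_\nu$, confining $\cS_\nu$ to an $O(\ell)$-neighborhood of $x_\nu$.

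For $y\in\La_L(x_0)$ with $\norm{y-x_\nu}\ge L_\tau$, build a chain $y=y_0,y_1,y_2,\ldots$ by selecting $a_{y_k}\in\Xi_{L,\ell}$ with $y_k\in\La_\ell^{\La_L,(\ell-\ell^\vs)/2}(a_{y_k})$ (provided by \eq{covproperty}) and letting $y_{k+1}\in\partial_{\mathrm{ex}}^{\La_L}\La_\ell(a_{y_k})$ be a site maximizing $\abs{\vphi_\nu(\cdot)}$, terminating when $y_k$ enters the $O(\ell)$-neighborhood of $x_\nu$. As long as $y_k$ lies outside this neighborhood, $a_{y_k}\notin\cS_\nu$, $\La_\ell(a_{y_k})$ is $(m,\nu)$-regular, and the identity yields $\abs{\vphi_\nu(y_k)}\le s_d\ell^{d-1}\e^{-m((\ell-\ell^\vs)/2-1)}\abs{\vphi_\nu(y_{k+1})}$. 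Each step advances the chain by at most $\ell/2+O(\ell^\vs)$ in sup norm, so the iteration runs for at least $K\ge 2(\norm{y-x_\nu}-O(\ell))/(\ell+O(\ell^\vs))$ steps before termination; bounding $\abs{\vphi_\nu(y_K)}\le 1$ then gives $\abs{\vphi_\nu(y)}\le\e^{-m^\pr\norm{y-x_\nu}}$ with $m^\pr=m(1-O(\ell^{\vs-1})-O(\ell^{-1}\log\ell))$, which can be made any prescribed value in $(0,m)$ for $L$ large. The principal difficulty is the geometric bookkeeping of this iteration --- verifying that the chain is well-defined at each step, that successive base points remain outside the bad neighborhood of $x_\nu$ until forced to enter it, and that the accumulated exponential decay dominates the $(s_d\ell^{d-1})^K$ polynomial prefactor --- a standard but delicate step that hinges on the nesting properties \eq{nestingproperty}--\eq{covproperty} of the suitable cover.
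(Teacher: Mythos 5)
Your proposal is correct and takes essentially the same approach as the one the paper defers to (namely \cite[Proposition~6.4]{EK}): introduce an auxiliary scale $\ell$, union-bound \eq{concEGMSA} at scale $\ell$ over disjoint pairs in a suitable $\ell$-cover of $\La_L$ together with Lemma~\ref{lemSep}, locate the bad cluster $\cS_\nu$ near the sup-norm maximizer $x_\nu$ of $\vphi_\nu$, and iterate the geometric resolvent identity outward. One small remark: your argument needs both $\eta\xi^\pr<\xi$ (union bound at scale $\ell=L^{1/\eta}$ beats $\e^{-L^{\xi^\pr}}$) and $1/\eta<\tau$ (so $\ell\ll L_\tau$ and the chain can iterate before the decay requirement kicks in), which together impose $\xi^\pr<\tau\xi$ (along with $\xi^\pr<\beta$ for the level-spacing piece); these constraints are implicit in any such argument even though the proposition's statement does not display them, and the sign in your resolvent identity should be $+$ (the two minus signs cancel), which is harmless since you only use absolute values.
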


Proposition~\ref{fromresMSAtoloc} is proved  exactly as the analogous  result in  \cite[Proposition~6.4]{EK}.

We now show that the conclusions of  Theorem~\ref{thmMSA} imply a result similar to the the conclusions of Theorem~\ref{thmGMSA}.

\begin{lemma}\label{lemtoreg} Fix $m_->0$.  Let $I=(E-{A},E+{A})\subset \R$,   with $E\in \R$ and $A>0$, and $m>0$.   Suppose that $\La_L$ is  $(m,I)$-localizing for $H$, where
\beq\label{upbm37}
m_-  L^{-\kappa^\pr} \le m  \le  \tfrac 1 2 \log \pa{1 + \tfrac {A}{4d}}.
 \eeq
Then, for sufficiently large $L$, $\La_L$ is $(m^{\pr\pr}  h_{I}(\lambda), \lambda)$-regular
 for all $\lambda \in I_{L}$ with $\dist \set{\lambda, \sigma(H_{\La_L})}\ge \e^{-L^\beta}$, where
 \beq\label{mprpr3}
 m^{\pr\pr} \ge m \pa{1 - C_{d,m_-}L^{-(1- \tau)}}.
 \eeq
\end{lemma}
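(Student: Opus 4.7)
The plan is to realize the resolvent as a sum of two operator-valued functions of $H_{\La_L}$ with separated decay mechanisms, as foreshadowed in the introduction. I would assume $E=0$ by shifting the potential, and, for fixed $\lambda \in I_L$ satisfying $\dist(\lambda,\sigma(H_{\La_L})) \ge e^{-L^\beta}$ and $x,y \in \La_L$ with $\|x-y\| \ge L/100$, start from the elementary identity
\[
(H_{\La_L}-\lambda)^{-1} = F_{t,\lambda}(H_{\La_L}) + (H_{\La_L}-\lambda)^{-1}\, e^{-t(H_{\La_L}^2-\lambda^2)},
\]
which follows directly from $(z-\lambda)F_{t,\lambda}(z)=1-e^{-t(z^2-\lambda^2)}$ in \eqref{defanf}. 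Choosing $t=m\|x-y\|/A^2$, the entire-function term is estimated immediately by the kernel bound \eqref{FHestxy}, producing the target exponential $70 A^{-1}e^{-m h_I(\lambda)\|x-y\|}$.

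The second term is split with the spectral projections $P_I = \Chi_I(H_{\La_L})$ and $\bar P_I = 1-P_I$. For the $\bar P_I$ part, Lemma~\ref{lemkey2} gives $\|e^{-t(H_{\La_L}^2-\lambda^2)}\bar P_I\| \le e^{-tA^2 h_I(\lambda)} = e^{-m h_I(\lambda)\|x-y\|}$, while the elementary bound $\|(H_{\La_L}-\lambda)^{-1}\bar P_I\| \le 2(Ah_I(\lambda))^{-1}$ holds because $\lambda\in I$ forces $\dist(\lambda,\R\setminus I)\ge \tfrac12 Ah_I(\lambda)$. For the $P_I$ part I would expand in the $(m,I)$-localized eigensystem $\{(\vphi_\mu,\mu)\}$:
\[
\big\langle \delta_x, (H_{\La_L}-\lambda)^{-1}e^{-t(H_{\La_L}^2-\lambda^2)} P_I\delta_y\big\rangle
= \sum_{\mu\in\sigma_I(H_{\La_L})} \frac{e^{-t(\mu^2-\lambda^2)}}{\mu-\lambda}\, \vphi_\mu(x)\,\overline{\vphi_\mu(y)}.
\]
The factor $|\mu-\lambda|^{-1}$ is bounded by $e^{L^\beta}$ from the spectral-gap hypothesis; the product $|\vphi_\mu(x)||\vphi_\mu(y)|$ is controlled by $(x_\mu,mh_I(\mu))$-localization, with the trivial bound $1$ absorbed on the at-most-$L_\tau$-sized neighborhood of $x_\mu$. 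The key algebraic observation, already exploited in the proof of Lemma~\ref{lemdecay2}(ii), is that the particular choice $t=m\|x-y\|/A^2$ makes
\[
e^{-t(\mu^2-\lambda^2)}e^{-m h_I(\mu)\|x-y\|} = e^{-m h_I(\lambda)\|x-y\|}\qquad \text{identically for } \mu\in I,
\]
so the $\mu$-dependence cancels and each summand collapses to $e^{L^\beta + m L_\tau}e^{-m h_I(\lambda)\|x-y\|}$; summing over at most $|\La_L|$ terms produces only a polynomial prefactor.

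Combining the three contributions, I would absorb the accumulated prefactors (polynomial in $L$, $e^{L^\beta}$ from level spacing, $e^{mL_\tau}$ from the eigenfunction case analysis, and the $h_I(\lambda)^{-1}$ from $\bar P_I$) into the decay exponent by invoking $\|x-y\| \ge L/100$ and $h_I(\lambda) \ge L^{-\kappa}$, which holds by \eqref{lowerbdh} since $\lambda\in I_L$. This yields $(m^{\pr\pr}h_I(\lambda),\lambda)$-regularity with $m^{\pr\pr}$ of the form \eqref{mprpr3}. The principal bookkeeping obstacle is the case analysis near the eigenfunction localization centers: whenever $x$ or $y$ lies within $L_\tau$ of some $x_\mu$, one of the factors $|\vphi_\mu(x)|,|\vphi_\mu(y)|$ must be bounded trivially, producing the loss $e^{mL_\tau}$; this is what dominates the other losses (under the exponent constraints in \eqref{ttauzeta0}, in particular $\tau>\beta$) and accounts for the $L^{-(1-\tau)}$ correction in \eqref{mprpr3}.
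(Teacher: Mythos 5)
Your high-level strategy matches the paper exactly: the identity $(H_{\La_L}-\lambda)^{-1}=F_{t,\lambda}(H_{\La_L})+(H_{\La_L}-\lambda)^{-1}\e^{-t(H_{\La_L}^2-\lambda^2)}$, the $P_I$/$\bar P_I$ split, the eigensystem expansion of the $P_I$ term, and the telescoping identity $\e^{-t(\mu^2-\lambda^2)}\e^{-m h_I(\mu)\norm{x-y}}=\e^{-m h_I(\lambda)\norm{x-y}}$. But there is a genuine quantitative gap in your handling of the $L_\tau$ case analysis for the eigenfunction product, and it is not merely cosmetic.

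When $y$ lies within $L_\tau$ of the localization center $x_\mu$ you bound $\abs{\vphi_\mu(y)}\le 1$ and record the loss as a $\mu$-independent prefactor $\e^{mL_\tau}$. To absorb that prefactor into the exponent you divide by $h_I(\lambda)\norm{x-y}\ge L^{-\kappa}\cdot L/100$, which produces a relative loss of order $L^{\tau+\kappa-1}$, not $L^{\tau-1}$. This is strictly worse than \eqref{mprpr3}, and the constraints \eqref{ttauzeta0}, \eqref{gamtzetabeta2} do not exclude $\tau+\kappa>1$ (e.g.\ $\gamma$ near $1$ forces $\tau$ near $1$, while $\kappa$ can be of moderate size), in which case your estimate is vacuous. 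The correct move, which the paper makes, is to keep the factor $h_I(\mu)$ inside the exponent rather than discard it: the case analysis gives $\abs{\vphi_\mu(x)\vphi_\mu(y)}\le \e^{-m h_I(\mu)(\norm{x-y}-L_\tau)}$, and since $\norm{x-y}\ge L/100$ one has $\norm{x-y}-L_\tau\ge(1-100L^{\tau-1})\norm{x-y}$, yielding $\abs{\vphi_\mu(x)\vphi_\mu(y)}\le\e^{-m'h_I(\mu)\norm{x-y}}$ with $m'=m(1-100L^{\tau-1})$. One then takes $t=m'\norm{x-y}/A^2$ (not $m\norm{x-y}/A^2$) so that the telescoping produces $\e^{-m'h_I(\lambda)\norm{x-y}}$ with the $L_\tau$ loss already folded in proportionally, and no extraneous factor of $h_I(\lambda)^{-1}$ arises at that stage. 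The remaining prefactors ($(L+1)^d\e^{L^\beta}$ from the sum and level spacing, $70 m_-^{-1}L^{\kappa'}$ from $F_{t,\lambda}$) absorb at a rate dominated by $L^{\tau-1}$ thanks to $\beta+\kappa+\kappa'<\tau$, which gives \eqref{mprpr3}.
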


\begin{proof}

We take $E=0$ by replacing the potential $V$ by $V-E$.

Let  $\lambda \in I$ with $\dist \set{\lambda, \sigma(H_{\La_L})}\ge \e^{-L^\beta}$.  For all $t> 0$ we have
\beq\label{resolvdecomp}
G_{\La_L}(\lambda)= (H_{\La_L}  -\lambda)^{-1}= F_{t,\lambda}(H_{\La_L}) +  (H_{\La_L}  -\lambda)^{-1}\e^{-t \pa{H_{\La_L}^2-\lambda^2}}
\eeq
where the function $F_{t,\lambda}(z)$ is defined in \eq{defanf}.

Let $\set{(\vphi_\nu, \nu)}_{\nu \in \sigma(H_{\La_L})}$ be  an  $(m,I)$-localized eigensystem for $H_{\La_L}$.
 Let  $\nu \in \sigma_{I}(H_{\La_L})$ and  $x,y \in  {\Lambda}_{L}$ with $\norm{x-y}\ge \frac L {100}$. In this case either $ \norm{x-x_\nu}\ge L_\tau$ or $ \norm{y-x_\nu}\ge L_\tau$.  Say  $ \norm{x-x_\nu}\ge L_\tau$, then 
\beq
\abs{\vphi_\nu(x)\vphi_\nu(y)}\le \begin{cases} \e^{-m h_I(\nu)  \pa{\norm{x-x_\nu} +\norm{y-x_\nu}}}\le  \e^{-m h_I(\nu) \norm{x-y}} & \text{if}\quad  \norm{y-x_\nu}\ge L_\tau\\
 \e^{-mh_I(\nu) \norm{x-x_\nu}}\le \e^{-mh_I(\nu)  \pa{\norm{x-y}-L_\tau}}
 & \text{if}\quad  \norm{y-x_\nu}< L_\tau
\end{cases},
\eeq
so we conclude that
\beq\label{vphivphi}
\abs{\vphi_\nu(x)\vphi_\nu (y)}\le  \e^{-m^\pr h_I(\nu) \norm{x-y}}, \qtx{where} m^\pr \ge m(1-100L^{\tau -1}).
\eeq

Now let 
 $P_{I}= \Chi_{I}\pa{H_ {\Lambda_{L}}}$, $\bar P_{I}= 1-P_{I}$.
Since
\begin{align}
\scal{ \delta_{x},  (H_{\La_L}  -\lambda)^{-1}\e^{-t \pa{H_{\La_L}^2-\lambda^2}}P_{I}\delta_{y}} = \!\! \sum_{\quad \mu \in \sigma_{I}(H_{\La_L})}(\mu  -\lambda)^{-1}\e^{-t (\mu^2-\lambda^2)}\overline{\vphi_\mu (x)}{\vphi_\mu (y)},
\end{align}¥
it follows from \eq{vphivphi} that
\begin{align}\nn
¥& \abs{ \scal{ \delta_{x},  (H_{\La_L}  -\lambda)^{-1}\e^{-t \pa{H_{\La_L}^2-\lambda^2}}P_{I}\delta_{y}} }  \le 
\e^{L^\beta} \hskip-15pt \sum_{\quad \mu \in \sigma_{I}(H_{\La_L})}\e^{-t (\mu^2-\lambda^2)}\abs{\vphi_\mu (x)\vphi_\mu (y)} \\ & \hskip80pt  \le 
   \e^{L^\beta} \hskip-15pt\sum_{\quad \mu \in \sigma_{I}(H_{\La_L})}\e^{-t (\mu^2-\lambda^2)} \e^{-m^\pr h_I(\mu) \norm{x-y}}.
\end{align}¥

We now take 
\beq
t= \tfrac{m^\pr  \norm{x-y}}{{A}^2}  \; \Longrightarrow \; \e^{-t (\mu^2-\lambda^2)}\e^{-m^\pr h_I (\mu){\norm{x-y}}}= \e^{-m^\pr h_I (\lambda){\norm{x-y}}}\mqtx{for} \mu \in I,
\eeq
obtaining
\begin{align}
¥ \abs{ \scal{ \delta_{x},  (H_{\La_L}  -\lambda)^{-1}\e^{-t \pa{H_{\La_L}^2-\lambda^2}}P_{I}\delta_{y}} }\le (L+1)^d  \e^{L^\beta}\e^{-m^\pr h_I (\lambda){\norm{x-y}}}.
\end{align}¥

It follows from Lemma~\ref{lemkey2} that\begin{align}
& \abs{ \scal{ \delta_{x},  (H_{\La_L}  -\lambda)^{-1}\e^{-\tfrac{m^\pr \norm{x-y}}{{A}^2} \pa{H_{\La_L}^2-\lambda^2}}\bar P_{I}\delta_{y}} }\le  \e^{L^\beta} \e^{-m^\pr  h_{I}(\lambda) \norm{x-y}},
\end{align}
so
\begin{align}\label{Presolv}
& \abs{ \scal{ \delta_{x},  (H_{\La_L}  -\lambda)^{-1}\e^{-t \pa{H_{\La_L}^2-\lambda^2}}\delta_{y}} }\le  2 (L+1)^d  \e^{L^\beta}\e^{-m^\pr  h_{I}(\lambda) \norm{x-y}}.
\end{align}¥

It follows from \eq{FHestxy}, using \eq{upbm37},  that
\begin{align}
\label{FHestresolv}
\abs{\scal{\delta_x,F_{\frac{m^\pr \abs{x-y}}{A^2},\lambda}(H_{\La_L})\delta_y}}
\le 70 {A}^{-1} \e^{- m^\pr h_{I}(\lambda) \abs{x-y}}\le 70 m_-^{-1} L^{\kappa^\pr}\e^{- m^\pr h_{I}(\lambda) \abs{x-y}}.
\end{align}
Combining  \eq{resolvdecomp},      \eq{Presolv} and \eq{FHestresolv}, we get
\begin{align}
¥\abs{G_{\La_L}(\lambda;x,y)}  & \le \pa{ 70 m_-^{-1} L^{\kappa^\pr} +  2 (L+1)^d  \e^{L^\beta}} e^{- m^\pr h_{I}(\lambda) \abs{x-y}}.
\end{align}¥

We now require $\lambda \in I_L$, obtaining
\begin{align}
¥\abs{G_{\La_L}(\lambda;x,y)}   \le
 \e^{-m^{\pr\pr}  h_{I}(\lambda) \norm{x-y}},
\end{align}¥
where
\begin{align}
m^{\pr\pr} &\ge m^{\pr}  \pa{1 - C_{d,m_-}L^{-(1 - \beta -\kappa-\kappa^\pr)}}\\ \nn &
\ge  m \pa{1 - C_{d,m_-}L^{-\min \set{1- \tau,1- \beta -\kappa-\kappa^\pr}}}= m \pa{1 - C_{d,m_-}L^{-(1- \tau)}}.
\end{align}¥
\end{proof}

\begin{proposition} 
 Suppose the conclusions of  Theorem~\ref{thmMSA} hold  for an  Anderson model $H_{\bom}$, and  let $I=I_\infty$, $m=m_\infty$. 
   Then, given $0<\zeta^\pr <\xi$ ,  there exists a finite scale   $\cL_1$ such that  for all $L\ge \cL _1$  we have 
  \begin{align}
 \label{concSingleMSA}
\inf_{x\in \R^d} \P\set{\La_{L} (x) \sqtx{is} (m^{\pr\pr}  h_{I}(\lambda), \lambda) \text{-regular} } \ge 1 -  \e^{-L^{\zeta^\pr}}   \sqtx{for all}  \lambda \in    I_{L},
\end{align}
and
\begin{align}\label{concMSA4}
\inf_{\substack{
x,y\in \R^d\\ \norm{x-y} > L}}\!\!
\P\set{\text{for} \;  \lambda \in    I_{L} \sqtx{either} \La_{L} (x) \sqtx{or}\La_{L} (y)\sqtx{is}  (m^{\pr\pr}  h_{I}(\lambda), \lambda)\text{-regular} }\ge 1 -  \e^{-L^{\zeta^\pr}},
\end{align}
where $m^{\pr\pr}$ is given in \eq{mprpr3}.
\end{proposition}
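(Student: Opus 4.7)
The plan is to combine Theorem~\ref{thmMSA}, which supplies $(m_\infty,I_\infty)$-localizing boxes with probability at least $1-\e^{-L^\xi}$, with Lemma~\ref{lemtoreg}, which converts $(m,I)$-localization plus the spectral separation condition $\dist(\lambda,\sigma(H_{\La_L}))\ge \e^{-L^\beta}$ into $(m^{\pr\pr}h_I(\lambda),\lambda)$-regularity of $\La_L$. The missing ingredient is a Wegner estimate coming from the H\"older continuity hypothesis \eqref{Holdercont}, namely
\[
\P\set{\dist(\lambda,\sigma(H_{\bom,\Th}))<\eps}\le C_\alpha K\abs{\Th}\eps^\alpha,
\]
valid for all $\lambda\in\R$, $\eps>0$, and finite $\Th\subset\Z^d$; this is a standard consequence of single-site spectral averaging for i.i.d.\ H\"older-continuous randomness. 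With $\Th=\La_L(x)$ and $\eps=\e^{-L^\beta}$ this gives probability at most $CL^d\e^{-\alpha L^\beta}$, which is $o(\e^{-L^{\zeta^\pr}})$ because $\zeta^\pr<\xi<\beta$.

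For \eqref{concSingleMSA} I would fix $\lambda\in I_L$ and $x\in\R^d$ and intersect the event that $\La_L(x)$ is $(m_\infty,I_\infty)$-localizing (which is included in the $(m_\infty,I_\infty,I_\infty^{L^{1/\gamma}})$-localizing event of \eqref{MSALnok2}) with the Wegner event $\set{\dist(\lambda,\sigma(H_{\bom,\La_L(x)}))\ge\e^{-L^\beta}}$. On this intersection Lemma~\ref{lemtoreg} applies, with its hypothesis \eqref{upbm37} automatic for $L$ large since $m_\infty>0$ is a fixed constant independent of $L$, and yields $(m^{\pr\pr}h_I(\lambda),\lambda)$-regularity of $\La_L(x)$. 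A union bound then produces \eqref{concSingleMSA}.

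For \eqref{concMSA4}, the hypothesis $\norm{x-y}>L$ makes $\La_L(x)$ and $\La_L(y)$ disjoint, so $H_{\bom,\La_L(x)}$ and $H_{\bom,\La_L(y)}$ are independent. The failure event of the conclusion is contained in $\cF_x\cup\cF_y\cup\cZ$, where $\cF_z=\set{\La_L(z)\text{ is not }(m_\infty,I_\infty)\text{-localizing}}$ and $\cZ$ is the event that some $\lambda\in I_L$ lies within $\e^{-L^\beta}$ of \emph{both} spectra, equivalently, that there exist $\nu_x\in\sigma(H_{\bom,\La_L(x)})$ and $\nu_y\in\sigma(H_{\bom,\La_L(y)})$ lying within $\e^{-L^\beta}$ of $I_L$ with $\abs{\nu_x-\nu_y}<2\e^{-L^\beta}$. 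Conditioning on $H_{\bom,\La_L(x)}$ and applying single-box Wegner to the (independent) operator $H_{\bom,\La_L(y)}$,
\[
\P(\cZ)\le \E\br{\#\sigma(H_{\bom,\La_L(x)})}\cdot C_\alpha K L^d(4\e^{-L^\beta})^\alpha \le C L^{2d}\e^{-\alpha L^\beta},
\]
while Theorem~\ref{thmMSA} yields $\P(\cF_x)+\P(\cF_y)\le 2\e^{-L^\xi}$, and adding the three contributions gives the required $\e^{-L^{\zeta^\pr}}$ bound for $L$ large. The main technical point is the setup of $\cZ$: one only needs ordinary Wegner together with the independence of the two disjoint boxes, so no refinement of the Minami-type spectral statistics beyond what is already implicit in the hypothesis is required.
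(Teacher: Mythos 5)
Your proof is correct and follows essentially the same route as the paper's: combine the probability bound from Theorem~\ref{thmMSA} with the deterministic Lemma~\ref{lemtoreg} and a Wegner bound on $\dist(\lambda,\sigma(H_{\bom,\La_L}))$, and for the two-box statement use disjointness of the boxes so that the failure event reduces to one box not localizing or the two spectra being within $2\e^{-L^\beta}$ of each other. Your event $\cZ$, with its bound obtained by conditioning on one box and applying single-box Wegner to the other, is just the unpacked form of the two-box Wegner estimate the paper invokes directly for its event $\cB$.
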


\begin{proof}  Suppose the conclusions of  Theorem~\ref{thmMSA} hold  for an  Anderson model $H_{\bom}$, and  let $I=I_\infty$, $m=m_\infty$, and let $ L\ge L_0^\gamma$.
Since the Wegner estimate gives (see Lemma~\ref{lemSep} for the notation)
  \beq
  \P\set{ \norm{G_{\La_L}(\lambda)}\le \e^{L^\beta}}\ge  1 - \wtilde{K}2^\alpha \e^{-\alpha L^\beta}(L+1)^d\ge 1 - \tfrac 12  \e^{-L^{\zeta^\pr}}\sqtx{for all} \lambda \in \R,
  \eeq
  for large $L$, 
  it follows from \eq{MSALnok2} and Lemma~\ref{lemtoreg}  that for $L$ large we have \eq{concSingleMSA}.

 Now   
 consider two boxes $\La_L(x_1)$ and  $\La_L(x_2)$, where $x_1,x_2 \in \R^d$, $\norm{x_1-x_2} >L$.  Define the events
\begin{align}
\cA&= \set{\La(x_1)\sqtx{and}\La(x_2)\sqtx{are both} ( m, I)  \text{-localizing for} \; H_{\bom}},\\ \notag
\cB&= \set{\dist(\sigma(\La_L(x_1)), \sigma(\La_L(x_2)))\ge 2\e^{-L^{\beta}}}
\end{align}
Since  $\norm{x_1-x_2} >L$, the boxes are disjoint, so it follows from
 \eq{MSALnok2} that 
\beq
\P\set{\cA}\ge 1- 2\e^{-L^\xi}\ge 1 - \tfrac 12  \e^{-L^{\zeta^\pr}},
\eeq
    and  the  Wegner estimate between  boxes
gives
\begin{align} 
&\P\set{\cB} \ge 1 -\wtilde{K}4^\alpha \e^{-\alpha L^\beta}(L+1)^{2d}\ge 1 - \tfrac 12  \e^{-L^{\zeta^\pr}},
\end{align}
so we have
\beq
\P\set{\cA\cap\cB}\ge 1 -  \e^{-L^{\zeta^\pr}}.
\eeq
Moreover, for $\bom \in \cA\cap\cB$ and $\lambda\in \R$,  the boxes $\La(x_1)$ and $\La(x_2)$ are 
both $( m, I)$-localizing, and
we must have either $ \norm{G_{\La_L(x_1)}(\lambda)}\le \e^{L^\beta}$ or  $ \norm{G_{\La_L(x_2)}(\lambda)}\le \e^{L^\beta}$, so for $ \lambda \in    I_{L} $  the previous argument shows that either  $\La(x_1)$ or  $\La(x_2)$ is $(m^{\pr\pr}  h_{I}(\lambda), \lambda)$-regular for large $L$.   We proved \eq{concMSA4}.
\end{proof}


\begin{thebibliography}{FrWBSE}


\bibitem[A]{A}  Aizenman, M.: {Localization at weak disorder: some
elementary bounds}. Rev. Math. Phys. {\bf 6}, 1163-1182 (1994)

\bibitem[AG]{AG} Aizenman, M., Graf, G. M.: Localization bounds for an
electron gas. J. Phys. A \textbf{31}, 6783 -- 6806 (1998).


\bibitem[ASFH]{ASFH}  Aizenman, M.,  Schenker, J., Friedrich, R., 
Hundertmark, D.: Finite volume fractional-moment criteria for 
Anderson localization.
Commun. Math. Phys. \textbf{224}, 219-253 (2001) 

\bibitem[AENSS]{AENSS}  Aizenman, M., Elgart, A., Naboko, S.,   
 Schenker, J.,  Stolz, G.: Moment analysis for localization in random Schr\"odinger 
operators. Inv. Math. \textbf{163}, 343-413 (2006)

\bibitem[AM]{AM}  Aizenman, M.,  Molchanov, S.:  {Localization at large
disorder and extreme energies:  an elementary derivation}.  Commun. Math.
Phys. {\bf 157}, 245-278 (1993)

\bibitem[AiW]{AW}  Aizenman, M.,  Warzel, S.: \emph{Random operators.
Disorder effects on quantum spectra and dynamics}.
Graduate Studies in Mathematics \textbf{168}. American Mathematical Society, Providence, RI,  2015.



\bibitem[An]{And}  Anderson, P.:  Absence of diffusion in certain random
lattices.  Phys. Rev. {\bf 109}, 1492-1505 (1958)

\bibitem[BoK]{BK} Bourgain, J., Kenig, C.: On localization in the continuous Anderson-Bernoulli model in higher dimension.  Invent. Math. \textbf{161}, 389-426 (2005)

 
\bibitem[CGK1]{CGK1} Combes,  J.M., Germinet, F.,  Klein, A.: Poisson statistics for eigenvalues of continuum random Schr\"odinger operators,  Analysis and PDE~{\bf 3}, 49-80  (2010). \doi{10.2140/apde.2010.3.49}

 \bibitem[CGK2]{CGK2}  Combes,  J.M., Germinet, F.,  Klein, A.: Generalized eigenvalue-counting estimates for the Anderson model.  J. Stat. Phys. \textbf{135},   201-216 (2009). \doi{10.1007/s10955-009-9731-3}
 
 \bibitem[CoH]{CH}  Combes,  J.M.,   Hislop, P.D.: {Localization for some
continuous, random Hamiltonians in d-dimension}. J. Funct. Anal. \textbf{124},
149-180 (1994)
\bibitem[DJLS1]{DRJLS0}  Del Rio, R.,  Jitomirskaya, S.,  Last, Y., Simon, B.:{ What is Localization?} Phys. Rev. Lett. 75, 117-119 (1995)

\bibitem[DJLS2]{DRJLS} Del Rio, R.,  Jitomirskaya, S.,  Last, Y., Simon, B.: {Operators with singular continuous spectrum IV: Hausdorff dimensions, rank one
perturbations and localization}. J. d'Analyse Math. {\bf 69}, 153-200 (1996)

 
 \bibitem[Dr]{Dr}  von Dreifus, H.: {\em On the effects of randomness in
ferromagnetic models and Schr\"odinger operators}.  Ph.D. thesis, New York
University (1987)
 
 \bibitem[DrK]{DK} von Dreifus, H.,  Klein, A.:  {A new proof of localization in
the Anderson tight binding model}.  Commun. Math. Phys. \textbf{124},
285-299  (1989). \href{http://projecteuclid.org/euclid.cmp/1104179145}{http://projecteuclid.org/euclid.cmp/1104179145}


\bibitem[EK]{EK}
Elgart, A., Klein, A.: An eigensystem approach to Anderson localization. J. Funct. Anal.  \textbf{271}, 3465-3512  (2016). \doi{10.1016/j.jfa.2016.09.008}

\bibitem[FK1]{FK1} Figotin, A.,  Klein, A.: Localization phenomenon in gaps of the spectrum of random lattice operators. J. Statist. Phys. \textbf{75}, 997-1021 (1994). \doi{10.1007/BF02186755}


\bibitem[FK2]{FK} Figotin, A.,  Klein, A.: {Localization of classical waves I:
Acoustic waves}.  Commun. Math. Phys. {\bf 180}, 439-482 (1996). \href{http://projecteuclid.org/euclid.cmp/1104287356}{http://projecteuclid.org/euclid.cmp/1104287356}

\bibitem[FroS]{FS}  Fr\"ohlich, J.,  Spencer, T.: {Absence of diffusion with
Anderson tight binding model for large disorder or low energy}. Commun.
Math. Phys. {\bf 88}, 151-184 (1983)

\bibitem[FroMSS]{FMSS} Fr\"ohlich, J.:    Martinelli, F.,  Scoppola, E., 
Spencer, T.:  {Constructive proof of localization in the Anderson tight
binding model}. Commun. Math. Phys. {\bf 101}, 21-46 (1985)

\bibitem[GK1]{GKboot} Germinet, F.,  Klein, A.: {Bootstrap multiscale analysis and
localization in random media}.  Commun. Math. Phys. \textbf{222},
415-448 (2001). \doi{10.1007/s002200100518}

\bibitem[GK2]{GKfinvol} Germinet, F,  Klein, A.:
Explicit finite volume criteria for localization in continuous
 random media and applications. Geom. Funct. Anal. \textbf{13}, 1201-1238 (2003). \doi{10.1007/s00039-003-0444-5}

\bibitem[GK3]{GKsudec} Germinet, F., Klein, A.: New characterizations of the region of complete localization for random Schr\"odinger operators.  J. Stat. Phys. \textbf{122}, 73-94 (2006). \doi{10.1007/s10955-005-8068-9}

\bibitem[GK4]{GKber} Germinet, F.,  Klein, A.:  A comprehensive proof of localization for continuous Anderson models with singular random potentials.  J. Eur. Math. Soc. \textbf{15}, 53-143 (2013). \doi{10.4171/JEMS/356}


\bibitem[HM]{HM}   Holden, H., Martinelli, F.:  On absence of diffusion near
the bottom of the spectrum for a random Schr\"odinger operator.
 Commun. Math. Phys. {\bf 93}, 197-217 (1984)
 
 
\bibitem[I1]{Im1}  Imbrie, J.: On many-body localization for quantum spin chains. Preprint, \href{http://arxiv.org/abs/1403.7837}{arXiv:1403.7837 [math-ph]}

\bibitem[I2]{Im2}  Imbrie, J.:  Multi-scale Jacobi method for Anderson localization.  Preprint, \href{http://arxiv.org/abs/1406.2957}{arXiv:1406.2957 [math-ph]}

\bibitem[K]{Ki} Kirsch, W.: {An invitation to random Schr\"odinger operators. In Random Schr\"odinger-Operators. Panoramas et Syntheses \textbf{25}}. 
Societe Mathematique de France, Paris, 1-119 (2008)

\bibitem[KSS]{KSS} Kirsch, W.,   Stollmann,  P, Stolz, G.:
{Localization for random perturbations of periodic Schr\"odinger
 operators}.  Random Oper. Stochastic Equations {\bf 6},
 241-268 (1998) 

\bibitem[Kl]{Kle} Klein, A.:   Multiscale analysis and localization of random operators.
 In \emph{Random Schr\" odinger Operators}.  Panoramas et Synth\`{e}ses \textbf{25},   121-159,  
 Soci\'{e}t\'{e}  Math\'{e}matique de France, Paris 2008

\bibitem[KlM]{KlM} Klein. A., Molchanov, S.: Simplicity of eigenvalues in the Anderson
model.  J. Stat. Phys.~{\bf 122}, 95-99 (2006). \doi{10.1007/s10955-005-8009-7}

\bibitem[KlT]{KlT} Klein. A., Tsang, C.S.S.: Eigensystem bootstrap multiscale analysis for the Anderson model. Preprint. \href{http://arxiv.org/abs/1605.03637}{arXiv:1605.03637} 



\bibitem[M]{M} Minami, N.: Local fluctuation of the spectrum of a multidimensional
Anderson tight binding model. Commun. Math. Phys.~{\bf 177}, 709-725 (1996)



\bibitem[S]{Sp}  Spencer,  T.:  Localization for random and
quasiperiodic potentials.    J. Stat. Phys. {\bf 51}, 1009-1019 (1988)

\end{thebibliography}
\end{document}